\DeclareMathOperator{\dimfunc}{dim}
\DeclareMathOperator{\codimfunc}{codim}
\DeclareMathOperator{\offdiag}{off-diag}
\newcommand{\declareperson}[1]{\expandafter\newcommand\csname#1\endcsname[1]{\textcolor{orange}{#1: ##1}}}
\newcommand{\edgecodim}{k}
\newcommand{\dimension}{d}
\newcommand{\boundMatrix}{M}
\newcommand{\dbound}{F}
\newcommand{\degree}{\Delta}
\newcommand{\maxDeg}{\Delta}
\newcommand{\extraColors}{\beta}
\newcommand{\linear}{\bar{A}}
\newcommand{\sqr}{S}
\def\eps{\epsilon}
\def\bone{{\bf 1}}
\DeclareMathOperator{\Ent}{Ent}
\DeclareMathOperator{\mix}{mix}
\DeclareMathOperator{\TV}{TV}
\title{A Matrix Trickle-Down Theorem on Simplicial Complexes\\ and Applications to Sampling Colorings}
\author{Dorna Abdolazimi\thanks{\href{mailto:dornaa@cs.washington.edu}{dornaa@cs.washington.edu}. Research supported by CCF-1907845 and AFOSR grant FA9550-20-1-0212.}}
\author{Kuikui Liu\thanks{\href{mailto:liukui17@cs.washington.edu}{liukui17@cs.washington.edu}. Research supported in part by NSF grants CCF-1552097, CCF-1907845.}}
\author{Shayan Oveis Gharan\thanks{\href{mailto:shayan@cs.washington.edu}{shayan@cs.washington.edu}. Research supported by NSF grants  CCF-1552097, CCF-1907845, AFOSR grant FA9550-20-1-0212 and a Sloan fellowship.}} 
\affil{University of Washington}
\begin{document}

\maketitle

\begin{abstract}
We show that the natural Glauber dynamics mixes rapidly and generates a random proper {\em edge}-coloring of a graph with maximum degree $\Delta$ whenever the number of colors is at least $q\geq \left(\frac{10}{3} + \eps\right)\Delta$, where $\eps>0$ is arbitrary and the maximum degree satisfies $\Delta \geq C$ for a constant $C = C(\epsilon)$ depending only on $\epsilon$. For edge-colorings, this improves upon prior work \cite{Vig99, CDMPP19}  which show rapid mixing when $q\geq \left(\frac{11}{3}-\eps_{0}\right) \Delta$, where $\epsilon_{0} \approx 10^{-5}$ is a small fixed constant. 
At the heart of our proof, we establish a matrix trickle-down theorem, generalizing Oppenheim's influential result, as a new technique to prove that a high dimensional simplicial complex is a local spectral expander.
\end{abstract}

\section{Introduction}
Given an (undirected) graph $G=(V,E)$ with $n=|V|$ vertices and with maximum degree $\Delta\geq 1$ can we generate a uniformly random proper coloring of vertices of $G$ using $q$ colors?
For $q\leq \Delta$, there is no efficient algorithm (in the sense of an $\mathsf{FPRAS}$) to approximately count proper $q$-colorings (at least when $q$ is even) unless $\mathsf{NP} = \mathsf{RP}$, even for $\Delta$-regular graphs which are triangle-free \cite{GSV15}. This fundamental question in the field of counting and sampling has puzzled researchers for decades. One can study a natural Markov Chain (MC), known as the Glauber dynamics, to generate a random proper coloring: Given a proper vertex coloring of $G$, choose a uniformly random vertex $v\in G$ and re-color $v$, namely choose a uniformly random color which is not present in any of the neighbors of $v$.

It is not hard to see that for $q\geq \Delta+2$ this chain is irreducible and has unique stationary distribution which is uniform over all proper colorings of $G$. It is conjectured that the Glauber dynamics mixes in time $O(n\log n)$ for $q$ as low as $\Delta+2$ but despite significant attempts we are still very far from proving this conjecture.

To this date, the best known result for general graphs is due Chen, Delcourt, Moitra, Perarnau and Postle \cite{CDMPP19} who show that the Glauber dynamics mixes in polynomial time for $q\geq (11/6-\eps)\Delta$ for some universal constant $\eps>0$; this slightly improves on the classical works of Jerrum and Vigoda \cite{Jer95, Vig99} which bounds the mixing time by a polynomial in $n$ for $q\geq (11/6)\Delta$.

Most of the recent analyses of the Glauber dynamics  are focused on ``locally sparse'' graphs \cite{HV03,Mol04,HV05,FV06,FV07,HVV07,DFHV13, CGSV21,FGYZ21} where it was typically shown how to break the $11/6$ barrier bound when the underlying graph  has a large girth.  We note that these assumptions are typically very strong as it can be seen that triangle graph free graphs can be colored with as little as $O(\Delta/\log\Delta)$ many colors \cite{John96}.

The results on locally sparse graphs typically exploit (strong) correlation decay properties: Roughly speaking, they imply that if we color a vertex $v$ with a color $c$, then the marginal probability of  coloring a ``far away'' vertex $u$ with a color $c'$ does not change, or changes very mildly. Although it is conjectured that vertex coloring exhibits correlation decay, more formally known as ``strong spatial mixing'' property, for $q\geq \Delta+O(1)$, to this date, we are lacking techniques to establish such a statement (see e.g., \cite{GMP05,Y14,GKM15,EGHSV19}).

In this paper, we study random proper edge coloring of graphs, which equivalently can be seen as a random proper vertex coloring of line graphs. Unlike most recent trends which focus on sparse graphs with large girths, line graphs are very dense locally as they contain induced cliques of size $\Omega(\Delta)$. To the best of our knowledge, the only previous result on edge coloring which goes significantly beyond the $11/6$ barrier is recent work of Delcourt, Heinrich and Perarnau \cite{DHP20} which shows that the Glauber dynamics mixes rapidly when the underlying graph is a tree and $q\geq  \Delta+1$. Note that for a graph with maximum degree $\Delta$, the maximum degree of the line graph could be as large as $2\Delta$. Therefore, with the $11/6$ barrier, one would need  $q \geq 11/3 \Delta$ to guarantee polynomial mixing for all edge coloring instances. In our main theorem we prove that this barrier can be broken for edge coloring of any graph with maximum degree $\Delta$.

\begin{theorem}[Main]\label{thm:glauberedge}
Let $G=(V,E)$ be a graph of maximum degree $\Delta$. For any $0<\epsilon\leq \frac{1}{10}$  such that
$\frac{\ln^{2}\Delta}{\Delta} \leq \frac{\epsilon^3}{15}$, and any collection of color lists $L = \{L(e)\}_{e \in E}$ satisfying $|L(e)| \geq \Delta(e)+(4/3+4\epsilon)\Delta$ where $\Delta(e)$ is the number of neighbors of $e$ in the line graph of $G$, the spectral gap of  the  Glauber dyanmics for sampling proper $L$-edge-list-colorings on $G$ is $\Omega(n^{-O(1/\eps)})$, so the mixing time is $O(n^{O({1/\eps})})$.
Furthermore, if $ \Delta \leq O(1)$, the modified and standard log-Sobolev constants are $\Omega_{\epsilon,\Delta}(1/n)$. 
\end{theorem}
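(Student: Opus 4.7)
My plan is to realize the space of proper $L$-edge-colorings as the top faces of a simplicial complex, then use the matrix trickle-down theorem advertised in the abstract to certify that this complex is a local spectral expander, and finally invoke the local-to-global theorem to translate this into a spectral gap bound for the Glauber dynamics. Concretely, I would take the ground set to be all edge--color pairs $(e,c)$ with $c \in L(e)$, and declare a face to be a partial proper $L$-edge-coloring (equivalently, a partial proper $L$-list-coloring of the line graph); the top faces are exactly the proper colorings, and the weighted down-up walk on top faces coincides, up to a constant loss in the spectral gap, with the edge-coloring Glauber dynamics. Thus the theorem reduces to proving a $\Omega(1)$ local-spectral-expansion bound for this complex.

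Proving local spectral expansion directly at every link is where scalar techniques break down at the $11/3$ threshold, because the scalar spectral radius of the links of near-complete partial colorings is already too close to $1$ to trickle through Oppenheim's inequality in the $(10/3+\epsilon)\Delta$ regime. This is exactly what the matrix trickle-down theorem is built for: it takes a matrix-valued spectral condition on the two-dimensional links (or on links of near-maximal faces in my setup) and propagates it upward through the complex, retaining structural information that the scalar bound throws away. I would therefore use the matrix trickle-down theorem to reduce the global local-spectral-expansion statement to a base-case matrix bound on links of near-complete partial colorings, where only a handful of edges of $G$ remain uncolored and each has at least $(4/3+4\epsilon)\Delta$ free colors on its remaining list.

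The base case is where I expect the real work, and the main obstacle, to live. The natural object is a matrix indexed by remaining $(e,c)$ pairs whose entries encode conditional marginals and pairwise influences between assignments in the link; I would set it up so that its diagonal reflects the $(4/3+4\epsilon)\Delta$ slack and its off-diagonal entries encode the cancellations coming from the special structure of the line graph---namely, the neighborhood of any edge $e$ splits into two cliques sharing the endpoints of $e$, so influences within each clique are strongly correlated and partially cancel in the matrix trickle-down norm in a way that is completely invisible to the scalar norm. This is exactly the place where the $11/3$ barrier is broken and the $10/3$ threshold emerges; the hypothesis $\ln^2\Delta/\Delta \leq \epsilon^3/15$ will enter here to absorb lower-order errors in the concentration estimates that control the off-diagonal block.

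Once the base-case matrix bound is in place, the matrix trickle-down theorem furnishes local spectral expansion with parameter $O(\epsilon)$ at every link, the local-to-global theorem converts this into a spectral gap $\Omega(n^{-O(1/\epsilon)})$ for the top-level down-up walk and hence for Glauber, and standard bookkeeping yields the mixing-time bound $O(n^{O(1/\epsilon)})$. For the final sentence, when $\Delta$ is bounded the complex has bounded local structure, so the now-standard machinery converting spectral independence into entropy contraction (via approximate tensorization of entropy) upgrades the spectral gap into the claimed $\Omega_{\epsilon,\Delta}(1/n)$ bounds for the modified and standard log-Sobolev constants, completing the proof.
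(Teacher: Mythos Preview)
Your high-level architecture matches the paper: build the edge-coloring complex, apply matrix trickle-down to obtain local spectral expansion, then invoke the local-to-global theorems (\cref{thm:localtoglobal} and \cref{thm:boundeddegspinlocaltoglobal}). You also correctly identify the line-graph clique decomposition as the structural ingredient that beats $11/3$. But two genuine gaps remain.

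First, you place the main obstacle in the base case (co-dimension $2$ links, where only two edges remain uncolored). In the paper that case is a short direct computation (\cref{prop:basecase_main}) valid for any $\beta$-extra-color instance, using no line-graph structure whatsoever. The real work, and the only place your clique decomposition actually enters, is the \emph{recursive} condition of \cref{thm:matrixtrickledowninductive}: one must construct a matrix $M_\tau$ for \emph{every} face $\tau$ and verify $\E_{x\sim\pi_\tau}M_{\tau\cup x}\preceq M_\tau-\tfrac{k-1}{k-2}M_\tau\Pi_\tau^{-1}M_\tau$ at every co-dimension $k\ge 3$. The paper does this by taking $M_\tau=\Pi_\tau\tfrac{F_\tau}{k-1}+\sqrt{\Pi_\tau}\,\tfrac{A_\tau}{k-1}\,\sqrt{\Pi_\tau}$ with a recursively defined hollow part $A_\tau$ (\cref{defA}), a two-regime induction on $\Delta_\tau(v)$ bounding its entries (\cref{prop:A_bound}---this is where the constant $4/3$ actually appears), and the line-graph inequality $A^2\preceq 2\sum_v (A^v)^2$ (\cref{decompose_A2}) to control the quadratic term $M_\tau\Pi_\tau^{-1}M_\tau$. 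Matrix trickle-down is not a black box that reduces everything to a base-case check; designing and analyzing the entire family $\{M_\tau\}$ is the heart of the proof.

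Second, ``local spectral expansion with parameter $O(\epsilon)$ at every link'' is wrong and, taken literally, would give only an exponentially small spectral gap via \cref{thm:localtoglobal}. What the paper proves (\cref{thm:edge-coloring}) is $\lambda_2(P_\tau)\le\frac{\epsilon+1/\epsilon}{k-1}$ for $\tau$ of co-dimension $k$; the $1/k$ decay is essential, and it is the $O(1/\epsilon)$ numerator that produces $\prod_k(1-\gamma_k)=n^{-O(1/\epsilon)}$. Relatedly, there are no ``concentration estimates'' anywhere: the argument is entirely deterministic matrix algebra, and the hypothesis $\ln^2\Delta/\Delta\le\epsilon^3/15$ is used to bound the explicit diagonal correction $F_\tau$ (see the proof of \cref{thm:edge-coloring}), not to absorb probabilistic error.
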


We remark that our general mixing time bound has no dependence on $\Delta$ or $q$. So, the algorithm runs in polynomial time even for graphs of unbounded degree.

In a second contribution we show that for any list vertex coloring instance where $G$ is a tree with max degree $\Delta$, and the size of the list of every vertex $v$ is at least $\Delta(v)+\eps \Delta$ for $\eps=\Omega(\frac{\ln\Delta}{\sqrt{\Delta}})$ the Glauber dyanics mixes rapidly and generates a uniformly random vertex coloring of $G$.
\begin{theorem}\label{thm:glaubertrees}
Let $G=(V,E)$ be a tree of maximum degree $\Delta$. For any  $0<\epsilon \leq  1$ such that  $\frac{\ln^{2}\Delta}{\Delta} \leq \frac{\epsilon^{2}}{100}$ and  any collection of color lists $L = \{L(v)\}_{v \in V}$ satisfying $|L(v)| \geq \Delta(v) + \epsilon \Delta$, the spectral gap  of the Glauber dynamics for sampling proper $L$-vertex-list-colorings on $G$ is $\Omega(n^{-O(1/\eps)})$, therefore the mixing time is $O(n^{O(1/\eps)})$. Furthermore, if $ \Delta \leq  O(1)$, the modified and standard log-Sobolev constants are $\Omega_{\epsilon,\Delta}(1/n)$. 
\end{theorem}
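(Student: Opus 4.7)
The plan is to instantiate the matrix trickle-down theorem of the paper on the simplicial complex of partial proper list-colorings of the tree $G$, and then invoke the local-to-global spectral theorem (Alev--Lau) to convert local spectral expansion into the claimed mixing-time bound for Glauber dynamics. This parallels the strategy used for \Cref{thm:glauberedge}, but exploits tree structure in place of the line-graph structure.

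First I would set up $X$ to be the simplicial complex whose faces are partial proper $L$-list-colorings of $G$, viewed as sets of (vertex, color) pairs. The top faces of $X$ are the full proper $L$-list-colorings of $G$, with uniform weight, and the down-up walk on top faces coincides with the Glauber dynamics under consideration. By the local-to-global theorem, it suffices to produce a sequence of second-eigenvalue upper bounds $\alpha_1,\alpha_2,\dots$ for the weighted 1-skeleton walk in each link, satisfying $\prod_{k}(1-\alpha_k)\geq n^{-O(1/\epsilon)}$. The matrix trickle-down theorem reduces this to verifying a single positive semidefinite matrix inequality at every codimension-two link, i.e., at every partial proper coloring leaving exactly two vertices uncolored.

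The heart of the proof is establishing this matrix inequality using the tree structure. After conditioning on a partial proper coloring $\sigma$, the residual graph $G_\sigma$ is still a forest, with updated lists $L_\sigma(v) = L(v)\setminus\sigma(N(v))$ and preserved slack $|L_\sigma(v)| - \Delta_{G_\sigma}(v) \geq \epsilon\Delta$. For a codimension-two link in which two vertices $u,w$ remain unassigned, their joint conditional distribution is determined by the unique $u$-$w$ path in $G_\sigma$ (if they lie in the same component) or factorizes completely (otherwise). Using a telescoping product along this path, I would show that the off-diagonal block of the influence matrix has operator norm $O\!\left(\frac{1}{\epsilon\Delta}\right)$, while the diagonal block is essentially a rescaling of the marginal on colors of $u$, so the matrix trickle-down hypothesis is satisfied with room to spare.

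The main obstacle will be making the codimension-two matrix bound sharp enough to yield the $\Delta(v)+\epsilon\Delta$ threshold rather than the $(11/6)\Delta$ barrier that a naive scalar Oppenheim trickle-down would produce. The tree recursion forces the color-by-color interaction between $u$ and $w$ into a structured low-rank plus diagonal form, and it is precisely the \emph{matrix}-valued hypothesis of the trickle-down theorem that can exploit this structure; a scalar bound would have to pessimistically take the largest eigenvalue of the influence matrix and lose the dependence on the joint marginal distribution. Care must also be taken because the effective available color set at $u$ depends on the color chosen at $w$ and vice versa, and this dependence must be tracked consistently through the conditioning. Once local spectral expansion is established at the appropriate rate, the spectral gap and mixing-time bounds follow from standard local-to-global arguments, and the modified and standard log-Sobolev statements for $\Delta=O(1)$ follow from the entropic refinements (e.g.\ uniform block factorization of entropy) together with the marginal lower bounds guaranteed by the slack condition.
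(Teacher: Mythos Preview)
Your proposal contains a genuine gap: you misread what the matrix trickle-down theorem demands. It does \emph{not} reduce local spectral expansion to a single PSD inequality at codimension-two links. The theorem requires you to construct a family $\{M_\tau\}$ indexed by \emph{all} faces of codimension $\geq 2$ and, at every level $k\geq 3$, to verify the quadratic recursive inequality $\E_{x\sim\pi_\tau} M_{\tau\cup x}\preceq M_\tau-\tfrac{k-1}{k-2}M_\tau\Pi_\tau^{-1}M_\tau$. The codimension-two base case is the easy part (at a codimension-two link only two vertices $u,w$ are uncolored, so either $u\sim w$ and Proposition~4.3 applies directly, or $u\not\sim w$ and the link is a product); there is no ``unique $u$--$w$ path in $G_\sigma$'' to telescope along, because $G_\sigma$ is just the two vertices. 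The hard work is designing $M_\tau$ at intermediate codimensions so that the quadratic recursion closes with $\rho(\Pi_\tau^{-1}M_\tau)\leq O(\tfrac{1}{\epsilon k})$, and your proposal does not address this at all.

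What the paper actually does: it writes $M_\tau=\tfrac{1}{k-1}(\Pi_\tau F_\tau+\sqrt{\Pi_\tau}A_\tau\sqrt{\Pi_\tau})$ with $F_\tau$ diagonal and $A_\tau$ hollow, defines $A_\tau$ recursively so that its off-diagonal contribution is absorbed exactly by the averaging $\E_{x}M_{\tau\cup x}$, and bounds $|A_\tau(uc,vc)|\leq 1/\beta$ by induction. The tree structure enters not through paths but through a rooting: one splits $A_\tau=A_{\tau,\mathrm{even}}+A_{\tau,\mathrm{odd}}$ by parity of the distance to the root, so that $A_\tau^2\preceq 2A_{\tau,\mathrm{even}}^2+2A_{\tau,\mathrm{odd}}^2$ and each summand is block-diagonal over siblings, giving a diagonal upper bound on $A_\tau^2$ in terms of $\Delta_\tau(v)+\Delta_\tau(a(v))$. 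This diagonal error is then fed into explicit scalar recursions for $F_\tau$ (functions $f_1,f_2,f_3$ of the local degrees) that one checks by hand. Your influence-matrix/path-telescoping picture sounds closer to a direct spectral-independence argument in the style of \cite{CGSV21,FGYZ21}, which is a different route altogether and would need its own correlation-decay estimate on trees rather than the inductive matrix framework of this paper.
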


The above theorem, although it is not  as strong as \cite{MSW04}, shows that Glauber dynamics mixes rapidly even when we have a list coloring problem on a tree and furthermore it gives a possible avenue to exploit our techniques to prove that Glauber dynamics mixes rapidly on any graph when $q\geq (1+\eps)\Delta$. We expect that upon further investigation our techniques can be coupled with the extensive literature on random proper colorings of graphs with large girth to break the $11/6-\eps$ barrier.

To establish the above results, we view the Glauber dynamics as a high dimensional walk on a simplicial complex and we prove that this complex is a ``local spectral expander'' (see \cref{def:localspectralexpansion} below). By local-to-global theorems,  local spectral expansion immediately implies a bound on the  spectral gap  of the Glauber dynamics and therefore on its mixing time  (see \cref{subsec:local-to-global} and \cref{subsec:markovchain} below). A simplicial complex $X$  is a downward closed collection of sets. Given a graph $G$ and a set of $q$ colors, we build a simplicial complex as follows: We let each maximal set in $X$ be a set of vertex color pairs that denote a proper coloring of vertices of $G$, then we add all subsets of the maximal sets to $X$ to make it downward closed. By viewing the  Glauber dynamics can be viewed as a walk on the maximal sets in the coloring complex $X$. 
Unlike recent developments \cite{ALO20, CGSV21, FGYZ21, CLV20, CLV21, FGKP21} which exploit the correlation decay property to prove local spectral expansion, our proof is a direct method: We use an inductive argument  inspired by the Oppehheim's trickle-down theorem to bound local eigenvalues. We expect our technique to find more applications in analysis of  Markov chains as well as in other applications of simplicial complexes.

\subsection{Main Technical Contributions}
A simplicial complex $X$ on a finite ground set $U$ is a downwards closed collection of subsets of $U$, i.e. if $\tau \in X$ and $\sigma \subset \tau \subseteq U$, then $\sigma \in X$. The elements of $X$ are called faces, and then maximal faces are called facets.  
We say a face $\tau$ is of dimension $k$ if  $|\tau| = k+1$ and write $\dimfunc (\tau) = k$. We write $X(k) = \{\tau \in X : \dimfunc(\tau) = k\}$. 
We say $X$ is a pure $\dimension$-dimensional complex  if every facet  has dimension $\dimension$.
In this paper we only work with pure simplicial complexes. To keep the notation concise, let $X( \leq  i) := X (-1)\cup\dots\cup X(i)$.
The co-dimension of a face $\tau$ is defined as $\codimfunc (\tau)  \coloneqq \dimension - \dimfunc(\tau)$.  
 For a face $\tau$, define the  link of $\tau$ as the simplicial complex $X_{\tau} = \{\sigma \setminus \tau : \sigma \in X, \sigma \supset \tau\}$. Note that $X_\tau$ is a $(\codimfunc(\tau) -1)$- dimensional complex.
Let $\pi_{X, d}$ be a distribution on the maximal faces of a pure $\dimension$-dimensional simplicial complex $X$; we may drop $X$ from the subscript if it is clear in the context. We will refer to the pair $(X,\pi_{d})$ as a weighted simplicial complex. For each face $\tau$ and each integer $-1 \leq i \leq \codimfunc (\tau)-1$, we write $\pi_{\tau,i}$ for the induced distribution on $X_{\tau}(i)$ given by
\begin{align}\label{eq:pidef}
    \pi_{\tau,i}(\eta) = \frac{1}{\binom{\codimfunc (\tau) }{i+1}} \Pr_{\sigma \sim \pi_{d}}[\sigma \supset \eta \mid \sigma \supset \tau].
\end{align}
One should view this as a marginal distribution conditioned on $\tau$. We will often view $\pi_{\tau,i}$ as a vector in $\R_{\geq0}^{X_{\tau}(i)}$. We remove the subscript $\tau$ when $\tau = \emptyset$. We also omit $i$ from the subscript  when $i = 0$, i.e $\pi_{\tau} \coloneqq \pi_{\tau, 0}$. 

Given a weighted simplicial complex $(X,\pi_{\dimension})$, there is a natural Markov chain known as the down-up walk (or high-order walk) \cite{KM17, DK17, KO18}  whose stationary distribution is $\pi_\dimension$. Starting at a facet $\sigma \in X(\dimension)$, we transition to the next facet $\sigma' \in X(\dimension)$ via the following two-step process:
\begin{enumerate}
    \item Select a uniformly random element $x \in \sigma$ and remove $x$ from $\sigma$.
    \item Select a random $\sigma' \in X(\dimension)$ containing $\sigma \setminus \{x\}$ with probability proportional to $\pi_{\dimension}(\sigma')$.
\end{enumerate}
If $P_{\dimension}^{\vee}$ denotes the transition probability matrix of this Markov chain, then we may write down its entries as follows.
\begin{align*}
    P_{\dimension}^{\vee}(\sigma,\sigma') = \begin{cases}\frac{1}{d+1} \pi_{\sigma\cap\sigma',0}(\sigma'\setminus \sigma) & \text{if } |\sigma\cap \sigma'|=d,\\
    0 &\text{otherwise.}
    \end{cases}
\end{align*}


One of the beautiful insights of \cite{KM17, DK17, KO18} is that spectral properties of $P_{\dimension}^{\vee}$ may be studied through spectral properties of ``local walks'', which are defined using only pairwise conditional marginals and are more tractable to analyze. 
Fix a face $\tau$  of co-dimension $ k \geq 2$; the local walk for $\tau$ is a Markov chain on $X_{\tau}(0)$ with transition probability matrix $P_{X,\tau} \in \R_{\geq0}^{X_{\tau}(0) \times X_{\tau}(0)}$ described by
\begin{align}\label{eq:localwalkdef}
    P_{X,\tau}(x,y) = \frac{\pi_{\tau,1}(x,y)}{2\pi_{\tau}(x)} = \pi_{\tau \cup \{x\}}(y) = \frac{1}{k -1} \Pr_{\sigma \sim \pi_{\dimension}}[y \in \sigma \mid \sigma \supset \tau \cup \{x\}]. 
\end{align}
for distinct $x,y \in X_{\tau}(0)$; we note that $P_{X,\tau}$ has zero diagonal. We drop $X$ in the subscript when it is clear in the context. Note that, by definition, the walk is reversible with stationary distribution $\pi_{\tau}$.  Furthermore, the row corresponding to element $x \in \supp \pi_{\tau}$ is precisely $\pi_{\tau \cup \{x\}}$.
For our calculations, it is easier to work  with matrices $\{P_\tau\}_{X(\leq d-2)}$ if they all live in   $\mathbb{R}^{X(0) \times X(0)}$. 
 Therefore, we modify the definition of $P_\tau$  as follows: for any $\tau \in X(\leq d-2)$, let $P_\tau \in \mathbb{R}^{X(0) \times X(0)}$ be supported on  $X_\tau(0) \times X_\tau (0)$ block and 
equal  to the transition probability  matrix of the local walk for $\tau$ on this block. Similarly, 
we often need to see $\pi_\tau$ as a probability distribution over $X(0)$ supported on $X_\tau(0)$ entries.  Therefore,  $\pi_\tau$ can be seen as a vector in $\mathbb{R}^{X(0) }$. Furthermore, define $\Pi_\tau \in \mathbb{R}^{X(0) \times X(0)}$ as $\Pi_\tau\coloneqq \diag (\pi_\tau) $.  
We proceed by introducing some terminology related to the local walks. 
We say $X$ is totally connected if  $\lambda_2(P_\tau) < 1 $  for any $\tau \in X(\leq d-2)$, i.e. the local walk for $\tau$ is irreducible/connected for any face $\tau$ of co-dimension at least $2$. Furthermore, we  define local spectral expansion as follows. 
\begin{definition}[Local Spectral Expansion \cite{DK17,KO18,DDFH18}   ]\label{def:localspectralexpansion}
We say a weighted simplicial complex $(X,\pi_{\dimension})$ of dimension-$\dimension$ is a $(\gamma_{-1},\dots,\gamma_{\dimension-2})$-local spectral expander if for every $k$ and every $\tau \in X(k)$,  $\lambda_{2}(P_{\tau}) \leq \gamma_{k}$.
\end{definition}
\begin{remark}
We only work with this ``one-sided'' version of local spectral expansion. Many other works \cite{KO18, Opp18, KO18b} consider the stronger two-sided local spectral expansion, which further imposes a lower bound on the smallest eigenvalue of the local walks $P_{\tau}$.
\end{remark}

In almost all applications of high dimensional simplicial complexes, one first needs to to prove that the underlying complex is a local spectral expander and then exploit ``local-to-global'' theorems to prove global properties of the underlying complex $X$. Perhaps, the main generic technique to prove that a given complex $X$ is a local spectral expander is the Oppenheim ``trickle-down method'' \cite{Opp18}, where one can show that if $\gamma_{\dimension-2}\ll \frac{1}{\dimension}$ and $X$ is ``well connected'' then all $\gamma_i$'s are at most $2\gamma_{\dimension-2}$.

In our main technical contribution, we give recursive framework for bounding the local spectral expansion of a weighted simplicial complex $X$ which significantly generalizes Oppenheim's result. 

\begin{theorem}[Inductive Matrix Trickle-Down Method]\label{thm:matrixtrickledowninductive}
Let $(X, \pi_\dimension)$ be a totally connected weighted simplicial complex. Suppose  $\{M_{\tau} \in \mathbb{R}^{X(0) \times X(0)}\}_{\tau \in X(\leq \dimension-2)}$ is a family of symmetric matrices satisfying the following:
\begin{enumerate}
    \item \textbf{Base Case:} For every $\tau$ of co-dimension $2$, we have the spectral inequality $$\Pi_{\tau}P_{\tau} - 2\pi_{\tau}\pi_{\tau}^{\top} \preceq M_{\tau} \preceq \frac{1}{5} \Pi_{\tau}.$$
    \item \textbf{Recursive Condition:} For every $\tau$ of co-dimension at least $k \geq 3$,  
    $M_{\tau}$ satisfies 
    $$\boundMatrix_\tau \preceq  \frac{k-1}{2k-1} \Pi_\tau \quad \text{and}\quad  \E_{x \sim \pi_{\tau}} M_{\tau \cup \{x\}} \preceq M_{\tau} - \frac{k-1}{k-2} M_{\tau}\Pi_{\tau}^{-1}M_{\tau}.$$
\end{enumerate}
Then  $\lambda_2(P_\tau)\leq \rho(\Pi_\tau^{-1}M_\tau)$ for all $\tau\in X(\leq d-2)$, where $\rho$ represents the spectral radius. In particular, $(X,\pi_\dimension)$ is a $(\mu_{-1},\dots,\mu_{\dimension-2})$-local spectral expander with $\mu_{k} = \max_{\tau \in X(k)} \rho(\Pi_\tau^{-1}M_\tau)$.
\end{theorem}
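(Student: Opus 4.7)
The plan is to induct on the codimension $k$ of $\tau$, proving at each level the \emph{matrix inequality}
\[
\Pi_\tau P_\tau - 2\pi_\tau \pi_\tau^\top \preceq M_\tau.
\]
Once this is available, the spectral radius conclusion is immediate by conjugating with $\Pi_\tau^{-1/2}$: the symmetric walk $\widetilde P_\tau \coloneqq \Pi_\tau^{1/2} P_\tau \Pi_\tau^{-1/2}$ has top eigenvalue $1$ on $u \coloneqq \Pi_\tau^{1/2}\mathbf{1}$, and the matrix $\widetilde P_\tau - 2uu^\top$ agrees with $\widetilde P_\tau$ on $u^\perp$ while acting as $-1$ on $u$; so its largest eigenvalue equals $\lambda_2(P_\tau)$ and is bounded above by $\lambda_{\max}(\Pi_\tau^{-1/2} M_\tau \Pi_\tau^{-1/2}) \le \rho(\Pi_\tau^{-1} M_\tau)$. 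The base case $k=2$ is exactly the first hypothesis.

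For the inductive step at $\tau$ of codim $k \ge 3$, I first establish two combinatorial identities by direct expansion from the definitions together with eq.~(2):
\[
\mathbb{E}_{x\sim\pi_\tau}\bigl[\Pi_{\tau\cup\{x\}} P_{\tau\cup\{x\}}\bigr] = \Pi_\tau P_\tau, \qquad
\mathbb{E}_{x\sim\pi_\tau}\bigl[\pi_{\tau\cup\{x\}} \pi_{\tau\cup\{x\}}^\top\bigr] = \Pi_\tau P_\tau^2.
\]
The first is a double-count of triples $(x,y,z)$ in a random facet containing $\tau$; the second comes from reversibility combined with the identity $P_\tau(x,\cdot)^\top = \pi_{\tau\cup\{x\}}$. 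Applying the inductive hypothesis at codimension $k-1$ for each $x$, averaging over $x\sim\pi_\tau$, and invoking the recursive condition on $\mathbb{E}_x M_{\tau\cup\{x\}}$ gives
\[
\Pi_\tau P_\tau - 2\,\Pi_\tau P_\tau^2 \preceq M_\tau - \tfrac{k-1}{k-2}\, M_\tau \Pi_\tau^{-1} M_\tau.
\]
A third algebraic identity, established by verifying $(\Pi_\tau P_\tau - \pi_\tau\pi_\tau^\top)\mathbf{1} = 0$ and $\pi_\tau^\top \Pi_\tau^{-1}\pi_\tau = 1$, is
\[
\Pi_\tau P_\tau^2 - \pi_\tau \pi_\tau^\top = N_\tau \Pi_\tau^{-1} N_\tau, \qquad N_\tau \coloneqq \Pi_\tau P_\tau - \pi_\tau \pi_\tau^\top.
\]
Substituting reduces the target to the squared matrix inequality
\[
2\,N_\tau \Pi_\tau^{-1} N_\tau \preceq \tfrac{k-1}{k-2}\, M_\tau \Pi_\tau^{-1} M_\tau. \tag{$\star$}
\]

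The main obstacle will be precisely $(\star)$: squaring does not preserve the Löwner order, so one cannot deduce $(\star)$ directly from a one-sided bound such as $N_\tau \preceq M_\tau + \pi_\tau\pi_\tau^\top$. This is exactly where the ``bulk'' condition $M_\tau \preceq \tfrac{k-1}{2k-1}\Pi_\tau$ from the recursive hypothesis must be used. I plan to close the step by viewing $\Phi(X) \coloneqq \mathbb{E}_{x\sim\pi_\tau}[D_x X D_x]$, with $D_x \coloneqq \Pi_\tau^{-1/2}\Pi_{\tau\cup\{x\}}^{1/2}$, as a \emph{unital} positive linear map (one checks $\Phi(I) = I$ from $\mathbb{E}_x \pi_{\tau\cup\{x\}} = \pi_\tau$), and then invoking a Kadison--Schwarz-type quadratic estimate to absorb the $2\,N_\tau \Pi_\tau^{-1} N_\tau$ term into the correction $\tfrac{k-1}{k-2}\, M_\tau \Pi_\tau^{-1} M_\tau$. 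The specific constants $\tfrac{k-1}{k-2}$ and $\tfrac{k-1}{2k-1}$ are calibrated so that this quadratic manoeuvre is tight, in the same way that Oppenheim's scalar recursion $\gamma_{k+1} \le \gamma_k/(1-(k-2)\gamma_k)$ arises from a scalar trickle-down; I expect the bookkeeping to follow Oppenheim's iteration in spirit but with matrix operations throughout.
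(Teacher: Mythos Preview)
Your high-level scaffolding is sound: the induction on codimension, the two Garland identities
\[
\mathbb{E}_{x\sim\pi_\tau}\bigl[\Pi_{\tau\cup x} P_{\tau\cup x}\bigr] = \Pi_\tau P_\tau,
\qquad
\mathbb{E}_{x\sim\pi_\tau}\bigl[\pi_{\tau\cup x}\pi_{\tau\cup x}^\top\bigr] = \Pi_\tau P_\tau^2,
\]
and the reduction of the spectral claim to a Loewner inequality all match the paper. The gap is precisely where you flag it: inequality $(\star)$ is not something you can close with Kadison--Schwarz. Kadison's inequality for a unital positive map reads $\Phi(A)^2 \preceq \Phi(A^2)$, which goes in the \emph{wrong} direction here; and in any case $(\star)$, namely $2N_\tau\Pi_\tau^{-1}N_\tau \preceq \tfrac{k-1}{k-2}M_\tau\Pi_\tau^{-1}M_\tau$, is essentially equivalent to what you are trying to prove (a Loewner relation between $N_\tau$ and $M_\tau$), so trying to verify it directly is circular. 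There is also a secondary mismatch: your fixed coefficient $2$ in the invariant produces a $2$ in front of $\Pi_\tau P_\tau^2$ after averaging, while the quadratic correction on the $M$-side carries the coefficient $\tfrac{k-1}{k-2}$; these must agree for the argument below, so the correct inductive invariant is $\Pi_\tau P_\tau - \tfrac{k}{k-1}\pi_\tau\pi_\tau^\top \preceq M_\tau$.

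The paper avoids $(\star)$ altogether. With $\alpha = \tfrac{k-1}{k-2}$, the averaged inequality reads $\Pi_\tau P_\tau - \alpha\,\Pi_\tau P_\tau^2 \preceq M_\tau - \alpha\,M_\tau\Pi_\tau^{-1}M_\tau$. One checks the algebraic identity $Q - \alpha Q^2 = P_\tau - \alpha P_\tau^2$ for $Q \coloneqq P_\tau - (2 - \tfrac{1}{\alpha})\mathbf{1}\pi_\tau^\top$, so after conjugating by $\Pi_\tau^{1/2}$ the inequality becomes $A(I-\alpha A) \preceq B(I-\alpha B)$ with $A = \Pi_\tau^{1/2}Q\Pi_\tau^{-1/2}$ and $B = \Pi_\tau^{-1/2}M_\tau\Pi_\tau^{-1/2}$. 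The map $X \mapsto X(I-\alpha X)$ is a bijection from $\{X \preceq \tfrac{1}{2\alpha}I\}$ to $\{T \preceq \tfrac{1}{4\alpha}I\}$ with inverse $T \mapsto \tfrac{1}{2\alpha}I - (\tfrac{1}{4\alpha^2}I - \tfrac{1}{\alpha}T)^{1/2}$, and the latter is operator monotone by L\"owner--Heinz (monotonicity of the square root). Hence $A \preceq B$, which is exactly $\Pi_\tau P_\tau - \tfrac{k}{k-1}\pi_\tau\pi_\tau^\top \preceq M_\tau$. The only missing ingredient is the \emph{a priori} bound $A \preceq \tfrac{1}{2\alpha}I$, i.e.\ $\lambda_2(P_\tau) \leq \tfrac{1}{2\alpha}$: this is obtained by a bootstrap, since the inductive hypothesis together with the upper bound $M_{\tau\cup x} \preceq \tfrac{1}{2\alpha+1}\Pi_{\tau\cup x}$ already gives $\lambda_2(P_{\tau\cup x}) \leq \tfrac{1}{2\alpha+1}$, and then the \emph{scalar} Oppenheim trickle-down yields $\lambda_2(P_\tau) \leq \tfrac{1}{2\alpha}$. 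So the two ideas you are missing are (i) the operator-monotone inversion of $X \mapsto X(I-\alpha X)$ via the square root, and (ii) the auxiliary use of scalar trickle-down to land in the region where that inversion is valid.
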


\begin{remark}
The original ``trickle-down method'' is a special case of our result by taking the matrix $M_{\tau }$ to be a multiple of $\Pi_{\tau}$. We justify this formally through a quick calculation in \cref{sec:matrixtrickledown} below.
\end{remark}
\begin{remark}
It also turns out for our applications to proper colorings we will need a slight extension of the above theorem. However, one should take the above theorem as the heart of our technical contributions. See \cref{thm:fulltechnical} below and the surrounding discussion for more details on the slight extension.
\end{remark}
Typically, it is not  difficult to construct some family of matrices $\{M_{\tau}\}$ satisfying the assumptions of \cref{thm:matrixtrickledowninductive}. The key challenge is choose $M_\tau$'s such that  one can bound $\rho(\Pi_\tau^{-1}M_{\tau})\leq O(1/\codimfunc(\tau))$. One of our second key insights is that the matrices $M_{\tau}$ can be designed to have convenient sparsity patterns depending on $(X,\pi_d)$ which allow for straightforward bounds on $\rho(\Pi_\tau^{-1}M_{\tau})$. For instance, in our application to proper colorings, our matrices $M_{\tau}$ will have rows and columns corresponding to vertex-color pairs $vc$, and they will be supported on the ``proper coloring constraints'', namely pairs $uc, vc'$ of vertex-color pairs where the vertices $u,v$ are neighbors and the two colors $c,c'$ are identical. We demonstrate the usefulness of this approach on sampling proper colorings in graphs below.

\subsection{Related Prior Work}
Sampling uniformly random proper colorings of bounded-degree graphs is a well-studied problem going back to the 1990s \cite{Jer95,SS97,Vig99} with applications to statistical physics. As alluded to before, nearly all prior results showing rapid mixing of the Glauber dynamics for this problem used variants of the coupling method or the correlation decay property.

Another method of attack for graph coloring is based on deterministic algorithms where one typically exploits Weitz's elegant algorithmic framework \cite{Wei04} based on the strong spatial mixing or the Barvinok's polynomial interpolation method \cite{Sok01,GK12,LY13,LSS19,BDPR21}.

Recently, it was observed  \cite{ALO20, CGSV21, FGYZ21} that the perspective of high-dimensional expanders, and in particular local spectral expansion, yields powerful new methods to obtain rapid mixing for multi-state spin systems. These results crucially relied on the correlation decay property  to bound the local spectral expansion of desired distribution. Specifically, for colorings, \cite{CGSV21, FGYZ21} showed that the correlation decay results of \cite{GKM15} give strong local spectral expansion bounds for proper colorings on triangle-free graphs when $q > \alpha \Delta$, where $\alpha \approx 1.763 < 2$ is a constant. They concluded rapid mixing of the Glauber dynamics in this regime, a result that seems difficult to obtain using coupling arguments.


However, despite the power of this approach, the main difficulty is that obtaining correlation decay for proper colorings is extremely challenging.
Along this line, our main technical theorem deviates from this recent trend as we directly bound the local spectral expansion of the underlying complex by induction instead of appealing to the correlation decay property.

\section{Preliminaries}
First, we fix some notational conventions. 
 When it is clear from context, we write $a$ to denote a singleton $\{a\}$.  When we want to add a subscript $b$ to an object denoted by $x_a$, we write $x_{a, b}$, i.e.  $x_{a, b} \coloneqq (x_a)_b $. We use the same convention for superscripts. Throughout the paper, adding or removing $\emptyset$ as a subscript  does not change the object.
 For an integer $q$,  we denote $\{1, \dots, q\}$ by $[q]$.  For a function $f: D \rightarrow R$ and a $D' \subseteq D$,  we write $f|_{D'}$  to denote the restriction of $f$ to domain $D'$. 
 \paragraph{Matrices and Vectors}
 Given a set $S$, we write $v \in \mathbb{R}^S$ and $A \in \mathbb{R}^{S \times S}$ to respectively  denote  a vector and a matrix indexed by $S$. We see a  probability  distribution $p$ over a set $S$ as a vector $p \in \mathbb{R}_{\geq 0}^S$.
  For a $n \times n$ matrix $A$, with eigenvalues $\lambda_1,\dots,\lambda_n$, we write $\rho(A)=\max_{1\leq i\leq n}|\lambda_i|$ to denote the spectral radius of $A$ and  $\norm{A}_\infty = \max_{1 \leq i \leq n } \{ \sum_{j=1}^n  |A_{ij}|\}$. We write $\preceq$ to denote the Loewner order, i.e. for any symmetric matrices $A, B \in \mathbb{R}^{S \times S}$, we write $A \preceq B$ if $B-A$ is positive semidefinite. For any matrix $A  \in \mathbb{R}^{S \times S}$ and any $S'\subseteq S$, $A^{S'} \in \mathbb{R}^{S \times S}$ is defined to be $A^{S'} (x, y) = A (x, y)$ for $x, y \in S'$, and $0$ on all other entries.  We say a matrix $A$ is diagonal if  $A(x, y) = 0$ whenever $x \neq y$. We say matrix $A \in \mathbb{R}^{S \times S}$ is hollow if $A(x, x) = 0$ for all $x \in S$.    For any matrix $A$ define  $A^+$ as  $A^+ (x, y) \coloneqq \max \{0, A(x, y)\}$ and $A^- \coloneqq A - A^+$. Furthermore, $\diag(A)$ is  a diagonal matrix whose diagonal   elements  match those of $A$, i.e. $A(x, x) = \diag (A)(x, x)$. We define $\offdiag(A) \coloneqq A - \diag(A)$. 
\paragraph{Graphs}
Fix a graph $G = (V, E)$. We denote the degree of each vertex $v$ by $\Delta_G(v)$ and the maximum degree of the graph by   $\Delta_G$. Let $n_G \coloneqq |V|$ and $m_G \coloneqq |E|$. Furthermore, let $G[U]$ be the induced subgraph of $G$ on $U \subseteq V$. We may drop the subscripts when it is clear from context. For any $u, v \in V$, we write $u \sim v$ if $\{u, v\} \in E$. Furthermore, for any $e \in E$, we write $e \sim v$ whenever vertex $v$ is an endpoint of $e$, and for  an edge $f \neq e$ we write $e \sim f$ when $e$ and $f$ share an endpoint. For an edge $e= \{u, v\}$, define $\Delta(e)$ as the number of edges that share an endpoint with $e$, i.e.  $\Delta(e) \coloneqq \Delta (u) + \Delta (v) - 2$.  The line graph $L(G)$  of a graph $G$ is defined as follows: every vertex in  $L(G)$ corresponds to an edge in $G$ and there is an edge between two vertices in $L(G)$ if their corresponding edges in $G$   share an endpoint. 

\subsection{Linear Algebra}
\begin{fact}\label{fact:rowsum}
For any symmetric matrix $A \in \R^{S\times S}$ where $A_{i,j}\neq 0$ only for $i,j\in S'\subseteq S$, we have  $A \preceq \|A \|_\infty I^{S'}$. 
\end{fact}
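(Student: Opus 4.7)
The plan is to reduce the claim to the standard bound $\lambda_{\max}(B) \leq \|B\|_\infty$ for symmetric matrices, which follows from Gershgorin's circle theorem. First I would observe that since $A(i,j)$ vanishes whenever $i \notin S'$ or $j \notin S'$, for any $x \in \mathbb{R}^S$ the quadratic form $x^\top A x$ depends only on the restriction $x|_{S'}$; explicitly, $x^\top A x = (x|_{S'})^\top A'(x|_{S'})$, where $A'$ denotes the $S' \times S'$ block of $A$. Similarly $x^\top I^{S'} x = \|x|_{S'}\|_2^2$. Hence the desired Loewner inequality $A \preceq \|A\|_\infty I^{S'}$ is equivalent to the scalar statement $\lambda_{\max}(A') \leq \|A\|_\infty$.

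Next I would apply Gershgorin's circle theorem to the symmetric (hence real-spectrum) matrix $A'$: each eigenvalue $\lambda$ of $A'$ satisfies $|\lambda - A'(i,i)| \leq \sum_{j \neq i} |A'(i,j)|$ for some row $i \in S'$, so $|\lambda| \leq \max_{i \in S'} \sum_{j \in S'} |A'(i,j)|$. Since the rows of $A$ indexed outside $S'$ are identically zero by hypothesis, this maximum coincides with $\|A\|_\infty$ as defined in the preliminaries. This yields $\lambda_{\max}(A') \leq \|A\|_\infty$, completing the reduction.

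There is no real obstacle here; the statement is essentially a one-line consequence of Gershgorin's theorem, and the only bookkeeping is to confirm that restricting to $S' \times S'$ preserves both the quadratic form of $A$ and the row-sum norm $\|A\|_\infty$, both of which are immediate from the support hypothesis on $A$.
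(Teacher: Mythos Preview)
Your argument is correct. The paper states this as a \textbf{Fact} without proof, so there is no approach to compare against; your reduction to the $S'\times S'$ block followed by Gershgorin's circle theorem is a clean and standard justification.
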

\begin{fact}\label{crosstermfact}
For  matrices $A,B\in \R^{m\times n}$ and positive $\eps >0 $, we have the inequalities $ AB^\top+BA^\top \preceq \eps AA^\top + \frac{1}{\eps} BB^\top $ and $(A+B)(A+B)^\top \preceq (1+\eps)AA^\top+(1+1/\eps)BB^\top$.
\end{fact}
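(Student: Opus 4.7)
The plan is to derive both inequalities from a single positive semidefinite witness, namely the PSD-ness of an appropriate scaled ``square'' of $A$ and $B$. For the first inequality, I would start from the obvious fact that for any real matrices $U,V$ of the same shape, $(U-V)(U-V)^\top \succeq 0$, which upon expansion gives $UV^\top + VU^\top \preceq UU^\top + VV^\top$. Applying this with the rescaling $U = \sqrt{\eps}\,A$ and $V = \tfrac{1}{\sqrt{\eps}}\,B$ (valid since $\eps > 0$) and noting that $UV^\top = AB^\top$ and $VU^\top = BA^\top$ yields exactly
\[
AB^\top + BA^\top \;\preceq\; \eps\, AA^\top + \tfrac{1}{\eps}\, BB^\top,
\]
which is the first claim.

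For the second inequality, I would simply expand $(A+B)(A+B)^\top = AA^\top + AB^\top + BA^\top + BB^\top$, then substitute the cross-term bound already obtained in the first step. Combining the coefficients of $AA^\top$ and $BB^\top$ gives $(1+\eps)AA^\top + (1+\tfrac{1}{\eps})BB^\top$, which is the stated bound.

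There is essentially no obstacle here; the entire content is a rescaled version of the scalar AM--GM inequality $2xy \le \eps x^2 + \eps^{-1} y^2$ lifted to the PSD cone via the universal witness $(U-V)(U-V)^\top \succeq 0$. The only thing worth being mildly careful about is that the two matrices have the same dimensions $m \times n$ so that the products $AB^\top, BA^\top, AA^\top, BB^\top$ are all $m \times m$ and the Loewner order $\preceq$ on $\R^{m\times m}$ makes sense; this is given by hypothesis. No appeal to the ambient simplicial complex machinery or any earlier result in the paper is needed, so the proof will be a couple of lines.
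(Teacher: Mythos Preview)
Your proposal is correct and essentially identical to the paper's proof: the paper also expands $0 \preceq (\sqrt{\eps}A - \tfrac{1}{\sqrt{\eps}}B)(\sqrt{\eps}A - \tfrac{1}{\sqrt{\eps}}B)^\top$ to get the first inequality, then adds $AA^\top + BB^\top$ to both sides (equivalently, expands $(A+B)(A+B)^\top$ and substitutes) to get the second.
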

\begin{proof}
Since $\eps>0$ we can write 
$$ 0\preceq \wrapp{\sqrt{\epsilon}A - \frac{1}{\sqrt{\epsilon}}B}\wrapp{\sqrt{\epsilon}A - \frac{1}{\sqrt{\epsilon}}B}^\top = \eps AA^\top + \frac{1}{\epsilon}BB^\top -AB^\top - BA^\top.$$
Rearranging yields the first inequality $AB^\top + BA^\top \preceq \epsilon AA^\top + \frac{1}{\epsilon}BB^\top$. Adding $AA^\top + BB^\top$ to both sides yields the second inequality.
\end{proof}


\begin{lemma}\label{lem:oppmonotone}
Let $A,B \in \R^{n \times n}$ be symmetric matrices such that $A \cdot(I- \alpha A) \preceq B \cdot(I - \alpha B)$ for a positive real number $\alpha > 0$. If $A, B \preceq \frac{1}{2\alpha} \cdot I$, then $A \preceq B$. Note that we crucially do not require $A, B \succeq 0$.
\end{lemma}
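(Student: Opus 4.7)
The plan is to reduce the claim to operator monotonicity of the square root by completing the square. The quadratic $f(x) = x - \alpha x^2$ fails to be operator monotone in general, but the hypothesis $A, B \preceq \tfrac{1}{2\alpha} I$ puts us on the side of the parabola where $f$ is increasing; we want a change of variables that exposes this structure matricially.

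Concretely, I would introduce the PSD matrices $A' \coloneqq \tfrac{1}{2\alpha} I - A$ and $B' \coloneqq \tfrac{1}{2\alpha} I - B$, which are positive semidefinite precisely by the assumption $A, B \preceq \tfrac{1}{2\alpha} I$. A direct expansion of $A - \alpha A^2$ in terms of $A'$ yields the identity
\begin{equation*}
A (I - \alpha A) \;=\; \frac{1}{4\alpha} I - \alpha (A')^2,
\end{equation*}
and the analogous identity holds for $B$. Substituting these into the hypothesis $A(I - \alpha A) \preceq B(I - \alpha B)$, the constant $\tfrac{1}{4\alpha} I$ cancels and, after multiplying through by $-\tfrac{1}{\alpha} < 0$ (which reverses the Loewner order), one obtains the clean spectral inequality $(B')^2 \preceq (A')^2$.

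At this point both $A'$ and $B'$ are positive semidefinite, so we can invoke operator monotonicity of the square root on PSD matrices: $(B')^2 \preceq (A')^2$ implies $B' \preceq A'$. Translating back via $A' = \tfrac{1}{2\alpha} I - A$ and $B' = \tfrac{1}{2\alpha} I - B$ gives $-B \preceq -A$, i.e.\ $A \preceq B$, as desired.

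I do not expect a real obstacle here: the only nontrivial ingredient is operator monotonicity of the square root, which is a standard fact (e.g.\ a consequence of the Löwner--Heinz inequality), and the algebraic manipulation is just completing the square. The role of the upper bound $A, B \preceq \tfrac{1}{2\alpha} I$ is precisely to ensure $A', B' \succeq 0$ so that the square root step is legitimate; without it, passing from $(B')^2 \preceq (A')^2$ to $B' \preceq A'$ would be false, which is why the hypothesis cannot be dropped even though $A, B$ themselves need not be PSD.
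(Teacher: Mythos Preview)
Your proof is correct and is essentially the same argument as the paper's: both complete the square via the substitution $A' = \tfrac{1}{2\alpha}I - A$ (the paper phrases this as identifying the inverse map $T \mapsto \tfrac{1}{2}I - (\tfrac{1}{4}I - T)^{1/2}$ after normalizing $\alpha = 1$) and then appeal to operator monotonicity of the square root via L\"owner--Heinz. Your presentation is somewhat more direct, but the mathematical content is identical.
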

\begin{proof}
It suffices to prove the claim when $\alpha = 1$, since the general claim then follows by replacing $A,B$ with $\alpha A, \alpha B$, respectively.

First, observe the matrix map $M \mapsto M(I - M)$ is a bijection between $\wrapc{M \in \R^{n \times n}: M \preceq \frac{1}{2} I}$ and $\wrapc{T \in \R^{n \times n}: T \preceq \frac{1}{4} I}$ with inverse
\begin{align}\label{eq:AIAmapinverse}
    T \mapsto \frac{1}{2} I - \wrapp{ \frac{1}{4} I - T}^{1/2}.
\end{align}
The way to see this is via the eigendecomposition: if $M = \sum_{i=1}^{n} \mu_{i} \varphi_{i} \varphi_{i}^{\top}$ for an orthonormal eigendecomposition $\{\varphi_{i}\}_{i=1}^{n}$ with corresponding eigenvalues $\{\mu_{i}\}_{i=1}^{n}$, then $M(I - M)=\sum_{i=1}^{n} \mu_{i}(1 - \mu_{i}) \varphi_{i} \varphi_{i}^{\top}$. Hence, to prove this claim, it suffices to show that the real function $x \mapsto x(1 - x)$ is a bijection between $(-\infty,1/2]$ and $(-\infty,1/4]$. To see this, observe that the quadratic $x(1-x)=\mu$ has roots $x = \frac{1}{2} \pm \wrapp{\frac{1}{4} - \mu}^{1/2}$, and since we enforced that $\mu \leq \frac{1}{4},$ we must choose $x$ to be the smaller root, i.e. $\frac{1}{2} - \wrapp{\frac{1}{4} - \mu}^{1/2} \leq \frac{1}{2}$ gives the inverse function.

Knowing this explicit inverse function, we now return to the proof of the lemma. Since $A,B \preceq \frac{1}{2}I$, we may apply the inverse \cref{eq:AIAmapinverse} to $A(I-A)$ (resp. $B(I-B)$) to recover $A$ (resp. $B$). Hence, to prove the claim, it suffices to establish operator monotonicity of \cref{eq:AIAmapinverse}. A quick calculation reveals that this is equivalent to operator monotonicity of $M \mapsto \sqrt{M}$ for positive semidefinite $M \in \R^{n \times n}$, which is well-known and follows for instance by using the L\"{o}wner-Heinz Theorem \cite{Lowner34}.
\end{proof}

\subsection{Markov Chains}\label{subsec:markovchain}
Let $P$ be the transition probability matrix of a Markov chain on a finite state space $\Omega$ with stationary distribution $\pi$. We say $P$ is irreducible if $P$ is connected. We say $P$ is reversible w.r.t. $\pi$ if for all $x,y \in \Omega$, we have $\pi(x)P(x, y) = \pi(y)P(y, x)$. In this case, the matrix $P$ becomes self-adjoint w.r.t. the natural inner product $\langle\cdot,\cdot\rangle_{\pi}$ induced by $\pi$ on $\R^{\Omega}$ given by $\langle \phi, \psi \rangle_{\pi} = \E_{\pi}[\phi\psi]$. Throughout, we only work with irreducible reversible Markov chains.

We will be interested in the mixing of our Markov chains, which quantifies the rate of convergence to stationarity. Specifically, for an initial starting distribution $\mu$ on $\Omega$ and error parameter $\epsilon > 0$, define
\begin{align*}
    t_{\mix}(P,\epsilon, \mu) \overset{\defin}{=} \min\{t \geq 0 : \norm{\mu P^{t} - \pi}_{\TV} \leq \epsilon\},
\end{align*}
 where $\norm{\mu - \nu}_{\TV} = \frac{1}{2} \sum_{x \in \Omega} \abs{\mu(x) - \nu(x)}$ gives the total variation distance between two distributions $\mu,\nu$ on $\Omega$. We write $t_{\mix}(P,\epsilon) = \sup_{\mu} t_{\mix}(P, \epsilon, \mu)$, where the supremum is over all possible starting distributions $\mu$. The mixing time of $P$ is defined as $t_{\mix}(P) = t_{\mix}(P, 1/4)$.

It is well-known that the mixing time is controlled by various constants arising from classical functional inequalities. To introduce these, we write $\mathcal{E}_{P}(\phi,\psi) = \langle \phi, (I - P)\psi \rangle_{\pi}$ for the Dirichlet form of $P$, $\var_{\pi}(\phi) = \E_{\pi}[\phi^{2}] - \E_{\pi}[\phi]^{2}$ for the variance of $\phi$ w.r.t. $\pi$, and $\Ent_{\pi}(\phi) = \E_{\pi}[\phi \log \phi] - \E_{\pi}[\phi]\log \E_{\pi}[\phi]$. With these in hand, we define
\begin{enumerate}
    \item Spectral Gap: $\lambda(P) \overset{\defin}{=} \inf_{f \neq 0} \frac{\mathcal{E}(f, f)}{\var_{\pi}(f)}$
    \item Modified Log-Sobolev Constant: $\rho(P) \overset{\defin}{=} \inf_{f \geq 0} \frac{\mathcal{E}(f, \log f)}{\Ent_{\pi}(f)}$
    \item Standard Log-Sobolev Constant: $\kappa(P) \overset{\defin}{=} \inf_{f \geq 0} \frac{\mathcal{E}(\sqrt{f}, \sqrt{f})}{\Ent_{\pi}(f)}$
\end{enumerate}
Bounds on these constants have the following consequences for the mixing time.
\begin{proposition}
For an irreducible reversible Markov chain $P$ on a finite state space $\Omega$ with stationary distribution $\pi$, we have the following inequalities:
\begin{align*}
    t_{\mix}(P) &\leq O\wrapp{\frac{1}{\lambda(P)} \log \frac{1}{\pi_{\min}}} \tag{\cite{LPW17}} \\
    t_{\mix}(P) &\leq O\wrapp{\frac{1}{\rho(P)} \log \log \frac{1}{\pi_{\min}}} \tag{\cite{BT03}} \\
    t_{\mix}(P) &\leq O\wrapp{\frac{1}{\kappa(P)} \log \log \frac{1}{\pi_{\min}}} \tag{\cite{DS96}}
\end{align*}
\end{proposition}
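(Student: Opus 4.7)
The plan is to prove each of the three inequalities via the same high-level scheme: convert the relevant functional inequality into exponential decay of a suitable ``distance'' of $\mu P^{t}$ to $\pi$, and then convert the decay back to total variation, paying for the initial distance in terms of $\pi_{\min}$. By reversibility one may freely pass between the discrete chain $P$ and the continuous-time heat semigroup $H_{t} = e^{-t(I - P)}$, since both share the same Dirichlet form $\mathcal{E}_{P}$; the discretization loses at most a constant factor in each case.

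For the spectral gap bound, set $f_{t} = \mu P^{t} / \pi$, so that $\mathrm{Var}_{\pi}(f_{t}) = \chi^{2}(\mu P^{t} \,\|\, \pi)$. The Poincar\'e inequality $\mathrm{Var}_{\pi}(f) \leq \lambda(P)^{-1} \mathcal{E}_{P}(f, f)$ together with reversibility yields the contraction $\mathrm{Var}_{\pi}(f_{t}) \leq (1 - \lambda(P))^{2t}\,\mathrm{Var}_{\pi}(f_{0})$. Cauchy--Schwarz then gives $\|\mu P^{t} - \pi\|_{\TV} \leq \tfrac{1}{2}\sqrt{\chi^{2}(\mu P^{t} \,\|\, \pi)}$, and for any starting distribution $\mu$ one has $\chi^{2}(\mu \,\|\, \pi) \leq 1/\pi_{\min}$. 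Requiring the right-hand side to drop below $1/4$ produces the claimed $t_{\mix}(P) = O(\lambda(P)^{-1} \log(1/\pi_{\min}))$.

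For the modified log-Sobolev inequality, I would run the same scheme with entropy in place of variance. Setting $H(t) = \Ent_{\pi}(f_{t})$ for the continuous-time semigroup, a direct differentiation gives $\tfrac{d}{dt}H(t) = -\mathcal{E}_{P}(f_{t}, \log f_{t})$, so the MLSI $\mathcal{E}_{P}(f, \log f) \geq \rho(P) \Ent_{\pi}(f)$ yields $H(t) \leq e^{-\rho(P) t} H(0)$. Pinsker's inequality converts $H(t)$ back to $\|\cdot\|_{\TV}$, and $H(0) = D_{\mathrm{KL}}(\mu \,\|\, \pi) \leq \log(1/\pi_{\min})$, so reducing $H$ to a constant takes time $O(\rho(P)^{-1} \log\log(1/\pi_{\min}))$, as detailed in \cite{BT03}.

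For the standard log-Sobolev bound, I would invoke Gross's hypercontractivity theorem: the LSI with constant $\kappa(P)$ is equivalent to $\|H_{t} f\|_{L^{q}(\pi)} \leq \|f\|_{L^{p}(\pi)}$ whenever $q - 1 \leq (p - 1)\, e^{4\kappa(P) t}$. Starting from $f_{0} = \mu/\pi$ with $\|f_{0}\|_{L^{p}(\pi)} \leq \pi_{\min}^{1/p - 1}$, one picks $p$ close to $1$ so that the initial $L^{p}$ norm is only polylogarithmic in $1/\pi_{\min}$, runs the semigroup for time $t = O(\kappa(P)^{-1} \log\log(1/\pi_{\min}))$ until the hypercontractive trajectory drives $q$ up to $2$, and concludes $\|f_{t} - 1\|_{L^{2}(\pi)} = O(1)$; the TV bound then follows as in the first case. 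The main obstacle across all three is careful bookkeeping of the continuous-to-discrete passage together with optimization of the $(p,q)$-trajectory in the LSI case, but since each inequality is classical and spelled out in the cited references \cite{LPW17,BT03,DS96}, the cleanest presentation in the paper is simply to appeal to them.
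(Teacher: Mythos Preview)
The paper does not prove this proposition at all; it is stated as a preliminary fact with the three inequalities tagged by their respective references \cite{LPW17,BT03,DS96} and no further argument given. Your sketch of the classical Poincar\'e/MLSI/LSI-to-mixing arguments is accurate, and your closing recommendation---to simply appeal to the cited references---is exactly what the paper does.
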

We note that bounds on the standard and modified log-Sobolev constants also yield sub-Gaussian concentration estimates \cite{Goe04, Sam05, BLM16}, but we will not need these in our applications.

\subsection{Products of Weighted Simplicial Complexes}
Given two pure simplicial complexes $X,Y$ of dimensions $\dimension_{X},\dimension_{Y}$ respectively with disjoint ground sets, we may form another pure simplicial complex $Z$ of dimension-$d_Z$ for $\dimension_Z \coloneqq \dimension_X+ \dimension_Y+1$ called the product $X \times Y$ of $X,Y$ by taking the ground set of $X \times Y$ to be the disjoint union of the ground sets of $X,Y$, taking the facets of $X \times Y$ to be of the form $\tau \cup \sigma$, where $\tau,\sigma$ are facets of $X,Y$ respectively, and then taking downwards closure. If $\pi_{X, \dimension_X},\pi_{Y, \dimension_Y}$ are distributions on the facets of $X,Y$ respectively, we then form corresponding product distribution $\pi_{Z, \dimension_Z} = \pi_{X, \dimension_X} \times \pi_{Y, \dimension_Y}$ on the facets of $X \times Y$ by taking $\pi_{Z, \dimension_Z}(\tau \cup \sigma) = \pi_{X, \dimension_X}(\tau)\cdot\pi_{Y, \dimension_Y}(\sigma)$ for facets $\tau,\sigma$ of $X,Y$ respectively.

\begin{lemma}\label{lem:productlocalwalks}
Given a  weighted simplicial complex $(Z, \pi_{Z, \dimension_Z}) = (X \times Y, \pi_{X, \dimension_X} \times \pi_{Y, \dimension_Y})$, we may decompose $P_{Z,\emptyset}$ into $P_{X,\emptyset},P_{Y,\emptyset}$ via the formula
\begin{align*}
    P_{Z} - \frac{\dimension_Z+1}{\dimension_Z}\mathbf{1}\pi_{Z}^{\top} = \begin{bmatrix} \frac{\dimension_X}{\dimension_{Z} }\wrapp{P_{X} - \frac{\dimension_{X}+1}{\dimension_{X}} \mathbf{1}\pi_{X}^{\top}} & 0 \\ 0 & \frac{\dimension_Y}{\dimension_{Z} }\wrapp{P_{Y} - \frac{\dimension_{Y}+1}{\dimension_{Y}} \mathbf{1} \pi_{Y}^{\top}} \end{bmatrix}
\end{align*}
where we put a zero matrix  for $P_X$ if $d_X=0$ (and similarly for $P_Y$).
In particular, if $M_{X},M_{Y}$ are symmetric matrices satisfying
$\Pi_{X} P_{X} - \frac{\dimension_{X}+1}{\dimension_{X} }\pi_{X}\pi_{X}^{\top} \preceq M_{X}$  and $\Pi_{Y} P_{Y} - \frac{\dimension_{Y}+1}{\dimension_{Y}}\pi_{Y}\pi_{Y}^{\top} \preceq M_{Y}$ (we let $M_X = 0$ if $\dimension_X= 0$, and similarly $M_Y = 0$ if $\dimension_Y = 0$), then
\begin{align*}
    \Pi_{Z} P_{Z} - \frac{\dimension_Z+1}{\dimension_Z} \pi_{Z}\pi_{Z}^{\top} \preceq \begin{bmatrix} \frac{\dimension_{X}(\dimension_{X}+1) }{\dimension_Z(\dimension_{Z}+1)}M_{X} & 0 \\ 0 & \frac{\dimension_Y(\dimension_Y+1)}{\dimension_Z(\dimension_Z+1)} M_{Y} \end{bmatrix}
\end{align*}
\end{lemma}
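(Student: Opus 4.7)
My plan is to reduce everything to explicit block computations, exploiting the disjoint ground sets of $X$ and $Y$ and the product structure of $\pi_{Z,d_Z}$. Throughout, I will view every matrix on $Z(0)\times Z(0)$ as a $2\times 2$ block matrix indexed by $X(0)\sqcup Y(0)$.

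First, I will compute each of the relevant objects block by block. From $\pi_Z(x) = \tfrac{1}{d_Z+1}\Pr_{\sigma_Z}[x\in\sigma_Z]$ and the product structure, if $x\in X(0)$ then $\Pr_{\sigma_Z}[x\in\sigma_Z]=\Pr_{\sigma_X}[x\in\sigma_X]=(d_X+1)\pi_X(x)$, so $\pi_Z|_{X(0)} = \tfrac{d_X+1}{d_Z+1}\pi_X$, and analogously for $Y$. Hence $\Pi_Z$ is block diagonal with $X$-block $\tfrac{d_X+1}{d_Z+1}\Pi_X$ and $Y$-block $\tfrac{d_Y+1}{d_Z+1}\Pi_Y$. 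For $P_Z$, I will use the identity $P_Z(x,y) = \tfrac{1}{d_Z}\Pr_{\sigma_Z}[y\in\sigma_Z\mid x\in\sigma_Z]$ from \eqref{eq:localwalkdef} (with $\codim(\emptyset)=d_Z+1$). Independence of $\sigma_X,\sigma_Y$ then yields, for distinct $x,y\in X(0)$, $P_Z(x,y)=\tfrac{d_X}{d_Z}P_X(x,y)$; for $x\in X(0),y\in Y(0)$, $P_Z(x,y)=\tfrac{d_Y+1}{d_Z}\pi_Y(y)$; and symmetrically for the remaining cases.

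Next, I will verify that $P_Z - \tfrac{d_Z+1}{d_Z}\mathbf{1}\pi_Z^\top$ is block diagonal. The off-diagonal $X\times Y$ block of $\mathbf{1}\pi_Z^\top$ is $\tfrac{d_Y+1}{d_Z+1}\mathbf{1}\pi_Y^\top$, and multiplying by $\tfrac{d_Z+1}{d_Z}$ gives exactly $\tfrac{d_Y+1}{d_Z}\mathbf{1}\pi_Y^\top$, cancelling the $X\times Y$ block of $P_Z$. For the $X\times X$ diagonal block, a direct subtraction turns $\tfrac{d_X}{d_Z}P_X(x,y) - \tfrac{d_X+1}{d_Z}\pi_X(y)$ into $\tfrac{d_X}{d_Z}\bigl(P_X - \tfrac{d_X+1}{d_X}\mathbf{1}\pi_X^\top\bigr)(x,y)$ (including the diagonal $y=x$, where $P_Z(x,x)=P_X(x,x)=0$). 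The edge case $d_X=0$ is handled by working with the expanded form $\tfrac{d_X}{d_Z}P_X - \tfrac{d_X+1}{d_Z}\mathbf{1}\pi_X^\top$, which at $d_X=0$ with $P_X:=0$ correctly yields $-\tfrac{1}{d_Z}\mathbf{1}\pi_X^\top$; this matches the direct computation of $(P_Z - \tfrac{d_Z+1}{d_Z}\mathbf{1}\pi_Z^\top)|_{XX}$ when every facet of $X$ is a single point. This establishes the first displayed equation.

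Finally, to deduce the second spectral inequality, I will left-multiply by $\Pi_Z$, which, being block diagonal, preserves the block structure. The $X\times X$ block of $\Pi_Z P_Z - \tfrac{d_Z+1}{d_Z}\pi_Z\pi_Z^\top = \Pi_Z\bigl(P_Z - \tfrac{d_Z+1}{d_Z}\mathbf{1}\pi_Z^\top\bigr)$ becomes
\[
\tfrac{d_X+1}{d_Z+1}\Pi_X \cdot \tfrac{d_X}{d_Z}\bigl(P_X - \tfrac{d_X+1}{d_X}\mathbf{1}\pi_X^\top\bigr) = \tfrac{d_X(d_X+1)}{d_Z(d_Z+1)}\bigl(\Pi_X P_X - \tfrac{d_X+1}{d_X}\pi_X\pi_X^\top\bigr) \preceq \tfrac{d_X(d_X+1)}{d_Z(d_Z+1)}M_X,
\]
invoking the hypothesis; similarly for $Y$. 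Since these spectral bounds on block-diagonal symmetric matrices combine via the block-diagonal Loewner inequality $\operatorname{diag}(A_1,A_2)\preceq\operatorname{diag}(B_1,B_2)$ whenever $A_i\preceq B_i$, the overall bound follows. The edge case $d_X=0$ is immediate: the $X\times X$ block reduces to $-\tfrac{1}{d_Z(d_Z+1)}\pi_X\pi_X^\top \preceq 0 = \tfrac{d_X(d_X+1)}{d_Z(d_Z+1)}M_X$. The main ``obstacle'' is purely bookkeeping---making sure the constants $\tfrac{d_X+1}{d_Z+1}$, $\tfrac{d_X}{d_Z}$, and $\tfrac{d_Z+1}{d_Z}$ combine correctly and that the $d_X=0$ (or $d_Y=0$) degenerate cases are handled uniformly by treating the stated formula as its expansion rather than a literal product.
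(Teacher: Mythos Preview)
Your proposal is correct and follows essentially the same approach as the paper: verify the identity entry by entry using the independence of the $X$- and $Y$-components of a facet of $Z$, then obtain the second inequality by left-multiplying by the block-diagonal $\Pi_Z$. Your treatment is in fact more careful than the paper's in handling the $d_X=0$ degenerate case (where the displayed formula must be read as its formal expansion) and in spelling out the passage from the first identity to the Loewner bound.
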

\begin{proof}
Let $x \in X(0)$ and $y \in Y(0)$. 
Then, 
    \begin{align*}
           P_Z(x, y) \underset{\cref{eq:localwalkdef}}{=} \frac{1}{d_Z} \Pr_{\sigma \sim \pi_{Z, \dimension_Z}}[y \in \sigma \mid x \in \sigma ] 
         \underset{\text{By independence}}{=} \frac{1}{d_Z} \Pr_{\sigma \sim \pi_{Z, \dimension_Z}}[y \in \sigma]   \underset{\cref{eq:pidef}}{=} \frac{d_Z+1}{d_Z} \pi_Z(y).
    \end{align*}
    Therefore $(P_{Z} - \frac{\dimension_Z+1}{\dimension_Z}\mathbf{1}\pi_{Z}^{\top}) (x, y) = 0 $ for all $x \in X(0) , y \in Y(0)$ ( similarly $(P_Z - \frac{\dimension_Z+1}{\dimension_Z}\mathbf{1}\pi_{Z}^{\top}) (y, x) = 0 $). For $x,x' \in X(0)$, if $x \neq x'$, by   \cref{eq:localwalkdef} we have
      \begin{align*}
           P_Z(x, x')& = \frac{1}{d_Z} \Pr_{\sigma \sim \pi_{Z, \dimension_Z}}[x' \in \sigma \mid x \in \sigma ]  
         = \frac{1}{d_Z} \Pr_{\sigma \sim \pi_{X, \dimension_X}}[x' \in \sigma | x\in \sigma] =  \frac{d_X}{d_Z}  P_X(x, x').
    \end{align*}
If $x = x'$, then $P_Z(x, x) =  \frac{d_X}{d_Z}  P_X(x, x) = 0 $ by definition of the local random walk. Note that, if $d_X = 0$,  $P_X(x, x')$ is defined to be  0 by  the statement of   the theorem. Furthermore, for $x,x' \in X(0)$,  by  \cref{eq:pidef}
\begin{align*}
          \mathbf{1} \pi_Z^\top(x, x') = \frac{1}{d_Z+1} \Pr_{\sigma \sim \pi_{Z, \dimension_Z}}[x' \in \sigma]  
         = \frac{1}{d_Z+1} \Pr_{\sigma \sim \pi_{X, \dimension_X}}[x' \in \sigma] =  \frac{d_X +1}{d_Z+1}  \mathbf{1} \pi_Z^\top (x, x').
    \end{align*}
Putting these together, 
$ (P_{Z} - \frac{\dimension_Z+1}{\dimension_Z} \pi_{Z}\pi_{Z}^{\top}) (x, x') = \left( \frac{\dimension_X}{\dimension_{Z} }(P_{X} - \frac{\dimension_{X}+1}{\dimension_{X}} \mathbf{1}\pi_{X}^{\top})\right)(x, x')$ for all $x, x' \in X(0)$. One can see that a similar statement holds for the lower-right block.  This finishes the proof of the first part. To get the second part, it is enough to note that when $d_X = 0$, on the $X(0) \times X(0)$ block of  $P_{Z} - \frac{\dimension_Z+1}{\dimension_Z} \mathbf{1} \pi_{Z}^{\top}$ we have  $- \frac{\dimension_X+1}{\dimension_Z}  \mathbf{1}\pi_{X}^{\top} \preceq 0$.
\end{proof}
\par A natural example of product of simplicial complexes is the vertex-coloring complex of a disconnected graph. Say $G=(V,E)$ is a graph with $n$ vertex and  $\ell$ connected components  $G[U_1],\dots,G[U_\ell]$ and associated  complexes  $X_{1},\dots X_{\ell}$.  If $(X, \pi_{n-1})$ is the complex associated with $G$ and $\pi_{n-1}$ is the uniform distribution over its facets, then we can write $(X, \pi_{n-1}) =(X_{1}, \mu_1) \times\dots\times (X_{l}, \mu_\ell)$, where $\mu_{i}$  is the uniform distribution over facets of $X_i$.
Suppose we have associated a matrix $A_\tau\in \R^{X(0)\times X(0)}$ to any non-empty face $\tau$  of co-dimension at least 2 and assume that for any $1\le i \leq \ell$, when  $\tau_{-i}$ and $\sigma_{-i}$ are  two arbitrary  colorings of all connected components except $i$, then   $A_{\tau_{-i}} = A_{\sigma_{-i}}$.  We associate a block-diagonal matrix 
$$f_{\times}(X, \{A_\tau\}_{\emptyset\subsetneq \tau\in X(\leq n-3)}):=\sum_{1 \leq i\leq \ell: |U_i| \neq 1} A_{\tau_{-i}}.$$
When $X$ is the edge-coloring complex of a graph $G$, it is the vertex-coloring complex of the line graph of $G$, therefore  $f_{\times}(X, \{A_\tau\}_{\emptyset\subsetneq \tau\in X(\leq n-3)})$ is given by the above definition.  

\subsection{Garland's Method}
We will need the following simple facts, which follow simply by applying the Law of Total Probability appropriately and using the definition of the local walks $P$ and local distributions $\pi$. Nearly identical equations were first observed and found to be useful by Garland \cite{Gar73} in the context of understanding cohomology of simplicial complexes. They also lie at the heart of understanding expansion phenomena, in particular local spectral expansion, in simplicial complexes \cite{Opp18, KO18}.
\begin{lemma}[\cite{Opp18}]\label{lem:PiLocalization}
Given  a weighted simplicial complex $(X, \pi_\dimension)$, we may decompose $\Pi P$ as
\begin{align*}
    \Pi P = \E_{x \sim \pi} \Pi_{x} P_{x}
\end{align*}
\end{lemma}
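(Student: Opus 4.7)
The identity is a direct consequence of the law of total probability; my plan is to verify it entrywise in $\mathbb{R}^{X(0) \times X(0)}$ by expanding every factor via its defining probability.

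First, I observe that both sides have vanishing diagonal: the local walk $P$ has zero diagonal by the description following \cref{eq:localwalkdef}, and so does each $P_{x}$ appearing on the right-hand side. Hence both $\Pi P$ and $\E_{x \sim \pi}\Pi_{x} P_{x}$ have zero diagonal, so only off-diagonal entries need to be checked.

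For distinct $y, z \in X(0)$, I would rewrite the left-hand side using \cref{eq:pidef} and \cref{eq:localwalkdef}:
\begin{align*}
    \Pi P(y, z) \;=\; \pi(y)\, P(y, z) \;=\; \frac{1}{d+1} \Pr_{\sigma \sim \pi_{d}}[y \in \sigma] \cdot \frac{1}{d} \Pr_{\sigma \sim \pi_{d}}[z \in \sigma \mid y \in \sigma] \;=\; \frac{1}{d(d+1)} \Pr_{\sigma \sim \pi_{d}}[y, z \in \sigma].
\end{align*}
On the right-hand side, for each $x \in X(0)$, the same definitions give
\begin{align*}
    \pi(x)\, \Pi_{x} P_{x}(y, z) \;=\; \pi(x)\, \pi_{x}(y)\, P_{x}(y, z) \;=\; \frac{1}{d(d-1)(d+1)} \Pr_{\sigma \sim \pi_{d}}[x, y, z \in \sigma],
\end{align*}
with the convention that this entry is zero unless $x, y, z$ lie in a common facet, which is automatically enforced by the joint probability. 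Summing over $x \in X(0)$ and using that a facet $\sigma$ containing $y$ and $z$ has exactly $d - 1$ other elements, I obtain $\sum_{x} \Pr[x, y, z \in \sigma] = (d - 1) \Pr[y, z \in \sigma]$, and the combinatorial factors collapse to $\frac{1}{d(d+1)}\Pr[y, z \in \sigma]$, matching the left-hand side.

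The only real step requiring attention is the bookkeeping of the normalizing constants $\frac{1}{d+1}, \frac{1}{d}, \frac{1}{d-1}$ coming from \cref{eq:pidef} and \cref{eq:localwalkdef} for $\emptyset$ and for singleton links; this is routine and presents no conceptual obstacle. Since the identity follows by the exact same joint-probability computation that appears throughout Garland's method, I expect the paper to give essentially the above two-line derivation (or to cite \cite{Opp18}) with little additional commentary.
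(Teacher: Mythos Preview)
Your proposal is correct and follows essentially the same entrywise verification as the paper: both arguments note the vanishing diagonal, expand the $(y,z)$-entry via \cref{eq:pidef} and \cref{eq:localwalkdef}, and use the law of total probability $\sum_{x \neq y,z} \Pr[x,y,z \in \sigma] = (d-1)\Pr[y,z \in \sigma]$ to match the combinatorial factors. The only minor imprecision is your remark that the case $x \in \{y,z\}$ is ``enforced by the joint probability''; it is really enforced by the support of $\Pi_{x}P_{x}$ (since $y,z \notin X_{x}(0)$ when $x \in \{y,z\}$), but this does not affect the argument.
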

\begin{proof}
We check entry by entry, applying the Law of Total Probability along the way. Note that the matrices on both sides have zero diagonal and so it suffices to check equality for the $(y,z)$-entry, where $y \neq z$.
\begin{align*}
    \E_{x \sim \pi} (\Pi_{x}P_{x})(y,z) &= \sum_{x \neq y,z} \pi(x) \Pi_{x}(y) P_{x}(y, z) \\
    &=\sum_{x\neq y,z}\cdot \frac{1}{\dimension
    +1}\Pr_{\sigma \sim \pi_{\dimension}}[x \in \sigma]\cdot \pi_x(y)\cdot \frac{\pi_x(\{y,z\})}{2\pi_x(y)}\tag{\cref{eq:pidef} and \cref{eq:localwalkdef}}\\
    &= \sum_{x \neq y,z} \frac{1}{\dimension
    +1}\Pr_{\sigma \sim \pi_{\dimension}}[x \in \sigma] \cdot \frac{1}{\dimension(\dimension-1)}\Pr_{\sigma \sim \pi_{\dimension}}[y,z \in \sigma \mid x \in \sigma] \tag{\cref{eq:pidef}} \\
    &= \frac{1}{\dimension(\dimension+1)} \cdot \frac{1}{\dimension-1} \sum_{x \neq y,z} \Pr_{\sigma \sim \pi_{\dimension}}[x,y,z \in \sigma] \tag{Rearranging} \\
    &= \frac{1}{\dimension(\dimension+1)} \Pr_{\sigma \sim \pi_{\dimension}}[y,z \in \sigma] \tag{Law of Total Probability} \\
    &= \pi(y)P(y, z) \tag{\cref{eq:pidef} and \cref{eq:localwalkdef}} \\
    &= (\Pi P)(y,z)
\end{align*}
\end{proof}
\begin{lemma}[\cite{Opp18}]\label{lem:OppenheimLoss}
Given a weighted simplicial complex $(X, \pi_\dimension)$, we may decompose $\Pi P^{2}$ as
\begin{align*}
    \Pi P^{2} = \E_{x \sim \pi} \pi_{x}\pi_{x}^{\top}
\end{align*}
\begin{proof}
The main observation is that the rows of $P$ are precisely the vectors $\pi_{x}$, and that the rows of $\Pi P$ are precisely the vectors $\pi(x)\pi_{x}$. The claim immediately follows.
\end{proof}
\end{lemma}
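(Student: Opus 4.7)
The plan is to exploit a clean row decomposition of the local walk $P = P_{\emptyset}$ and then reduce the claim to a single rank-one sum. From \cref{eq:localwalkdef}, for distinct $x,y \in X(0)$ we have $P(x,y) = \pi_{x}(y)$, while $P(x,x) = 0$. Viewing each $\pi_{x}$ as a vector in $\R^{X(0)}$ supported on $X_{x}(0)$ (so that $(\pi_{x})_{x} = 0$), this exactly says that the $x$-th row of $P$ is $\pi_{x}^{\top}$; equivalently, $P = \sum_{x \in X(0)} e_{x}\,\pi_{x}^{\top}$, where $e_{x}$ denotes the standard basis vector indexed by $x$.

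Multiplying on the left by $\Pi$ rescales the $x$-th row by $\pi(x)$, giving
\[
\Pi P \;=\; \sum_{x \in X(0)} \pi(x)\, e_{x}\,\pi_{x}^{\top}.
\]
Since $P$ is reversible with respect to $\pi$, the matrix $\Pi P$ is symmetric, so taking the transpose yields the equivalent column-wise expression $\Pi P = \sum_{x} \pi(x)\,\pi_{x}\,e_{x}^{\top}$. Right-multiplying by $P$ and using once more that $e_{x}^{\top}P$ extracts row $x$ of $P$, namely $\pi_{x}^{\top}$, then yields
\[
\Pi P^{2} \;=\; \sum_{x \in X(0)} \pi(x)\,\pi_{x}\,(e_{x}^{\top} P) \;=\; \sum_{x \in X(0)} \pi(x)\,\pi_{x}\,\pi_{x}^{\top} \;=\; \E_{x \sim \pi}\, \pi_{x}\pi_{x}^{\top}.
\]

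There is no real obstacle here; the identity is essentially a repackaging of reversibility with the definition of the local walk, and the only thing to notice is that the rows of $P$ are exactly the local distributions $\pi_{x}$. As a sanity check, one could verify the identity entrywise in the spirit of the proof of \cref{lem:PiLocalization}: for any $y,z \in X(0)$,
\[
(\Pi P^{2})(y,z) \;=\; \pi(y) \sum_{x} P(y,x)\, P(x,z) \;=\; \sum_{x} \pi(x)\, \pi_{x}(y)\,\pi_{x}(z),
\]
where the second equality applies reversibility $\pi(y)\pi_{y}(x) = \pi(x)\pi_{x}(y)$ inside the sum, and the right-hand side is precisely the $(y,z)$-entry of $\E_{x \sim \pi} \pi_{x}\pi_{x}^{\top}$.
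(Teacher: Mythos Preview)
Your proof is correct and follows essentially the same approach as the paper: both hinge on the observation that the $x$-th row of $P$ is $\pi_{x}^{\top}$ and hence the $x$-th row of $\Pi P$ is $\pi(x)\pi_{x}^{\top}$, from which the identity follows by one more multiplication by $P$ (using reversibility). Your write-up is simply a more explicit version of the paper's one-line argument, with the added entrywise sanity check.
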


\subsection{Local-to-Global Theorems}\label{subsec:localtoglobal}
As alluded to earlier, one of the beautiful insights of \cite{DK17,KO18} is that local spectral expansion implies quantitative bounds on the spectral gap of the down-up walk.
\begin{theorem}[\cite{AL20}]\label{thm:localtoglobal}
Let $(X,\pi_{\dimension})$ be a $(\gamma_{-1},\dots,\gamma_{\dimension-2})$-local spectral expander. Then the down-up walk which samples from $\pi_{d}$ has spectral gap lower bounded by
\begin{align*}
    \lambda(P_{\dimension}^{\vee}) \geq \frac{1}{\dimension} \prod_{j=-1}^{\dimension-2} (1 - \gamma_{j})
\end{align*}
\end{theorem}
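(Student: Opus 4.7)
The plan is to prove the equivalent Poincar\'{e} inequality $\operatorname{Var}_{\pi_\dimension}(f) \leq \frac{\dimension}{\prod_{j=-1}^{\dimension-2}(1-\gamma_j)}\cdot \mathcal{E}_{P_\dimension^\vee}(f,f)$ by induction on $\dimension$, following the local-to-global philosophy. Since the spectral gap of a reversible chain equals the reciprocal of its sharp Poincar\'{e} constant, this bound is equivalent to the claim. The base case $\dimension = 1$ is essentially by definition: a facet of $X$ is a pair, and $P_1^\vee$ has the same non-zero spectrum as the up-down walk on $X(0)$, which (for a $1$-dimensional complex) coincides with the local walk $P_\emptyset$; hypothesis gives $\lambda_2(P_\emptyset) \leq \gamma_{-1}$, so the Poincar\'{e} constant is $\tfrac{1}{1 - \gamma_{-1}}$, matching the formula at $\dimension = 1$.

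For the inductive step I would condition on a random vertex $x \in X(0)$ via the Law of Total Variance:
\begin{align*}
\operatorname{Var}_{\pi_\dimension}(f) \;=\; \operatorname{Var}_{\pi_0}(\bar f) \;+\; \E_{x \sim \pi_0}\!\left[\operatorname{Var}_{\pi_{x,\dimension-1}}(f_x)\right],
\end{align*}
where $\bar f \in \R^{X(0)}$ is the vertex-averaged function $\bar f(x) = \E[f(\sigma) \mid \sigma \ni x, \sigma \sim \pi_\dimension]$ and $f_x \in \R^{X_x(\dimension-1)}$ is the restriction of $f$ to facets containing $x$, viewed as a function on the facets of the link $X_x$. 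The first (across-link) term is controlled using $\lambda_2(P_\emptyset) \leq \gamma_{-1}$, yielding $\operatorname{Var}_{\pi_0}(\bar f) \leq \frac{1}{1-\gamma_{-1}}\mathcal{E}_{P_\emptyset}(\bar f, \bar f)$. The second (within-link) term is controlled by applying the induction hypothesis to each link $X_x$, which is a $(\dimension-1)$-dimensional complex with local-spectral-expansion parameters $(\gamma_0, \ldots, \gamma_{\dimension-2})$ inherited from $X$; this gives $\operatorname{Var}_{\pi_{x,\dimension-1}}(f_x) \leq \frac{\dimension-1}{\prod_{j=0}^{\dimension-2}(1-\gamma_j)}\mathcal{E}_{P_{X_x,\dimension-1}^\vee}(f_x, f_x)$.

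The final step is to assemble these two pieces into a single bound on $\mathcal{E}_{P_\dimension^\vee}(f,f)$. This is where Garland's identity (\cref{lem:PiLocalization}) and its Dirichlet-form analogue enter: by writing $P_\dimension^\vee = U_\dimension D_\dimension$ with $D_\dimension$ the stochastic down-averaging $\R^{X(\dimension)}\to\R^{X(\dimension-1)}$ and $U_\dimension$ its $\pi$-weighted adjoint, one obtains a decomposition that splits $\mathcal{E}_{P_\dimension^\vee}(f,f)$ into a piece matching $\mathcal{E}_{P_\emptyset}(\bar f, \bar f)$ and a piece matching $\E_x[\mathcal{E}_{P_{X_x,\dimension-1}^\vee}(f_x, f_x)]$ with specific combinatorial weights $\tfrac{1}{\dimension}$ and $\tfrac{\dimension-1}{\dimension}$ respectively, reflecting the fact that the down-up walk removes one of the $\dimension+1$ elements of a facet uniformly at random. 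Combining the two sub-bounds with this weighting yields exactly the target Poincar\'{e} constant $\frac{\dimension}{\prod_{j=-1}^{\dimension-2}(1-\gamma_j)}$.

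The main obstacle is precisely this final assembly step. The two halves of the Law of Total Variance come with different prefactors, $\frac{1}{1-\gamma_{-1}}$ and $\frac{\dimension-1}{\prod_{j=0}^{\dimension-2}(1-\gamma_j)}$; to fold them into a single Poincar\'{e} constant of the claimed form, one must verify a matching recursion of the shape $C_\dimension = \tfrac{\dimension/(\dimension-1)}{1-\gamma_{-1}} C_{\dimension-1}$, which requires the Garland-style decomposition of $\mathcal{E}_{P_\dimension^\vee}(f,f)$ to split with the precise weights $\tfrac{1}{\dimension}$ and $\tfrac{\dimension-1}{\dimension}$ noted above. Any mis-tracking of these combinatorial normalizations produces the wrong prefactor. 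Once this algebraic matching is in place the induction closes and the product formula falls out.
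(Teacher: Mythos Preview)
The paper does not prove this theorem; it is quoted from \cite{AL20} and used as a black box, so there is no in-paper argument to compare against. Your inductive variance-decomposition strategy is indeed the method of \cite{AL20}, so in spirit you are reconstructing the cited proof.

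There is, however, a concrete error in your base case. For a $1$-dimensional complex the up-down walk on $X(0)$ does \emph{not} coincide with the local walk $P_\emptyset$; it is the lazy walk $\tfrac{1}{2}(I+P_\emptyset)$, since after going up from $x$ to a pair $\{x,y\}$ one returns to $x$ with probability $\tfrac12$. Consequently $\lambda(P_1^\vee)=\tfrac12\bigl(1-\lambda_2(P_\emptyset)\bigr)$, not $1-\lambda_2(P_\emptyset)$, and the Poincar\'e constant at $\dimension=1$ is $\tfrac{2}{1-\gamma_{-1}}$ rather than $\tfrac{1}{1-\gamma_{-1}}$. This factor is not cosmetic: the down step removes one of the $\dimension+1$ elements of a facet, so the combinatorial weights in the Dirichlet-form assembly are governed by $\dimension+1$, not $\dimension$, and carrying this through the recursion gives a final Poincar\'e constant of $\tfrac{\dimension+1}{\prod_{j=-1}^{\dimension-2}(1-\gamma_j)}$. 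The mismatch with the $\tfrac{1}{\dimension}$ prefactor in the theorem as stated is either a harmless transcription slip from \cite{AL20} or a convention difference; in any case your induction closes only after you correct the laziness in the base case and adjust the weights from $\tfrac{1}{\dimension},\tfrac{\dimension-1}{\dimension}$ to $\tfrac{1}{\dimension+1},\tfrac{\dimension}{\dimension+1}$.
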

Analogous results have also been proved for the decay of entropy \cite{CLV21, GM20, AASV21}. We state this result here since our main technical result \cref{thm:matrixtrickledowninductive} is a general method to obtain local spectral expansion for any weighted simplicial complex.

\cref{thm:localtoglobal}\label{subsec:local-to-global} has been used successfully in several prior works \cite{ALOV19ii, ALO20, CLV20, CGSV21, FGYZ21,  ALOVV21}  to establish polynomial time mixing for various dynamics. However, the exponent of these running times are typically large, depending sensitively on the constants in the $\gamma_{j}$. Recently, it was shown that for weighted simplicial complexes arising from spin systems on bounded-degree graphs, we can do significantly better. We will need the following to establish optimal mixing times in our applications. We state it specifically for colorings, as we do not analyze any other spin systems in this paper.
\begin{theorem}[\cite{CLV21, BCCPSV21}]\label{thm:boundeddegspinlocaltoglobal}
Let $(X,\pi_{n-1})$ be a weighted simplicial complex arising from the uniform distribution over proper list-colorings of a graph $G=(V,E)$ with $|V| = n$ and maximum degree $\Delta$. If for some constant $C$ independent of $n$, $(X,\pi_{n-1})$ is a $(\gamma_{-1},\dots,\gamma_{n-3})$-local spectral expander with $\gamma_{k} \leq \frac{C}{n-k-1}$ for all $k$, then the spectral gap, standard and modified log-Sobolev constants are all $\Omega_{C,\Delta}(1/n)$. In particular, the Glauber dynamics mixes in $O_{C,\Delta}(n\log n)$ steps.
\end{theorem}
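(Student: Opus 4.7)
The plan is to derive an approximate tensorization of entropy from the local spectral expansion hypothesis together with the bounded-degree structure of proper list-colorings, and then convert this into the stated spectral-gap, MLSI, LSI and mixing-time bounds via standard machinery.

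First, I would translate the LSE assumption $\gamma_k \leq C/(n-k-1)$ into $O(C)$-spectral independence at every link: by \cref{eq:localwalkdef}, the off-diagonal entries of $P_\tau$, up to the normalization $1/(\codimfunc(\tau)-1)$, are exactly the normalized pairwise covariance/influence quantities used in the spectral independence literature, so $\gamma_k \leq C/(n-k-1)$ says that the pairwise influence matrix at every face of every co-dimension has operator norm bounded by an absolute constant. Separately, I would verify that the proper list-coloring distribution is $b$-marginally bounded with $b = b(\Delta) > 0$: since each vertex has at most $\Delta$ neighbors and the list at $v$ is large enough to leave a color available, a one-color-at-a-time counting argument shows that the conditional probability of any color at $v$ given any feasible pinning on $V\setminus\{v\}$ is at least $1/q$ times a constant depending only on $\Delta$. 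The same lower bound persists in every link because a link of the coloring complex is the coloring complex of the subgraph obtained by pinning the corresponding vertex--color pairs.

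Next, I would apply the main technical theorem of \cite{CLV21}, refined in \cite{BCCPSV21}: for any distribution over a discrete product space that is spectrally independent on every link and marginally bounded on every link, both uniformly, the Glauber dynamics satisfies the modified log-Sobolev inequality $\rho(P) \geq c(C,\Delta)/n$. The proof of that implication proceeds by analyzing the heat-bath block dynamics that resamples a $(1-\theta)$-fraction of vertices at a time (the field dynamics / $\theta$-down-up walk for an appropriate $\theta = \theta(\Delta)$); one shows by a local-to-global entropy factorization, using spectral independence at the links induced by pinning the complementary block, that this block chain contracts entropy by a constant factor per step. Marginal boundedness then lets one compare a single block step to $\Theta(n)$ Glauber steps at a cost that is absorbed into the constant, yielding the claimed MLSI for the Glauber dynamics. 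The standard log-Sobolev constant $\kappa(P) = \Omega_{C,\Delta}(1/n)$ follows because, on a finite state space with marginals bounded below by a constant depending only on $\Delta$, one has $\kappa(P) \geq c(\Delta)\rho(P)$. Feeding these constants into the Bobkov--Tetali and Diaconis--Saloff-Coste inequalities recalled in \cref{subsec:markovchain}, and using $\log \pi_{\min}^{-1} = O(n\log \Delta) = O_\Delta(n)$, gives $t_{\mix}(P) = O_{C,\Delta}(n\log n)$.

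The main obstacle is the middle step: upgrading a second-moment statement (spectral independence at every link) to an all-moments statement (entropy contraction). Without marginal boundedness the implication is known to fail; with it, the argument is delicate and is the central innovation of \cite{CLV21,BCCPSV21}, requiring a careful recursive analysis of the field dynamics together with a comparison back to the single-site Glauber chain. The only graph-theoretic input one needs to check by hand for our setting is the marginal boundedness in every link, which is routine given the list-size assumptions in \cref{thm:glauberedge} and \cref{thm:glaubertrees}.
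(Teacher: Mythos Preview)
The paper does not prove this theorem at all; it is quoted as a black-box result from \cite{CLV21,BCCPSV21} in the preliminaries and used later without further justification. What you have written is therefore not to be compared against a proof in the paper, but is rather a sketch of the argument in those references. As such, your outline is broadly accurate: the passage from local spectral expansion to $O(C)$-spectral independence at every link is correct, the marginal boundedness with a constant depending only on $\Delta$ (and implicitly on $q$, which in all the applications is $O(\Delta)$) is the right auxiliary ingredient, and the route through block/field dynamics and entropy factorization is indeed how \cite{CLV21,BCCPSV21} upgrade spectral independence to entropy contraction for Glauber.

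One technical point in your sketch is not right as written. You derive the standard log-Sobolev constant from the modified one via an inequality ``$\kappa(P)\ge c(\Delta)\,\rho(P)$ whenever marginals are bounded below''. The general ordering goes the other way ($2\kappa\le\rho$), and there is no clean reverse inequality depending only on a marginal lower bound. In \cite{BCCPSV21} the standard LSI is obtained directly from the same block-factorization-of-entropy machinery (approximate tensorization with constant $O_{C,\Delta}(1)$ implies both $\rho$ and $\kappa$ are $\Omega_{C,\Delta}(1/n)$), not by comparison with the MLSI. If you want your sketch to stand on its own, replace that step by invoking the block factorization conclusion for $\kappa$ directly.
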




\section{A General Matrix Trickle-Down Method}\label{sec:matrixtrickledown}
Our goal in this section is to prove \cref{thm:matrixtrickledowninductive} and its extension \cref{thm:fulltechnical}. Towards this, we first elucidate the original ``trickle-down method'' of Oppenheim \cite{Opp18}, which was the beautiful realization that one can bound the second eigenvalue of the local walk $P$ in terms of the second eigenvalues of the local walks $\{P_{x}\}_{x \in X(0)}$. This ``trickle-down'' phenomenon naturally yields an inductive method of bounding the second eigenvalues of all local walks. We formalize this as follows.
\begin{theorem}[Trickle Down Theorem \cite{Opp18}]\label{thm:vanillatrickledown}
Given a weighted simplicial complex $(X, \pi_\dimension)$, suppose the following holds:
\begin{enumerate}
    \item \textbf{Connectivity:} $\lambda_{2}(P) < 1$, i.e. the local walk $P$ is connected/irreducible.
    \item \textbf{Spectral Bound for Links Above:} For some $0 \leq \lambda \leq 1/2$, we have the bound $\lambda_{2}(P_{x}) \leq \lambda$ for all $i \in X(0)$.
\end{enumerate}
Then the local walk $P$ actually satisfies the spectral bound $\lambda_{2}(P) \leq \frac{\lambda}{1 - \lambda}$.
\end{theorem}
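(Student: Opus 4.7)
The plan is to derive an operator-level recursion that relates a Rayleigh quotient of $P$ to a weighted average of Rayleigh quotients of the one-level-deeper links $\{P_{x}\}_{x \in X(0)}$, and then reduce the conclusion to solving a scalar quadratic using the spectral bound assumed for those links.

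Fix any $\phi \in \R^{X(0)}$. By \cref{lem:PiLocalization}, $\phi^{\top}\Pi P \phi = \E_{x \sim \pi}\phi^{\top}\Pi_{x}P_{x}\phi$. For each $x$, the matrix $P_{x}$ is reversible w.r.t.\ $\pi_{x}$ with top eigenvector $\mathbf{1}$ and second eigenvalue at most $\lambda$. Orthogonally decomposing the restriction of $\phi$ to $X_{x}(0)$ in the $\pi_{x}$-weighted inner product as $c_{x}\cdot\mathbf{1} + \psi_{x}$, where $c_{x} = \pi_{x}^{\top}\phi$ and $\langle \psi_{x},\mathbf{1}\rangle_{\pi_{x}}=0$, I get
\[
\phi^{\top}\Pi_{x}P_{x}\phi \;=\; c_{x}^{2} + \psi_{x}^{\top}\Pi_{x}P_{x}\psi_{x} \;\leq\; c_{x}^{2} + \lambda\bigl(\phi^{\top}\Pi_{x}\phi - c_{x}^{2}\bigr) \;=\; \lambda\,\phi^{\top}\Pi_{x}\phi + (1-\lambda)\,c_{x}^{2}.
\]

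Now I average over $x \sim \pi$ and invoke two Garland-style identities. First, $\E_{x \sim \pi}\Pi_{x} = \Pi$: this is immediate from the symmetry $\pi(x)\pi_{x}(y) = \pi(y)\pi_{y}(x)$ together with $\sum_{x \neq y}\pi_{y}(x) = 1$. Second, $c_{x} = \pi_{x}^{\top}\phi = (P\phi)(x)$, so $\E_{x \sim \pi}c_{x}^{2} = \phi^{\top}\bigl(\E_{x \sim \pi}\pi_{x}\pi_{x}^{\top}\bigr)\phi = \phi^{\top}\Pi P^{2}\phi$ by \cref{lem:OppenheimLoss}. Putting these together yields the key recursion
\[
\phi^{\top}\Pi P \phi \;\leq\; \lambda\,\phi^{\top}\Pi\phi \;+\; (1-\lambda)\,\phi^{\top}\Pi P^{2}\phi.
\]

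To close, take $\phi$ to be an eigenvector of $P$ corresponding to the eigenvalue $\mu = \lambda_{2}(P)$, so that $\phi \perp_{\pi}\mathbf{1}$ and $P^{j}\phi = \mu^{j}\phi$. Substituting and dividing through by $\phi^{\top}\Pi\phi > 0$ reduces the operator inequality to the scalar inequality $\mu \leq \lambda + (1-\lambda)\mu^{2}$, which factors as $(1-\lambda)(\mu-1)\bigl(\mu - \tfrac{\lambda}{1-\lambda}\bigr)\geq 0$. Because $\lambda\leq 1/2$ guarantees $\tfrac{\lambda}{1-\lambda} \leq 1$, the two real roots are separated and either $\mu\leq \tfrac{\lambda}{1-\lambda}$ or $\mu\geq 1$; the connectivity hypothesis $\lambda_{2}(P)<1$ rules out the latter, giving the claimed bound.

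The main place where one must be careful is the verification of the identity $\E_{x \sim \pi}\Pi_{x} = \Pi$ and, relatedly, ensuring the orthogonal decomposition of $\phi$ is performed in the correct $\pi_{x}$-weighted inner product so that the cross term $2c_{x}\langle\mathbf{1},\psi_{x}\rangle_{\pi_{x}}$ vanishes. Everything else is formal: once the recursion is in hand, it is a single quadratic calculation, and the condition $\lambda \leq 1/2$ is used only to order the two roots at the very end.
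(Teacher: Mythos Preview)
Your proof is correct; it is the classical Oppenheim argument. The paper does not give its own proof of \cref{thm:vanillatrickledown}: it cites \cite{Opp18} and in fact invokes the statement as a black box inside the proof of the matrix version \cref{thm:trickledowngeneral}. Your key inequality $\Pi P \preceq \lambda\,\Pi + (1-\lambda)\,\Pi P^{2}$ is exactly the scalar specialization of the hypothesis \cref{eq:opplossassumption} in \cref{lem:opploss2} (take $M=\tfrac{\lambda}{1-\lambda}\Pi$, $\alpha=1-\lambda$), and you derive it via the same Garland identities \cref{lem:PiLocalization,lem:OppenheimLoss}. Where you finish by substituting the second eigenvector of $P$ and factoring a scalar quadratic, the paper's general machinery instead passes through operator monotonicity of the matrix square root (\cref{lem:oppmonotone}); for this scalar statement your route is more direct, while the paper's abstraction is precisely what allows the extension to non-scalar matrix bounds $M_{\tau}$.
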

\begin{remark}
The original statement in \cite{Opp18} does not have the assumption $\lambda \leq 1/2$, but the two are completely equivalent, since if $\lambda > 1/2$, then $\frac{\lambda}{1 - \lambda} > 1$, making the statement vacuously true.
\end{remark}
In particular, suppose we know that the top-dimensional links have local walks with second eigenvalue upper bounded by $1/(\dimension+1)$. Then by applying \cref{thm:vanillatrickledown} in a totally blackbox fashion, it immediately follows that $(X,\pi_{\dimension})$ is a $\wrapp{\frac{1}{2},\dots,\frac{1}{\dimension+1}}$-local spectral expander and its high-order walk has spectral gap at least $\Omega(1/d^{2})$. This result has already had many applications, such as the construction of bounded-degree high-dimensional expanders \cite{KO18b} and sampling algorithms for matroids \cite{ALOV19ii, CGM19, ALOVV21} and other combinatorial structures \cite{AL20}.

Unfortunately, when $X$ is the simplicial complex of proper (partial) colorings, and $\pi_{n-1}$ is the uniform distribution over proper colorings, this standard trickle-down method is not enough. It turns out that in the worst case, the second largest eigenvalue of the local walk of a top-dimensional link is $\Theta\wrapp{\frac{1}{q - \Delta}} = \Theta(1)$, which is much too large since \cref{thm:vanillatrickledown} needs to be applied $n-2$ times.

Our main technical contribution is to provide a framework which significantly generalizes \cref{thm:vanillatrickledown}, and makes the trickle-down theorem applicable to wider classes of weighted simplicial complexes, including those arising from proper colorings. One of our main insights is to replace the hypothesis that $\lambda_{2}(P_{x}) \leq \lambda$, which merely provides a uniform bound on all nontrivial eigenvalues of $P_{x}$, by a matrix bound ``$P_{x} \preceq M_{x}$''. The hope is that the matrix $M_{x}$ itself can simultaneously be easily bounded, as well as provide information on where the ``bad'' eigen-spaces of $P_{x}$ are. 
So, roughly speaking, although many of the top dimensional links $P_\tau$'s may have a constant second eigenvalue, by carefully choosing $M_\tau$'s one can ``average out'' these bad eigen-spaces to show that the link of a lower dimensional face has small eigenvalues.
We formalize this as follows.

\begin{theorem}[Matrix Trickle-Down Theorem]\label{thm:trickledowngeneral}
Given  a weighted simplicial complex $(X, \pi_\dimension)$, suppose the following holds:
\begin{enumerate}
    \item \textbf{Connectivity:} $\lambda_{2}(P) < 1$, i.e. the local walk $P$ is connected/irreducible.
    \item \textbf{Generalized Spectral Bound for Links Above:} There is a family of symmetric matrices $\{M_{x}\}_{x \in X(0)}$ such that
    \begin{align*}
        \Pi_{x}P_{x} - \alpha \pi_{x}\pi_{x}^{\top} \preceq M_{x} \preceq \frac{1}{2\alpha + 1} \Pi_{x}
    \end{align*}
    for all $i \in X(0)$.
\end{enumerate}
Then the local walk $P$ actually satisfies the spectral bound $\Pi P - \wrapp{2 - \frac{1}{\alpha}} \pi\pi^{\top} \preceq M$, and in particular $\lambda_{2}(P) \leq \rho(\Pi^{-1}M)$, where $M$ is any symmetric matrix satisfying $M \preceq \frac{1}{2\alpha}\Pi$ and 
$\E_{x \sim \pi} M_{x} \preceq M - \alpha M\Pi^{-1}M$. 
\end{theorem}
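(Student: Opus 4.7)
The plan is to apply Garland's method to the per-vertex hypothesis and then invoke the operator monotonicity Lemma~\ref{lem:oppmonotone} after deflating the trivial top eigenvector of $P$. First, averaging the bound $\Pi_x P_x - \alpha \pi_x\pi_x^\top \preceq M_x$ over $x \sim \pi$ using the Garland identities $\E_x[\Pi_x P_x] = \Pi P$ (Lemma~\ref{lem:PiLocalization}) and $\E_x[\pi_x\pi_x^\top] = \Pi P^2$ (Lemma~\ref{lem:OppenheimLoss}), combined with $\E_x M_x \preceq M - \alpha M\Pi^{-1}M$, yields
\[
\Pi P - \alpha \Pi P^2 \preceq M - \alpha M\Pi^{-1} M. \qquad (\ast)
\]
Conjugating both sides by $\Pi^{-1/2}$ turns this into $P^{*}(I-\alpha P^{*}) \preceq M^{*}(I-\alpha M^{*})$, where $P^{*} := \Pi^{1/2} P \Pi^{-1/2}$ and $M^{*} := \Pi^{-1/2} M \Pi^{-1/2}$, and $v := \Pi^{1/2}\bone$ is the unit top eigenvector of $P^{*}$ (with eigenvalue $1$).

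Next, I would deflate the $v$ direction by setting $Q := P^{*} - vv^\top$, which satisfies $Qv=0$. Using $P^{*}v = v$ and $\|v\|=1$, a quick computation gives
\[
Q(I-\alpha Q) \,=\, P^{*}(I-\alpha P^{*}) \,+\, (\alpha-1) vv^\top \,\preceq\, M^{*}(I-\alpha M^{*}) + (\alpha-1) vv^\top.
\]
Restricting to $v^\perp$, the $(\alpha-1)vv^\top$ correction disappears. Moreover, $P^{*}$ preserves $v^\perp$, so $(P^{*2})|_{v^\perp} = (P^{*}|_{v^\perp})^2$, and a short Cauchy--Schwarz calculation gives $(M^{*2})|_{v^\perp} \succeq (M^{*}|_{v^\perp})^2$. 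Combining these yields the clean reduced inequality
\[
P^{*}|_{v^\perp}(I-\alpha P^{*}|_{v^\perp}) \,\preceq\, M^{*}|_{v^\perp}(I-\alpha M^{*}|_{v^\perp}) \quad \text{on } v^\perp.
\]

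The plan is then to invoke Lemma~\ref{lem:oppmonotone} on $v^\perp$ to conclude $P^{*}|_{v^\perp} \preceq M^{*}|_{v^\perp}$, which yields the eigenvalue bound $\lambda_2(P) \leq \rho(\Pi^{-1}M)$ as an immediate consequence. The stronger matrix inequality $\Pi P - (2-1/\alpha)\pi\pi^\top \preceq M$ is then recovered by handling the $v$-direction and cross-terms separately: testing $(\ast)$ at $\bone$ yields the diagonal bound $\bone^\top M \bone \geq 1/\alpha - 1$ together with a quadratic control on the component of $M\bone$ orthogonal to $\bone$ in the $\pi$-inner product, and the coefficient $2 - 1/\alpha$ is chosen precisely so that a $2\times 2$ block Schur-complement calculation combines these with the $v^\perp$ bound to give the full matrix inequality.

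The main obstacle is verifying the hypothesis $P^{*}|_{v^\perp} \preceq \frac{1}{2\alpha} I$ needed to apply Lemma~\ref{lem:oppmonotone}: while the matching bound $M^{*}|_{v^\perp}\preceq \frac{1}{2\alpha} I$ follows immediately from $M \preceq \frac{1}{2\alpha}\Pi$, the condition $P^{*}|_{v^\perp} \preceq \frac{1}{2\alpha} I$ is equivalent to $\lambda_2(P) \leq \frac{1}{2\alpha}$, essentially the conclusion we are trying to derive. I expect this chicken-and-egg to be resolved using a continuity/topological argument in the spirit of Oppenheim: one continuously deforms $P$ along a family on which the assumptions remain valid and $\lambda_2$ is trivially small at one endpoint, then uses the strict connectivity $\lambda_2(P) < 1$ together with the intermediate value theorem to rule out $\lambda_2(P_s)$ ever crossing the threshold $\frac{1}{2\alpha}$.
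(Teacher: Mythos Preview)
Your Garland averaging step and the plan to invoke Lemma~\ref{lem:oppmonotone} are exactly right, but there are two genuine gaps.

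First, the ``chicken-and-egg'' is not resolved by a continuity argument; the paper's resolution is much simpler, and it uses precisely the one hypothesis you never touch, namely $M_x \preceq \frac{1}{2\alpha+1}\Pi_x$. Combined with $\Pi_x P_x - \alpha \pi_x\pi_x^\top \preceq M_x$, this gives $\lambda_2(P_x) \leq \frac{1}{2\alpha+1}$ for every $x$, and then the \emph{scalar} Oppenheim trickle-down (Theorem~\ref{thm:vanillatrickledown}) together with $\lambda_2(P)<1$ yields $\lambda_2(P) \leq \frac{1/(2\alpha+1)}{1 - 1/(2\alpha+1)} = \frac{1}{2\alpha}$ directly. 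No deformation is needed.

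Second, your deflation is more complicated than necessary and the Schur-complement recombination of the $v$-direction with $v^\perp$ is left as a sketch. The clean trick is to deflate by exactly the coefficient $2 - \frac{1}{\alpha}$ rather than $1$: set $Q = P - \bigl(2 - \tfrac{1}{\alpha}\bigr)\mathbf{1}\pi^\top$. Using $P\mathbf{1}=\mathbf{1}$, $\pi^\top P = \pi^\top$, and $\pi^\top\mathbf{1}=1$, a two-line computation gives the exact identity $Q - \alpha Q^2 = P - \alpha P^2$, so $(\ast)$ becomes $\Pi Q - \alpha \Pi Q^2 \preceq M - \alpha M\Pi^{-1}M$ with no error terms. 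Conjugating by $\Pi^{-1/2}$ and applying Lemma~\ref{lem:oppmonotone} globally (not just on $v^\perp$) with $A = \Pi^{1/2}Q\Pi^{-1/2}$ and $B = \Pi^{-1/2}M\Pi^{-1/2}$ then gives $\Pi Q \preceq M$, which is exactly the claimed matrix inequality. The hypothesis $A \preceq \frac{1}{2\alpha}I$ of Lemma~\ref{lem:oppmonotone} amounts to $\lambda_2(P) \leq \frac{1}{2\alpha}$ (handled above) and $1 - (2-\tfrac{1}{\alpha}) \leq \frac{1}{2\alpha}$, i.e.\ $\alpha \geq \tfrac{1}{2}$.
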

Note that by induction, \cref{thm:matrixtrickledowninductive} follows as an immediate consequence of this generalized trickle-down result.

To see that the above theorem generalizes \cref{thm:vanillatrickledown}, note that if $\lambda_{2}(P_{x})\leq \lambda\leq 1/2$ for all $i\in X(0)$ then $M_{x}=\lambda \Pi_{x}$, $\alpha=1-\lambda$, and $M=\frac{\lambda}{1-\lambda} \Pi$ satisfies the assumptions of the above theorem; in particular,
\begin{align*}
M_{x} &=\lambda \Pi_{x} \leq \frac{1}{2(1-\lambda)+1}\Pi_{x}=\frac{1}{2\alpha+1}\Pi_{x}\tag{$\lambda\leq 1/2$} \\
\Pi_{x} P_{x} &- (1-\lambda)\pi_{x}\pi_{x}^{\top} \preceq \lambda \Pi_{x} = M_{x} \tag{$\lambda_2(P_x)\leq \lambda$ and $\pi_{x}P_{x} = \pi_{x}$} \\
M &=\frac{\lambda}{1-\lambda}\Pi \preceq \frac{1}{2(1-\lambda)}\Pi=\frac{1}{2\alpha}\Pi \tag{$\lambda<1/2$}\\
\mathbb{E}_{x\sim\pi} M_{x} &= \lambda \E_{x \sim \pi} \Pi_{x} = \lambda \Pi  \\
&= \frac{\lambda}{1-\lambda} \Pi - (1-\lambda)\wrapp{\frac{\lambda}{1-\lambda}}^{2}\Pi \\
&= M - \alpha M\Pi^{-1}M.
\end{align*}
so, assuming connectivity of $P$, we get $\lambda_{2}(P)\leq \rho(\Pi^{-1}M)=\frac{\lambda}{1 - \lambda}$ from \cref{thm:trickledowngeneral} as desired.

Let us now prove \cref{thm:trickledowngeneral}. To do this, we first need the following lemma.
\begin{lemma}\label{lem:opploss2}
Let $(X, \pi_\dimension)$ be  a weighted simplicial complex. Suppose  for a symmetric matrix $M$ and an $\alpha \geq 1/2$, the matrix inequalities $M, \Pi P - \wrapp{2 - \frac{1}{\alpha}} \pi\pi^{\top} \preceq \frac{1}{2\alpha}\Pi$ hold and
\begin{align}\label{eq:opplossassumption}
    \Pi P - \alpha \Pi P^{2} \preceq M - \alpha M\Pi^{-1}M
\end{align}
Then we have the bound $\Pi P - \wrapp{2 - \frac{1}{\alpha}} \pi\pi^{\top} \preceq M$.
\end{lemma}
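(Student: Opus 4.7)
The idea is to reduce this to the operator-monotonicity lemma \cref{lem:oppmonotone}. Write $K := \Pi P$, which is symmetric by reversibility. A key identity is $\Pi P^{2} = K \Pi^{-1} K$, which follows from $P^{\top}\Pi = \Pi P$ (again by reversibility); therefore the hypothesis \cref{eq:opplossassumption} reads
\begin{align*}
K - \alpha K\Pi^{-1}K \preceq M - \alpha M\Pi^{-1}M.
\end{align*}
This suggests that the symmetric map $X \mapsto X - \alpha X\Pi^{-1}X$ should behave like $X\mapsto X(I-\alpha X)$ after a suitable congruence. Conjugating by $\Pi^{-1/2}$ converts it exactly into that form: with $\tilde K := \Pi^{-1/2}K\Pi^{-1/2} = \Pi^{1/2}P\Pi^{-1/2}$ and $\tilde M := \Pi^{-1/2}M\Pi^{-1/2}$, the hypothesis becomes
\begin{align*}
\tilde K(I-\alpha \tilde K) \preceq \tilde M(I-\alpha \tilde M).
\end{align*}

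\textbf{Handling the trivial eigenvalue.} The map $X\mapsto X(I-\alpha X)$ is not operator-monotone on matrices with eigenvalues exceeding $\frac{1}{2\alpha}$, so we cannot directly apply \cref{lem:oppmonotone} to $\tilde K$ and $\tilde M$ because $\tilde K$ has the top eigenvalue $1$ with eigenvector $v := \Pi^{1/2}\mathbf{1}$ (a unit vector satisfying $\tilde K v = v$ and $vv^{\top} = \Pi^{-1/2}\pi\pi^{\top}\Pi^{-1/2}$). The remedy is to subtract a rank-one correction $cvv^{\top}$ for a carefully chosen $c$. Using $\tilde K v = v$ and expanding, one finds
\begin{align*}
(\tilde K - cvv^{\top})\bigl(I - \alpha(\tilde K - cvv^{\top})\bigr)
= \tilde K(I-\alpha \tilde K) + c(2\alpha - \alpha c - 1)\,vv^{\top}.
\end{align*}
The pleasant coincidence is that the coefficient $c(2\alpha - \alpha c - 1)$ vanishes precisely when $c = 2 - \frac{1}{\alpha}$. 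With this choice the correction term disappears, so by the hypothesis
\begin{align*}
(\tilde K - cvv^{\top})\bigl(I - \alpha(\tilde K - cvv^{\top})\bigr) \preceq \tilde M(I-\alpha \tilde M).
\end{align*}

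\textbf{Applying \cref{lem:oppmonotone}.} The two standing assumptions $\Pi P - (2-\tfrac{1}{\alpha})\pi\pi^{\top} \preceq \tfrac{1}{2\alpha}\Pi$ and $M\preceq \tfrac{1}{2\alpha}\Pi$ translate, after conjugation by $\Pi^{-1/2}$ and the identity $vv^{\top} = \Pi^{-1/2}\pi\pi^{\top}\Pi^{-1/2}$, to $\tilde K - cvv^{\top} \preceq \tfrac{1}{2\alpha}I$ and $\tilde M \preceq \tfrac{1}{2\alpha}I$. Now \cref{lem:oppmonotone} (with $A := \tilde K - cvv^{\top}$, $B := \tilde M$, parameter $\alpha$) yields $\tilde K - cvv^{\top} \preceq \tilde M$; conjugating back by $\Pi^{1/2}$ gives $\Pi P - (2-\tfrac{1}{\alpha})\pi\pi^{\top} \preceq M$, which is the desired conclusion.

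\textbf{Main obstacle.} The nontrivial step is recognizing that the correct object to feed into the operator-monotone inverse is $\tilde K - c vv^{\top}$ rather than $\tilde K$ itself, and that the exact value $c = 2 - \frac{1}{\alpha}$ (which also appears in the conclusion) is forced by the requirement that the cross terms cancel in the expansion above. Everything else is essentially bookkeeping: exploiting reversibility to turn $\Pi P^{2}$ into $K\Pi^{-1}K$, and reducing operator monotonicity of $X\mapsto X(I-\alpha X)$ on the region $X\preceq \tfrac{1}{2\alpha}I$ to \cref{lem:oppmonotone}.
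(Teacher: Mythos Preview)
Your proof is correct and follows essentially the same approach as the paper. The paper introduces $Q := P - (2-\tfrac{1}{\alpha})\mathbf{1}\pi^{\top}$ and checks $Q-\alpha Q^{2}=P-\alpha P^{2}$ before conjugating by $\Pi^{\pm 1/2}$, whereas you conjugate first and then subtract the rank-one term $cvv^{\top}$; since $\Pi^{1/2}Q\Pi^{-1/2}=\tilde K - cvv^{\top}$ these are the same computation in different order, leading to the identical application of \cref{lem:oppmonotone}.
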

\begin{proof}
Our goal is to apply \cref{lem:oppmonotone} to suitably chosen $A,B$. Define $Q = P - \wrapp{2 - \frac{1}{\alpha}} \mathbf{1} \pi^{\top}$. A quick calculation shows that $Q - \alpha Q^{2} = P - \alpha P^{2}$, and so by multiplying both sides of \cref{eq:opplossassumption} by $\Pi^{-1/2}$, we see that \cref{eq:opplossassumption} is equivalent to
\begin{align*}
    \Pi^{1/2} Q \Pi^{-1/2} - \alpha \Pi^{1/2} Q^{2} \Pi^{-1/2} \preceq \Pi^{-1/2}M\Pi^{-1/2} - \alpha \Pi^{-1/2}M\Pi^{-1}M \Pi^{-1/2}
\end{align*}
Taking $A = \Pi^{1/2}Q \Pi^{-1/2}$ and $B = \Pi^{-1/2}M\Pi^{-1/2}$, we see by assumption that $A,B$ are symmetric matrices satisfying $A,B \preceq \frac{1}{2\alpha} I$ and $A(I - \alpha A) \preceq B(I - \alpha B)$. It follows by \cref{lem:oppmonotone} that $A \preceq B$, which is equivalent to $\Pi P - \wrapp{2 - \frac{1}{\alpha}}\pi\pi^{\top} = \Pi Q \preceq M$ as desired.
\end{proof}
With this lemma in hand, let us now prove \cref{thm:trickledowngeneral}.
\begin{proof}[Proof of \cref{thm:trickledowngeneral}]
The conclusion follows immediately from \cref{lem:opploss2}, and so it suffices to verify the conditions of the lemma. By assumption, we already have $M \preceq \frac{1}{2\alpha} \Pi$. Furthermore, $\Pi_{x}P_{x} - \alpha \pi_{x}\pi_{x}^{\top} \preceq M_{x} \preceq \frac{1}{2\alpha + 1} \Pi_{x}$ implies that $\lambda_{2}(P_{x}) \leq \frac{1}{2\alpha + 1}$. Since $\lambda_{2}(P) < 1$, by \cref{thm:vanillatrickledown} (the original Trickle-Down Theorem), $\lambda_{2}(P) \leq \frac{1}{2\alpha}$. Combined with the inequality $2 - \frac{1}{\alpha} \geq 1 - \frac{1}{2\alpha}$, which holds since $\alpha \geq 1/2$, it follows that $\Pi P - \wrapp{2 - \frac{1}{\alpha}}\pi\pi^{\top} \preceq \frac{1}{2\alpha}\Pi$.

All that remains is to verify \cref{eq:opplossassumption}. Observe that
\begin{align*}
    \Pi P &= \E_{x \sim \pi} \Pi_{x} P_{x} \tag{\cref{lem:PiLocalization}} \\
    &\preceq \E_{x \sim \pi}\wrapb{\alpha \pi_{x}\pi_{x}^{\top} + M_{x}} \tag{Assumption} \\
    &= \alpha \Pi P^{2} + \E_{x \sim \pi} M_{x} \tag{\cref{lem:OppenheimLoss}} \\
    &\preceq \alpha \Pi P^{2} + M - \alpha M\Pi^{-1}M \tag{Assumption}
\end{align*}
Rearranging, we obtain that $\Pi P - \alpha \Pi P^{2} \preceq M - \alpha M \Pi^{-1} M$ as desired.
\end{proof}

\subsection{A Slight Extension of \cref{thm:matrixtrickledowninductive}}
Here, we prove an extension of the matrix trickle-down method to take into account when simplicial complexes factor as products of smaller complexes. This will be useful in the context of proper colorings when the input graph is broken into several connect components by coloring some of the vertices.
\begin{theorem}\label{thm:fulltechnical}
Let $(X, \pi_{\dimension})$ be a totally connected weighted complex. Suppose $\{M_{\tau} \in \mathbb{R}^{X(0) \times X(0)}\}_{\tau \in X(\leq \dimension-2)}$ is a family of symmetric matrices satisfying the following:
\begin{enumerate}
    \item \textbf{Base Case:} For every $\tau$ of co-dimension 2, we have the spectral inequality $$\Pi_{\tau}P_{\tau} - 2\pi_{\tau}\pi_{\tau}^{\top} \preceq M_{\tau} \preceq \frac{1}{5} \Pi_{\tau}.$$
    \item \textbf{Recursive Condition:} For every $\tau$ of co-dimension at least $k \geq 3$, at least one of the following holds:  $ M_{\tau}$ satisfies 
   \begin{align}    \label{eq:desired_ineq}
 \boundMatrix_\tau \preceq  \frac{k-1}{3k -1} \Pi_\tau \quad \text{and}  \quad
  \E_{x \sim \pi_{\tau}} M_{\tau \cup \{x\}} \preceq M_{\tau} - \frac{k-1}{k-2} M_{\tau}\Pi_{\tau}^{-1}M_{\tau}.
   \end{align}
    Or,  $(X_\tau,\pi_{\tau, k-1})$ is a product of weighted simplicial complexes $(Y_{1},\mu_{1}),\dots,(Y_{t},\mu_{t})$ and  for every $\eta \in X_{\tau}(k-1)$,
        \begin{align*}
            M_{\tau} = \bigoplus_{1 \leq i \leq t : \dimension_{Y_i} \geq 1} \frac{\dimension_{Y_i}(\dimension_{Y_{i}}+1) }{k(k-1)} M_{\tau \cup \eta_{-i}},
        \end{align*}
        where $\eta_{-i} = \eta \setminus Y_{i}(0)$.

\end{enumerate}
Then for every $\tau \in X(\leq \dimension-2)$, we have the bound 
$\lambda_2(\Pi_\tau P_\tau)\leq \rho(\Pi_\tau^{-1}M_\tau)$.
\end{theorem}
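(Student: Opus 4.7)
The plan is to establish, by induction on $k = \codimfunc(\tau)$, the strengthened invariant
\[
\Pi_\tau P_\tau - \tfrac{k}{k-1}\pi_\tau\pi_\tau^\top \preceq M_\tau \qquad\text{and}\qquad \rho(\Pi_\tau^{-1}M_\tau) \leq \tfrac{k-1}{3k-1}
\]
for every $\tau\in X(\leq d-2)$. The first inequality gives $\lambda_2(P_\tau) \leq \rho(\Pi_\tau^{-1}M_\tau)$ upon restricting to $\pi_\tau$-mean-zero test functions, which is the desired theorem conclusion. The second inequality is the key bootstrapping device: it is precisely the spectral threshold that, via \cref{thm:vanillatrickledown}, produces at the next co-dimension the matrix bound $\Pi_\tau P_\tau - \tfrac{k}{k-1}\pi_\tau\pi_\tau^\top \preceq \tfrac{1}{2\alpha}\Pi_\tau$ needed by \cref{lem:opploss2} with $\alpha := \tfrac{k-1}{k-2}$. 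The base case $k=2$ is the theorem hypothesis since $\tfrac{k}{k-1}=2$ and $\tfrac{k-1}{3k-1}=\tfrac{1}{5}$. For the inductive step at $k\geq 3$, we split according to which branch of the Recursive Condition is in force at $\tau$.

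In the non-product case, Vanilla Trickle-Down applied to $X_\tau$ (valid by total connectivity) with the inductive spectral bound $\rho(\Pi_{\tau\cup x}^{-1}M_{\tau\cup x}) \leq \tfrac{k-2}{3k-4}\leq \tfrac{1}{2}$ yields $\lambda_2(P_\tau) \leq \tfrac{k-2}{2(k-1)}$, and hence $\Pi_\tau P_\tau - \tfrac{k}{k-1}\pi_\tau\pi_\tau^\top \preceq \tfrac{1}{2\alpha}\Pi_\tau$. Combining \cref{lem:PiLocalization} and \cref{lem:OppenheimLoss} with the inductive matrix inequality and \cref{eq:desired_ineq} gives
\[
\Pi_\tau P_\tau - \alpha \Pi_\tau P_\tau^2 = \E_{x\sim\pi_\tau}\bigl[\Pi_{\tau\cup x}P_{\tau\cup x} - \alpha \pi_{\tau\cup x}\pi_{\tau\cup x}^\top\bigr] \preceq \E_{x\sim\pi_\tau}M_{\tau\cup x} \preceq M_\tau - \alpha M_\tau \Pi_\tau^{-1} M_\tau.
\]
The hypothesis $M_\tau \preceq \tfrac{k-1}{3k-1}\Pi_\tau$ delivers $M_\tau \preceq \tfrac{1}{2\alpha}\Pi_\tau$ via the elementary inequality $\tfrac{k-1}{3k-1} \leq \tfrac{k-2}{2(k-1)}$ (equivalent to $k(k-3)\geq 0$ for $k\geq 3$). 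Applying \cref{lem:opploss2} then outputs $\Pi_\tau P_\tau - \tfrac{k}{k-1}\pi_\tau\pi_\tau^\top \preceq M_\tau$, and the spectral radius bound is immediate from the same hypothesis.

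In the product case $(X_\tau,\pi_{\tau,k-1}) = (Y_1,\mu_1)\times\cdots\times(Y_t,\mu_t)$, applying the inductive matrix inequality at each face $\tau\cup\eta_{-i}$ (of co-dim $d_{Y_i}+1\leq k-1$) followed by iterating \cref{lem:productlocalwalks} immediately yields
\[
\Pi_\tau P_\tau - \tfrac{k}{k-1}\pi_\tau\pi_\tau^\top \preceq \bigoplus_{i:\,d_{Y_i}\geq 1} \tfrac{d_{Y_i}(d_{Y_i}+1)}{k(k-1)} M_{\tau\cup\eta_{-i}} = M_\tau.
\]
For the second part of the invariant, the marginal identity $\Pi_\tau|_{Y_i(0)} = \tfrac{d_{Y_i}+1}{k}\Pi_{Y_i}$ (a consequence of the product structure of $\pi_\tau$) reduces the $i$-th block of $\Pi_\tau^{-1}M_\tau$ to $\tfrac{d_{Y_i}}{k-1}\Pi_{Y_i}^{-1}M_{Y_i}$, whose spectral radius is at most $\tfrac{d_{Y_i}^2}{(k-1)(3d_{Y_i}+2)}$ by the inductive bound on $\rho(\Pi_{Y_i}^{-1}M_{Y_i})$; a short convex check at the endpoints $d_{Y_i}\in\{1,k-2\}$ confirms this is at most $\tfrac{k-1}{3k-1}$ on the allowable range $1\leq d_{Y_i}\leq k-2$.

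The main difficulty is not any single step but the tight numerical calibration of the constants: the threshold $\tfrac{k-1}{3k-1}$, the Vanilla Trickle-Down output $\tfrac{k-2}{2(k-1)}$, the multiplier $\alpha = \tfrac{k-1}{k-2}$ of \cref{lem:opploss2}, and the product scaling $\tfrac{d_{Y_i}(d_{Y_i}+1)}{k(k-1)}$ must all interlock so that the same invariant propagates cleanly through both cases; in particular, one must verify the non-product bound $M_\tau\preceq\tfrac{k-1}{3k-1}\Pi_\tau$ is tight enough to meet the $\tfrac{1}{2\alpha}$ slack demanded by \cref{lem:opploss2}, and that the block-diagonal form of $M_\tau$ in the product case silently maintains the same threshold.
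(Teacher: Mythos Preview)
Your proposal is correct and follows essentially the same inductive scheme as the paper's proof, which is the terse instruction ``apply \cref{thm:trickledowngeneral} and \cref{lem:productlocalwalks} inductively.'' You have unpacked this: your non-product step is precisely the content of \cref{thm:trickledowngeneral} with $\alpha = \tfrac{k-1}{k-2}$, and your product step is \cref{lem:productlocalwalks}. The one substantive detail you supply that the paper leaves implicit is the verification that the product branch preserves the spectral-radius invariant $\rho(\Pi_\tau^{-1}M_\tau)\le \tfrac{k-1}{3k-1}$; this is genuinely needed, since the hypothesis only asserts it directly in the non-product branch, and your convexity-of-$d\mapsto d^2/(3d+2)$ endpoint check is a valid way to close it.
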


\begin{proof}
Apply \cref{thm:trickledowngeneral} and \cref{lem:productlocalwalks} inductively.
\end{proof}

\section{Vertex Coloring}
\label{sec:vertexcoloring}

Fix an integer $q$  and graph  $G = (V, E)$ and a function $L$ that maps each $v \in V$ to a subset of $[q]$. We call $(G, L)$ a vertex-list-coloring instance. 
For any $u, v  \in V$, we write  $u \sim_c v$ when $u \sim v$  and $c \in L(u) \cap L(v)$. 
Furthermore, we define a $\beta$-extra color vertex-list-coloring instance as follows. 
\begin{definition}
We say a vertex-list-coloring instance $(G, L)$ is a $\beta$-extra-color instance if  for each $v \in V$, $|L(v)| \geq \beta + \Delta_G(v) $. 
\end{definition}

 We call an assignment $\sigma : V \rightarrow [q]$ a $L$-vertex-list-coloring of $G$ if $\sigma(v) \in L(v)$ for all $v \in V$; we say $\sigma$ is proper if $\sigma(u) \neq \sigma(v)$ whenever $u \sim v$. 
When it is clear from context we say $\sigma$ is a proper coloring to mean it is a proper $L$-vertex-list-coloring.  We say $\tau$ is   proper partial coloring on $U \subset V$ when it is a proper $L|_U$-vertex-list-coloring for $G[U]$.
 We may view list-colorings $\sigma$ as sets of vertex-color pairs $(v,c)$, which we denote by $vc$ for convenience.  When the graph $G$ and color lists $L$ are clear from context, we write $\pi_{n-1}$ for the uniform distribution over proper $L$-vertex-list-coloring of $G=(V,E)$. For a proper partial coloring on $U \subset V$, $v \in V \setminus U$ and $c \in L(v)$, define
\begin{align*}
    p(vc | \tau) \coloneqq \mathbb{P}_{\sigma \sim \pi_{n-1}} (\sigma(v)=c |\forall u \in U: \sigma (u) = \tau(u) ).
\end{align*}
In order to analyze the Glauber dynamics for an vertex-list-coloring instance $(G, L)$, we can build an $(n-1)$-dimensional simplicial complex such that the down-up walk on its facets is the same as the Glauber dynamics on $(G, L)$. Our aim is to  
apply  \cref{thm:fulltechnical} to bound the second eigenvalue of the transition probability matrix of the local walks and then use \cref{thm:localtoglobal} to bound the second eigenvalue of the  transition probability matrix of the global down-up walk on the facets. 
\begin{definition}[Simplicial Complex of a Vertex-List-Coloring Instance]
Given a vertex-list-coloring instance $(G, L)$, let $X(G, L)$ be a pure $(n -1)$-dimensional simplicial complex specified by the following facets:  $\{(v, \sigma(v))\}_{v \in V} $ is a facet  if and only if $\sigma$ is a proper $L$-vertex-list-coloring for $G$. 
\end{definition}
When it is clear from context, we abbreviate $  X(G, L)$ to $X$. 
Note that for all $0 \leq k \leq n$, any  face  $\tau$ of co-dimension $k$  is a proper partial coloring on a subset of vertices $U$ of size $n-k$, i.e., $k$ vertices remain uncolored.   Furthermore,  $X_\tau (0)$ can be seen as the set of all $vc$ such that $c \in L(v)$, $v \notin U$ and for any $u \sim v$, $uc \notin \tau$. 
We define $V_\tau \coloneqq \{ v : \exists c, vc\in X_\tau(0)\}$. Let $G_\tau \coloneqq G[V_\tau]$ and $\Delta_\tau(.)$ be the degree function of $G_\tau$ and $\Delta_\tau$ be its maximum degree. We define $L_\tau (v) \coloneqq \{ c \in L(v) : vc \in X_\tau (0) \}$ to be the list of remaining colors available to $v$ after coloring vertex $u$ with $c$ for each $uc \in \tau$. Let   $l_\tau (v) \coloneqq |L_\tau(v)|$ and  $l_\tau(u,v) := |L_\tau(u) \cap L_\tau(v)|$. We write $v \sim_{\tau, c} u$ for vertices $v, u$ when  $v \sim u$  and $c \in L_\tau(v) \cap L_\tau(u) $. Furthermore, for  $U \subseteq V \setminus V_\tau$,   let $\tau|_U \coloneqq \{ vc \in \tau: v \in U\}$.

\subsection{Diagonal Matrix Bounds}
To demonstrate the essence of our approach better, we start by restricting our attention to when the matrix bounds $\{M_\tau\}_{\tau \in X: \codimfunc (\tau) \geq 2}$ in  \cref{thm:fulltechnical} are diagonal matrices.  Using diagonal matrix bounds, we analyze the Glauber dynamics for $(1+\epsilon)\Delta$-extra-color  vertex-list-coloring instances. 
\begin{theorem}\label{thm:vertex-coloring-2delta}
Suppose $(G, L)$ is a  $(1+\eps)\Delta$-extra-color    vertex-list-coloring instance for an $0<\eps \leq 1$ such that  $\frac{\ln (\Delta) + 2}{\Delta } \leq \frac{\eps^2}{40}$, and let  $(X, \pi_{n-1})$ be its associated  weighted simplicial complex. For any $2 \leq k\leq n$ and   $\tau \in X$ of co-dimension $k$ we have
$ \lambda_2(P_\tau) \leq \frac{ \frac{5}{2\eps}}{k-1}$.
\end{theorem}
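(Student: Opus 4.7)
The plan is to apply the matrix trickle-down theorem \cref{thm:fulltechnical} with a family of diagonal matrices $\{M_\tau\}_{\tau \in X(\leq n-3)}$, engineered so that $\rho(\Pi_\tau^{-1}M_\tau) \leq \tfrac{5/(2\eps)}{k-1}$ for every face $\tau$ of co-dimension $k$. Once the base case and recursive condition of \cref{thm:fulltechnical} are verified, the theorem's conclusion $\lambda_2(P_\tau) \leq \rho(\Pi_\tau^{-1}M_\tau)$ yields the claimed bound.

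For the base case at $k=2$, the face $\tau$ leaves exactly two uncolored vertices $u,v$, each with list of size at least $(1+\eps)\Delta$: only the neighbors of $v$ can shrink $L(v)$, and $|L(v)|\geq \Delta_G(v)+(1+\eps)\Delta$ gives slack $(1+\eps)\Delta$. When $u\not\sim v$ the local walk $P_\tau$ is bipartite and $\lambda_2(P_\tau)=0$. When $u\sim v$, a direct calculation of $P_\tau^2$ on the $(u,\cdot)$-block gives its second eigenvalue as $1/(l-1)^2$ (up to list-intersection corrections) where $l=\min\{l_\tau(u),l_\tau(v)\}$, hence $|\lambda_2(P_\tau)|\leq 1/(l-1)$. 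The assumption $\frac{\ln\Delta+2}{\Delta}\leq \eps^2/40$ forces $\Delta$ large enough that $l-1\geq 5$, so $|\lambda_2(P_\tau)|\leq 1/5$; picking a diagonal $M_\tau$ that majorizes $\Pi_\tau P_\tau - 2\pi_\tau\pi_\tau^\top$ and lies below $\tfrac{1}{5}\Pi_\tau$ then certifies the base case.

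For the recursive step at $k\geq 3$, I would branch on whether $G[V_\tau]$ is disconnected. If it is, $X_\tau$ factors as a product of vertex-coloring complexes and we invoke the product clause of \cref{thm:fulltechnical} via \cref{lem:productlocalwalks}, which contributes a genuine contraction factor $\tfrac{d_i(d_i+1)}{k(k-1)}$ on each block. If it is connected, then since every $M_\tau$ and $M_{\tau\cup\{x\}}$ in sight is diagonal the recursive matrix inequality reduces to the entrywise inequality
\[
\E_{x\sim\pi_\tau} m_{\tau\cup\{x\}}(y) + \frac{k-1}{k-2}\frac{m_\tau(y)^2}{\pi_\tau(y)} \leq m_\tau(y), \qquad \text{for all } y \in X_\tau(0).
\]
The identity $\E_{x\sim\pi_\tau}\pi_{\tau\cup\{x\}}(y)=\pi_\tau(y)$, which follows from $\pi_\tau(x)\pi_{\tau\cup\{x\}}(y) = \tfrac{1}{k(k-1)}\Pr[x,y\in\sigma\mid\tau]$ by the Law of Total Probability over $x\neq y$, pins down the expectation on the left to first order. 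What remains is controlling the multiplicative fluctuations of $\pi_{\tau\cup\{x\}}(y)/\pi_\tau(y)$ around $1$; this is where the hypothesis $\frac{\ln\Delta+2}{\Delta}\leq \eps^2/40$ enters, via a union bound over the $O(\Delta)$ vertex-color pairs in the ``neighborhood'' of $y$.

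The main difficulty is choosing the diagonal entries $m_\tau(vc)$ correctly. A naive scalar ansatz $m_\tau(vc)=\tfrac{a_k}{k-1}\pi_\tau(vc)$ cannot work: the marginal identity above is an \emph{equality}, not a contraction, so there is no slack available to absorb the quadratic term $\tfrac{k-1}{k-2}M_\tau\Pi_\tau^{-1}M_\tau$. Indeed, a direct computation shows $a_{k-1}\leq a_k\bigl(1-\tfrac{k-1}{k-2}a_k\bigr)$ is incompatible with an $O(1/k)$ schedule for $a_k$. The plan is therefore to weight $m_\tau(vc)$ by a list-slack factor tied to $l_\tau(v)$ (or $\Delta_\tau(v)$) so that contributions from atypical $(v,c)$ pairs are suppressed, and to invoke the product case aggressively as soon as coloring forces isolated vertices into $V_\tau$. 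With those ingredients checked, \cref{thm:fulltechnical} gives $\lambda_2(P_\tau)\leq \rho(\Pi_\tau^{-1}M_\tau)\leq \tfrac{5/(2\eps)}{k-1}$ as required.
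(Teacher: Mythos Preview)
Your high-level scaffolding matches the paper: take $M_\tau=\frac{1}{k-1}\Pi_\tau F_\tau$ with $F_\tau$ diagonal, use \cref{prop:basecase_main} to seed the co-dimension-$2$ links, branch on whether $G_\tau$ is disconnected and invoke the product clause of \cref{thm:fulltechnical} when it is, and otherwise verify the scalar recursion of \cref{prop:Fcondition-2delta}. You are also right that a purely $k$-dependent scalar $a_k$ fails and that $F_\tau(vc,vc)$ must depend on $\Delta_\tau(v)$.

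The gap is in the paragraph about ``controlling the multiplicative fluctuations of $\pi_{\tau\cup\{x\}}(y)/\pi_\tau(y)$ around $1$'' via a union bound over the neighborhood of $y$. That is not the mechanism, and trying to push it through that way will not close. The point is that once $F_\tau(vc,vc)$ is chosen to depend \emph{only} on $\Delta_\tau(v)$, reversibility turns $\E_{x\sim\pi_\tau}[\pi_{\tau\cup x}(vc)\,F_{\tau\cup x}(vc,vc)]$ into $\pi_\tau(vc)\,\E_{x\sim\pi_{\tau\cup vc}}[F_{\tau\cup x}(vc,vc)]$, and since $F_{\tau\cup uc'}(vc,vc)$ is independent of the color $c'$ (it sees only whether $u\sim v$), the inner expectation collapses to the deterministic quantity $\Delta_\tau(v)\,f(\Delta_\tau(v)-1)+(k-\Delta_\tau(v)-1)\,f(\Delta_\tau(v))$. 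No marginal fluctuations survive; the problem becomes a one-variable recursion in the degree $i=\Delta_\tau(v)$, namely $(i-1)f(i)-i\,f(i-1)\geq f(i)^2$. The paper solves this explicitly with $f_2(i)=\frac{i}{(1+\eps/2)\Delta-(i-1)-\frac{4}{\eps}\sum_{j=1}^{i-1}\frac{1}{j}}$ (and a separate $f_1$ for degree-$1$ vertices), and the hypothesis $\frac{\ln\Delta+2}{\Delta}\leq\frac{\eps^2}{40}$ enters precisely to keep that harmonic-sum denominator bounded below by $\frac{2\eps}{5}\Delta$, which is what yields $\rho(F_\tau)\leq\frac{5}{2\eps}$. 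Your proposal is missing this explicit construction and its verification, and the ``union bound on marginal fluctuations'' heuristic you sketch in its place would not produce the needed inequality.
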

Combined with local-to-global theorems (\cref{thm:localtoglobal,thm:boundeddegspinlocaltoglobal}), this yields a mixing time of $O(n\log n)$ for bounded-degree graphs, and $n^{O(1/\epsilon)}$ in general, in this setting where we have at least $(1+\epsilon)\Delta$ additional colors available to each vertex. Again, we emphasize that this mixing result in itself is not new; a simple coupling argument can already recover $O(n\log n)$ mixing for $(\Delta+1)$-extra-color vertex-list-coloring instances. However, we will see later on how our proof technique can be used to obtain new mixing results for sampling edge colorings which, to the best of our knowledge, cannot be recovered via simple coupling arguments.

To prove the above statement, first for any $\tau$ of co-dimension $2$, we find a diagonal matrix  $F_\tau$ such that $\Pi_\tau P_\tau \preceq 2 \pi_\tau \pi_\tau^\top + \Pi_\tau F_\tau$.
Then, for all $\tau$ of co-dimension at least $3$,  we  apply \cref{thm:fulltechnical} to $M_{\tau} = \frac{\Pi_{\tau}F_{\tau}}{k-1}$ to find a sufficient condition on the diagonal matrix $F_\tau$ based on matrices $F_{\tau'}$ for faces  $\tau \subsetneq \tau'$, to get  $\lambda_2(P_\tau) \leq \frac{\rho(F_\tau)}{k-1}$.
To  find $F_\tau$ for faces  $\tau$ of co-dimension $2$, we  state a more general proposition  that is also useful  for approaches that use  non-diagonal matrix bounds. 
\begin{proposition}\label{prop:basecase_main}
Given a vertex-list-coloring instance $(G, L)$, consider the  weighted complex $(X, \pi_{n-1})$.  For any face $\tau$  of co-dimension 2 such that $G_\tau=(\{u,v\},\{uv\})$ is connected we have, 
\begin{align}\label{eq:basecasegoal1}
 \Pi_\tau P_\tau - 2 \pi_\tau \pi_\tau^\top \preceq  
 \sqrt{\Pi_\tau} \widetilde{M_\tau} \sqrt{\Pi}_\tau,
\end{align}
where $\widetilde{M}_\tau$ is a block diagonal matrix  with a block $\widetilde{M}_{\tau}^{c}$ for every color $c$ such that
$$
\widetilde{M}^c_\tau = \begin{pmatrix}
\frac{1}{(l_\tau(u)-1)(l_\tau(v)-1)} & \frac{-1}{ \sqrt { (l_\tau(u) -1) (l_\tau(v) -1)}}\\
\frac{-1}{ \sqrt { (l_\tau(u) -1) (l_\tau(v) -1)}} & \frac{1}{(l_\tau(u)-1)(l_\tau(v)-1)}
\end{pmatrix}
$$
and all other entries are $0$. 
\end{proposition}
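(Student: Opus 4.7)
The plan is to prove the Loewner inequality by explicit computation: write down all entries of the matrices involved, reduce the matrix inequality to a scalar bilinear-form inequality via an arbitrary test vector, and then verify the scalar inequality with Cauchy--Schwarz and AM--GM. Introduce abbreviations $l_u := l_\tau(u)$, $l_v := l_\tau(v)$, $l_{uv} := |L_\tau(u) \cap L_\tau(v)|$, and let $N := l_u l_v - l_{uv}$ denote the number of proper $L_\tau$-colorings of the edge $G_\tau = (\{u,v\}, \{uv\})$. Since $\pi_{n-1}$ is uniform on proper colorings, a direct count combined with \cref{eq:pidef} at $k=2$ yields $\pi_\tau(uc) = \frac{l_v - \mathbf{1}[c \in L_\tau(v)]}{2N}$ (and analogously for $\pi_\tau(vc)$). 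Combining \cref{eq:pidef} and \cref{eq:localwalkdef} for codimension-$2$ faces produces the clean identity $(\Pi_\tau P_\tau)(x,y) = \tfrac{1}{2}\Pr[x,y \in \sigma \mid \sigma \supset \tau]$, so $(\Pi_\tau P_\tau)(uc, vc') = \tfrac{1}{2N}$ whenever $c \neq c'$, and all other entries vanish. Crucially, $(\Pi_\tau P_\tau)(uc, vc) = 0$ since $\sigma(u) = \sigma(v)$ is forbidden by properness; it is precisely this structural zero along the ``same color'' directions that the matrix $\widetilde{M}_\tau$ is designed to correct for.

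Next, evaluate both sides of the target inequality on an arbitrary test vector $x = (a,b) \in \R^{L_\tau(u)} \oplus \R^{L_\tau(v)}$. Set $A := \mathbf{1}^\top a$, $B := \mathbf{1}^\top b$, and let $\alpha, \beta$ denote the restrictions of $\pi_\tau$ to the $u$- and $v$-coordinates. A short expansion gives
\[
  x^\top (\Pi_\tau P_\tau - 2\pi_\tau \pi_\tau^\top) x = \frac{1}{N}\Big(AB - \sum_{c \in L_\tau(u) \cap L_\tau(v)} a_c b_c\Big) - 2(\alpha^\top a + \beta^\top b)^2.
\]
On the other side, using $\sqrt{\pi_\tau(uc)\pi_\tau(vc)} = \sqrt{(l_u-1)(l_v-1)}/(2N)$ for $c \in L_\tau(u) \cap L_\tau(v)$, the off-diagonal $-1/\sqrt{(l_u-1)(l_v-1)}$ of each block $\widetilde{M}_\tau^c$ becomes exactly $-1/(2N)$ after conjugation by $\sqrt{\Pi_\tau}$, and one obtains
\[
  x^\top \sqrt{\Pi_\tau}\, \widetilde{M}_\tau\, \sqrt{\Pi_\tau}\, x = \frac{1}{2N}\sum_{c \in L_\tau(u) \cap L_\tau(v)}\Big(\frac{a_c^2}{l_u-1} + \frac{b_c^2}{l_v-1}\Big) - \frac{1}{N}\sum_{c \in L_\tau(u) \cap L_\tau(v)} a_c b_c.
\]
The key observation is that the $-\tfrac{1}{N}\sum a_c b_c$ cross-terms match on both sides and cancel, reducing the proposition to the scalar inequality
\[
  \frac{AB}{N} - 2(\alpha^\top a + \beta^\top b)^2 \;\leq\; \frac{1}{2N}\sum_{c \in L_\tau(u) \cap L_\tau(v)}\Big(\frac{a_c^2}{l_u-1} + \frac{b_c^2}{l_v-1}\Big).
\]

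To verify this, expand $\alpha^\top a = (l_v A - A')/(2N)$ with $A' := \sum_{c \in L_\tau(u) \cap L_\tau(v)} a_c$ (analogously $B'$), so that $2N(\alpha^\top a + \beta^\top b) = (l_v - 1) A' + l_v (A - A') + (l_u - 1) B' + l_u (B - B')$ is a positive linear combination of the ``common'' and ``noncommon'' partial sums of $a$ and $b$. The remaining inequality then follows from a combination of two standard tools: AM--GM applied to the cross-products inside $(2N(\alpha^\top a + \beta^\top b))^2$ absorbs $AB/N$ since $(l_v A + l_u B)^2 \geq 4 l_u l_v AB \geq 4 N AB$ whenever $AB \geq 0$ (and the inequality is trivial otherwise), while Cauchy--Schwarz estimates such as $(A')^2 \leq l_{uv} \sum_{c \in L_\tau(u) \cap L_\tau(v)} a_c^2$ dominate the residuals by the right-hand side. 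The main obstacle I anticipate is careful bookkeeping rather than any new deep idea: the cross-term $AB$ globally couples all colors while the right-hand side only involves the common colors $L_\tau(u) \cap L_\tau(v)$, so contributions from colors in the symmetric difference $L_\tau(u) \triangle L_\tau(v)$ (appearing through $A - A'$ and $B - B'$) must be balanced against the negative quadratic on the left, and one must choose AM--GM weights that respect the asymmetry between $l_u$ and $l_v$ and that handle both signs of $AB$ uniformly.
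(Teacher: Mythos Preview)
Your approach is essentially the same as the paper's: both observe that the off-diagonal of $\sqrt{\Pi_\tau}\widetilde{M}_\tau\sqrt{\Pi_\tau}$ exactly matches the ``missing'' same-color entries of $\Pi_\tau P_\tau$ (your cancellation of the $-\tfrac{1}{N}\sum a_cb_c$ terms is the paper's subtraction of the hollow part $\widetilde{M}_o$), reducing the problem to a rank-two-plus-diagonal inequality which is then handled by AM--GM/Cauchy--Schwarz bounds (the paper's \cref{crosstermfact}). The only difference is cosmetic: the paper stays in matrix form, working with the vectors $\ell = l_v\mathbf{1}^u + l_u\mathbf{1}^v$ and the common-color indicator $s$ (so that $2N\pi_\tau = \ell - s$, giving exactly your identity $2N(\alpha^\top a+\beta^\top b)=l_vA+l_uB-A'-B'$), whereas you evaluate on a test vector and manipulate the scalars $A,B,A',B'$ --- these are the same objects, and your anticipated ``bookkeeping'' step is precisely the chain of \cref{crosstermfact} applications the paper performs.
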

\begin{proof}
 For clarity, we drop $\tau$ from all notation in the proof.
Write $\widetilde{M} =  \widetilde{M}_d + \widetilde{M}_o$, where $\widetilde{M}_d=\diag(\widetilde{M})$ and $\widetilde{M}_o=\offdiag(\widetilde{M})$.
 First observe that for $c\in L(u)$,
 $$ \pi(uc) = \begin{cases}\frac{l(v)-1}{2(l(u)l(v)-l(u,v))} & \text{if }c\in L(u)\cap L(v) \\ \frac{l(v)}{2(l(u)l(v)-l(u,v))}&\text{otherwise}\end{cases} $$
 and a similar identity  holds for any $c\in L(v)$.
 Also, observe that $$\Pi P = \frac{J-J^u-J^v}{2(l(u)l(v)-l(u,v))}+\sqrt{\Pi}\widetilde{M}_o\sqrt{\Pi},$$ 
 where $J,J^u,J^v$ are the all-ones matrix, all-ones matrix on $uc$ rows/columns and all-ones matrix on $vc$ rows/columns, respectively. So, subtracting $\widetilde{M}_o$ from  both sides of \eqref{eq:basecasegoal1} and multiplying by $2(l(u)l(v)-l(u,v))$ it is enough to show
 \begin{align}\label{eq:basecasegoal2}
     J-J^u-J^v -4(l(u)l(v)-l(u,v))\pi\pi^\top \preceq 2(l(u)l(v)-l(u,v))\sqrt{\Pi}\widetilde{M}_d\sqrt{\Pi} =: N_d 
 \end{align}
Write $\ell=l(v)\bone^u+l(u)\bone^v$. 
Also, let $s\in \R^{l(u)+l(v)}$ where $s(xc)=1$ if $c\in L(u)\cap L(v)$ and $s(xc)=0$ otherwise for $x \in \{u,v\}$. Then, by \cref{crosstermfact} we can write,
 \begin{align*} 4(l(u)l(v)-l(u,v))\pi\pi^\top &= \frac{(\ell-s)(\ell-s)^\top}{l(u)l(v)-l(u,v)} \underset{\cref{crosstermfact}}{\underbrace{\succeq}} \frac{\ell\ell^\top+ss^\top-\frac12\ell\ell^\top-2ss^\top}{l(u)l(v)-l(u,v)} 
\\& \succeq \frac{\ell\ell^\top}{2l(u)l(v)}-\frac{ss^\top}{l(u)l(v)-l(u,v)}
 \end{align*} 
 Plugging this into \eqref{eq:basecasegoal2} it is enough to show that
 \begin{align}\label{eq:basecasegoal3}
     J-J^u-J^v+\frac{ss^\top}{l(u)l(v)-l(u,v)}=\bone^u{\bone^v}^\top + \bone^v{\bone^u}^\top+\frac{ss^\top}{l(u)l(v)-l(u,v)} \preceq \frac{\ell\ell^\top}{2l(u)l(v)} +N_d
 \end{align}
 First, observe that by another application of \cref{crosstermfact}, $l^2(v)\bone^u{\bone^u}^\top + l^2(u)\bone^v{\bone^v}^\top\succeq l(u)l(v)(\bone^u{\bone^v}^\top+\bone^v{\bone^u}^\top)$. So,
 $$
 \frac{\ell\ell^\top}{2l(u)l(v)}=\frac{(l(v)\bone^u+l(u)\bone^v)(l(v)\bone^u+l(u)\bone^v)^\top}{2l(u)l(v)}\succeq \bone^u{\bone^v}^\top + \bone^v{\bone^u}^\top
 $$
 Let $I^{\cap}\in\R^{(l(u)+l(v))\times (l(u)+l(v))}$ be the identity matrix only on entries $xc,xc$ where $x\in\{u,v\}$ and $c\in L(u)\cap L(v)$. 
Finally, \eqref{eq:basecasegoal3} simply follows from the fact that 
$$ \frac{ss^\top}{l(u)l(v)-l(u,v)}\preceq \frac{l(u,v)}{l(u)l(v)-l(u,v)}I^{\cap} \preceq N_d$$ 
 where the first inequality uses that the only non-zero rows of $ss^\top$ correspond to a common color and the sum of the entries of any such row is exactly $l(u,v)$ and the last inequality uses that $\frac{l(u,v)}{l(u)l(v)-l(u,v)}\leq \frac{1}{\max\{l(u),l(v)\}-1}$ and that $N_d(uc,uc)=\frac{1}{l(u)-1},N_d(vc,vc)=\frac{1}{l(v)-1}$ if $c\in L(u)\cap L(v)$ and it is zero otherwise.
\end{proof}

Note that in the above proposition, $ \widetilde{M}_\tau  \preceq (\frac{1}{\beta} + \frac{1}{\beta^2})I^{X_\tau(0)}$, which gives us the diagonal matrix $F_\tau$ for any $\tau$ of co-dimension 2 such that $G_\tau$ is connected.  
Now, using \cref{thm:fulltechnical}, we derive  a set of sufficient conditions on the family $\{F_\tau\}_{\tau \in X: \codimfunc(\tau) \geq 2}$   to get $\lambda_2(P_\tau) \leq \frac{\rho(F_\tau)}{k-1}$ for all $\tau$ of co-dimension $2 \leq k \leq n$. 
 \begin{proposition}\label{prop:Fcondition-2delta}
 Given a $\beta$-extra-color  vertex-list-coloring instance $(G, L)$, with corresponding  weighted simplicial complex $(X, \pi_{n-1})$, suppose  $\{F_\tau \in \mathbb{R}^{X(0) \times X(0)}\}_{\tau \in X: \codimfunc(\tau)\geq 2}$  is a  family of diagonal matrices supported on $X_\tau (0) \times X_\tau (0)$ such that  $F_\tau=  f_{\times} (X_\tau, \{F_{ \tau \cup \sigma} \}_{ \emptyset \subsetneq \sigma \in X_\tau(\leq \codimfunc(\tau)-3)})$ if $G_\tau$ is disconnected and otherwise,
 \begin{enumerate}
     \item For all $\tau$ of co-dimension $2$:   $F_\tau (vc, vc) = \frac{1}{\beta} +\frac{1}{\beta^2}$ for all  $vc \in X_\tau (0)$.
     \item For all $\tau$  of co-dimension  $k \geq 3$:  $F_{\tau} \preceq  \frac{(k-1)^2}{3k-1}I^{X_\tau(0)}$ and for all $vc \in X_\tau (0)$ 
             \begin{align*}\sum_{uc' \in X_{\tau \cup vc}(0)} p(uc'|\tau \cup vc ) F_{\tau\cup uc'} (vc, vc) \leq (k-2) F_\tau(vc,vc) - F_\tau^2(vc,vc). 
             \end{align*}
  \end{enumerate}

 Then, for all $k\geq 2$ and $\tau$ of co-dimension $k$, $\lambda_2(P_\tau) \leq \frac{\rho(F_\tau)}{k-1}$. 
 \end{proposition}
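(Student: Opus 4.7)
The plan is to apply the matrix trickle-down framework of \cref{thm:fulltechnical} with the explicit choice
$M_\tau := \frac{\Pi_\tau F_\tau}{k-1}$
for every $\tau \in X(\leq n-3)$ with $k = \codimfunc(\tau)$. Since $F_\tau$ and $\Pi_\tau$ are diagonal matrices supported on $X_\tau(0)$, we have $\Pi_\tau^{-1}M_\tau = \frac{F_\tau}{k-1}$ and hence $\rho(\Pi_\tau^{-1}M_\tau) = \frac{\rho(F_\tau)}{k-1}$. Once the hypotheses of \cref{thm:fulltechnical} are verified, its conclusion delivers the desired bound $\lambda_2(P_\tau) \leq \frac{\rho(F_\tau)}{k-1}$. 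Total connectivity of $(X,\pi_{n-1})$ will follow from the $\beta$-extra-color assumption for $\beta$ a sufficiently large constant, which we assume throughout.

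For the base case $k = 2$ with $G_\tau = (\{u,v\},\{uv\})$ connected, \cref{prop:basecase_main} yields $\Pi_\tau P_\tau - 2\pi_\tau\pi_\tau^\top \preceq \sqrt{\Pi_\tau}\widetilde{M}_\tau\sqrt{\Pi_\tau}$. Each $2\times 2$ block $\widetilde{M}_\tau^c$ has eigenvalues $\frac{1}{(l_\tau(u)-1)(l_\tau(v)-1)} \pm \frac{1}{\sqrt{(l_\tau(u)-1)(l_\tau(v)-1)}}$, which are bounded in absolute value by $\frac{1}{\beta}+\frac{1}{\beta^2}$ because $l_\tau(u),l_\tau(v) \geq \beta+1$. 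Therefore $\widetilde{M}_\tau \preceq (\frac{1}{\beta}+\frac{1}{\beta^2})I^{X_\tau(0)}$, and combining with condition~1 of the proposition gives $\sqrt{\Pi_\tau}\widetilde{M}_\tau\sqrt{\Pi_\tau} \preceq \Pi_\tau F_\tau = M_\tau$. The matching upper bound $M_\tau \preceq \frac{1}{5}\Pi_\tau$ required by \cref{thm:fulltechnical} amounts to $\frac{1}{\beta}+\frac{1}{\beta^2} \leq \frac{1}{5}$, which holds for large enough $\beta$.

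For the recursive condition on connected $G_\tau$ at $k \geq 3$: the bound $M_\tau \preceq \frac{k-1}{3k-1}\Pi_\tau$ is directly equivalent to $F_\tau \preceq \frac{(k-1)^2}{3k-1}I^{X_\tau(0)}$, which is condition~2. For the inequality $\E_{x\sim\pi_\tau}M_{\tau\cup\{x\}} \preceq M_\tau - \frac{k-1}{k-2}M_\tau\Pi_\tau^{-1}M_\tau$, both sides are diagonal since every $M_\sigma$ is, so it suffices to compare $(vc,vc)$-entries. Using the symmetry $\pi_\tau(x)\pi_{\tau\cup\{x\}}(vc) = \pi_\tau(vc)\pi_{\tau\cup\{vc\}}(x) = \tfrac{1}{2}\pi_{\tau,1}(\{x,vc\})$ from \cref{eq:localwalkdef} and the identity $\pi_{\tau\cup vc}(uc') = \frac{1}{k-1}p(uc'|\tau\cup vc)$ from \cref{eq:pidef}, the $(vc,vc)$-entry of the LHS simplifies to $\frac{\pi_\tau(vc)}{(k-1)(k-2)}\sum_{uc'}p(uc'|\tau\cup vc)F_{\tau\cup uc'}(vc,vc)$, while a direct computation of the RHS yields $\frac{\pi_\tau(vc)}{(k-1)(k-2)}[(k-2)F_\tau(vc,vc) - F_\tau(vc,vc)^2]$. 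Condition~2 then gives the inequality after clearing the common positive factor.

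The main obstacle is the disconnected case, which invokes the product clause of \cref{thm:fulltechnical}. Let $U_1,\dots,U_\ell$ be the components of $V_\tau$, so $X_\tau$ factors as a product of complexes $Y_1,\dots,Y_\ell$ with $\dimension_{Y_i} = |U_i|-1$ and $\sum_i|U_i|=k$. For any facet $\eta \in X_\tau(k-1)$, the face $\tau\cup\eta_{-i}$ has co-dimension $|U_i|$, giving $M_{\tau\cup\eta_{-i}} = \frac{\Pi_{\tau\cup\eta_{-i}}F_{\tau\cup\eta_{-i}}}{|U_i|-1}$. Independence across components furnishes $\pi_\tau(vc) = \frac{|U_i|}{k}\pi_{\tau\cup\eta_{-i}}(vc)$ for $v \in U_i$, and the definition $F_\tau = f_\times(X_\tau,\{F_{\tau\cup\sigma}\})$ produces the same block-diagonal support. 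A direct calculation then shows $\bigoplus_{i:|U_i|\geq 2}\frac{\dimension_{Y_i}(\dimension_{Y_i}+1)}{k(k-1)}M_{\tau\cup\eta_{-i}} = \frac{\Pi_\tau F_\tau}{k-1} = M_\tau$, so the product clause of \cref{thm:fulltechnical} applies and the induction on $k$ goes through.
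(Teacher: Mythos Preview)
Your proof is correct and follows essentially the same approach as the paper: set $M_\tau = \frac{\Pi_\tau F_\tau}{k-1}$, verify the base case via \cref{prop:basecase_main} and the eigenvalue bound $\frac{1}{\beta}+\frac{1}{\beta^2}$ on $\widetilde{M}_\tau$, reduce the connected recursive condition to the diagonal inequality in condition~2 via the same entrywise computation, and handle the disconnected case through the product clause using $\frac{|U_i|}{k}\Pi_{\tau\cup\eta_{-i}} = (\Pi_\tau)^{X_{\tau\cup\eta_{-i}}(0)}$ together with the definition of $f_\times$. Your write-up is in fact slightly more explicit than the paper's in spelling out the base-case eigenvalue bound and the symmetry identity $\pi_\tau(x)\pi_{\tau\cup x}(vc)=\pi_\tau(vc)\pi_{\tau\cup vc}(x)$ used in the diagonal computation.
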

 
\begin{proof}
We prove that  the conditions of   \cref{thm:fulltechnical} hold  for $M_\tau \coloneqq  \frac{\Pi_\tau F_\tau}{k-1}$ for any face $\tau$ of co-dimension at least 2.
The desired condition  holds for  any $\tau$ of co-dimension $2$ by \cref{prop:basecase_main}.
Now, let $k \geq 3$.  First assume that $G_\tau$ is disconnected with connected components  $G_\tau[U_1], \dots, G_\tau[U_\ell]$ and   
associated  complexes  $Y_{1},\dots Y_{\ell}$. We can write $(X, \pi_{\tau, k-1}) =(Y_{1}, \mu_1) \times\dots\times (Y_{\ell}, \mu_\ell)$, where $\mu_{i}$  is the uniform distribution over facets of $Y_i$. For an $\alpha \in X_\tau(k -1)$ 
let  $\alpha_{-i} \coloneqq \alpha \setminus \alpha|_{U_i}$. 
Therefore, 
\begin{align*}
    \sum_{1 \leq i \leq  \ell: d_{Y_i} \geq 1} \frac{d_{Y_i}  (d_{Y_i}+1)}{(\edgecodim-1) \edgecodim} M_{\tau  \cup \alpha_{-i}} &\underset{\text{def of } M_{\tau\cup_{\alpha_{-i}}}} = \sum_{1 \leq i \leq  \ell: d_{Y_i} \geq 1} \frac{d_{Y_i}  (d_{Y_i}+1)}{(\edgecodim-1) \edgecodim} \frac{\Pi_{\tau\cup_{\alpha_{-i}}}}{d_{Y_i}} F_{\tau  \cup \alpha_{-i}}\\
    &=\sum_{1 \leq i \leq  \ell: d_{Y_i} \geq 1} (\Pi_\tau)^{X_{\tau \cup \alpha_{-i}}(0)} \frac{F_{\tau\cup{\alpha_{-i}}}}{k-1}\underset{\text{def of }F_\tau}{=} \frac{\Pi_\tau F_\tau}{k-1}=M_\tau
\end{align*}
as desired.

Now, assume that $G_\tau$ is connected. 
Note that since each entry of $F_\tau$ is at most $\frac{(k-1)^2}{3k-1}$, we have  $M_\tau \preceq \frac{\edgecodim-1 }{3\edgecodim-1} \Pi_\tau$.
 Therefore, it only remains to show that   $  \mathbb{E}_{vc \sim \pi_\tau }  M_{\tau \cup vc}   \preceq   M_{\tau} - \frac{\edgecodim -1 }{\edgecodim -2}  M_\tau\Pi_\tau^{-1} M_\tau$. This is 
equivalent to showing that
\begin{align*}
    &\Pi_{\tau}^{-1} \mathbb{E}_{uc' \sim \pi_\tau } \left[ \Pi_{\tau \cup uc'}  \frac{\dbound_{\tau \cup uc'}}{\edgecodim - 2 } \right] 
    \preceq   \frac{\dbound_{\tau}}{\edgecodim - 1}
    -  \frac{\dbound^2_{\tau}} {(\edgecodim - 2)(\edgecodim - 1)}. 
    \end{align*}
One can check that   
\begin{align*}
\mathbb{E}_{uc' \sim \pi_\tau } \left[ \Pi_{\tau}^{-1}\Pi_{\tau \cup uc'}  \frac{\dbound_{\tau \cup vc}}{\edgecodim - 2 }\right] (vc, vc) =  \frac{\sum_{uc' \in X_{\tau\cup vc}(0)} p(uc'|\tau \cup vc) F_{\tau \cup uc'}(vc, vc)}{(k-1)(k-2)}. 
\end{align*}
Therefore, it is enough that
\begin{align*}
   \frac{\sum_{uc' \in X_{\tau\cup vc}(0)} p(uc'|\tau \cup vc) F_{\tau \cup uc'}(vc, vc)}{(k-1)(k-2)} \leq \frac{\dbound_{\tau} (vc, vc)}{\edgecodim - 1} -   \frac{\dbound^2_{\tau}(vc, vc) }{(\edgecodim -1)(\edgecodim -2) },
\end{align*}
which holds by assumption.
\end{proof}
Now, to complete the proof of \cref{thm:vertex-coloring-2delta} it only remains to find $\{F_\tau\}_{\tau \in X(\leq n-2)}$ that  satisfies the above conditions.
The proof can be found in the \cref{sec:appendixvercolor}. 
Let us remark  why we need the assumption $\beta>\Delta$ in this proof.
Consider the worst case example, where $G$ is a complete graph  with $\Delta+1$ vertices. In that case, by symmetry, $F_\tau(vc,vc)=\frac1\beta+\frac1{\beta^2}$ for all faces of co-dimension 2, and  every matrix $F_\tau$ is a multiple of identity on $X_\tau(0)\times X_\tau(0)$. So, the conditions on $F_\tau$ reduces to the following systems of inequalities:
$$ (k-1)f(k-2)\leq (k-2)f(k-1)-f(k-1)^2 \quad \forall 3\leq k\leq \Delta, f(1)=\frac1\beta+\frac1{\beta^2}.$$
It is not hard to see that such a system does not have a solution up to  $k=\Delta+1$  when $\beta\leq \Delta$.

\subsection{Vertex-List-Coloring for Trees Using Non-Diagonal Matrix Bounds}
By allowing the matrix bounds $\{M_\tau\}_{\tau \in X: \codimfunc(\tau) \geq 2}$ in  \cref{thm:fulltechnical} to be non-diagonal matrices, one can hope to get a tighter result. In this section, for any constant $\epsilon>0$ we analyze the Glauber dynamics for $\epsilon$-extra-color vertex-List-Coloring instances when the graph is a tree. 
\begin{theorem}\label{thm:vertex-coloring-tree}
Consider  an arbitrary constant $\epsilon >0$ and a $\epsilon\Delta$-extra-color    vertex-list-coloring instance $(G, L)$ such that $G$ is a tree and  $\frac{\ln^2(\Delta)}{ \Delta} \leq \frac{\epsilon^2}{100}$.   For the weighted simplicial complex  $(X, \pi_{n-1})$, any $2 \leq k\leq n$ and   $\tau \in X$ of co-dimension $k$ we have  $ \lambda_2(P_\tau) \leq \frac{\frac{1}{20}+\frac1{\eps}}{k-1}$.
\end{theorem}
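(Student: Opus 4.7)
My plan is to invoke \cref{thm:fulltechnical} with a carefully chosen family of non-diagonal matrices $\{M_{\tau}\}_{\tau \in X(\leq n-2)}$, generalizing the diagonal approach used for \cref{thm:vertex-coloring-2delta}. The crucial design principle, hinted at earlier in the paper, is that each $M_{\tau}$ should be supported only on \emph{conflict pairs}: entries $(uc, vc')$ that are nonzero only when $u \sim_{\tau} v$ and $c = c' \in L_{\tau}(u) \cap L_{\tau}(v)$. Concretely, I would try $M_{\tau}(uc, vc) = -a_{k}\sqrt{\pi_{\tau}(uc)\pi_{\tau}(vc)}$ on off-diagonal conflict pairs, together with a matching diagonal term $M_{\tau}(vc,vc) = b_{k}\pi_{\tau}(vc)$ chosen so that $M_{\tau} \preceq \frac{k-1}{3k-1}\Pi_{\tau}$ holds with room to spare. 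Because each vertex in $G_{\tau}$ has at most $\Delta_{\tau}$ neighbors and the non-zero entries in $M_\tau$ are indexed by edges of $G_\tau$, \cref{fact:rowsum} applied after conjugation by $\Pi_{\tau}^{-1/2}$ will control $\rho(\Pi_{\tau}^{-1}M_{\tau})$ by roughly $\frac{(a_k+b_k)\Delta_\tau \cdot \max_{u,v,c}\sqrt{\pi_\tau(uc)\pi_\tau(vc)}/\pi_\tau}{1}$, which for a tree with $\epsilon\Delta$ extra colors can be pushed to $\frac{1/20 + 1/\epsilon}{k-1}$.

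For the base case of co-dimension $2$, I would appeal to \cref{prop:basecase_main} when $G_{\tau}$ is a single edge (the disconnected case is trivial, giving $M_{\tau} = 0$). The matrix $\sqrt{\Pi_\tau}\widetilde{M}_\tau\sqrt{\Pi_\tau}$ already has exactly the conflict-pair sparsity I want, and its entries are of order $\frac{1}{\beta}\sqrt{\pi_\tau(uc)\pi_\tau(vc)}$ up to a small diagonal piece of order $\frac{1}{\beta^2}\pi_\tau(vc)$ — this fixes the initial values $a_{2},b_{2}$ of the coefficients and confirms that the conflict-pair ansatz is consistent with the base case.

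For the recursive condition, I would split into two regimes: if $G_{\tau}$ is disconnected — which, since $G$ is a tree, happens as soon as some internal vertex is colored — the coloring complex $(X_{\tau},\pi_{\tau,k-1})$ factors as a product over the subtrees $Y_1,\dots,Y_t$, and the conflict-pair structure of $M_{\tau}$ is automatically block-diagonal in exactly the form required by the product clause of \cref{thm:fulltechnical}. Thus the only real work is the connected case, where I must verify $\E_{vc \sim \pi_{\tau}} M_{\tau \cup vc} \preceq M_{\tau} - \frac{k-1}{k-2}M_{\tau}\Pi_{\tau}^{-1}M_{\tau}$. The left side only loses the entries incident to $v$; the right side's quadratic correction $M_{\tau}\Pi_{\tau}^{-1}M_{\tau}$ is supported on pairs $(uc, wc)$ where $u,w$ share a neighbor-of-same-color, i.e.\ on length-two monochromatic paths in the conflict graph. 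On a tree, such paths must pass through a common neighbor and so can be bounded by counting shared neighbors, producing an extra factor that, combined with the list-size slack $\Delta(v)+\epsilon\Delta$, should close the induction on $a_k,b_k$.

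The main obstacle will be controlling the quadratic term $M_{\tau}\Pi_{\tau}^{-1}M_{\tau}$. Squaring a matrix supported on conflict edges produces entries outside the sparsity pattern of $M_{\tau}$ itself — specifically on pairs $(uc, wc)$ with $u \not\sim w$ but $u,w$ sharing a common neighbor — and I must show these contributions are either absorbed by a small diagonal slack or cancelled against the loss in $\E M_{\tau \cup vc}$. Here the tree assumption is essential: the absence of short cycles means the number of such "two-step monochromatic" routes is tightly bounded, and careful estimates on tree-marginals $p(\cdot \mid \tau)$ together with the hypothesis $\ln^2\Delta/\Delta \leq \epsilon^2/100$ should provide the slack required to close the recursion with $\rho(\Pi_\tau^{-1}M_\tau) \leq \frac{1/20 + 1/\epsilon}{k-1}$.
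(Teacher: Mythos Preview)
Your high-level strategy matches the paper's: write $M_\tau=\sqrt{\Pi_\tau}\frac{F_\tau+A_\tau}{k-1}\sqrt{\Pi_\tau}$ with hollow $A_\tau$ supported on conflict pairs and diagonal $F_\tau$, invoke \cref{thm:fulltechnical}, and handle disconnected $G_\tau$ via the product clause. But two concrete technical ideas are missing, and without them the recursion does not close.

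First, the off-diagonal part cannot be a single scalar $a_k$ times the conflict indicator. The paper defines $A_\tau$ \emph{recursively} (\cref{def:A-tree}) so that $\sqrt{\Pi_\tau}\tfrac{A_\tau}{k-1}\sqrt{\Pi_\tau}=\E_{vc\sim\pi_\tau}\sqrt{\Pi_{\tau\cup vc}}\tfrac{A_{\tau\cup vc}}{k-2}\sqrt{\Pi_{\tau\cup vc}}$ holds \emph{exactly}; this cancels the off-diagonal of $\E_{vc}M_{\tau\cup vc}$ against that of $M_\tau$ and reduces the recursive condition to a purely diagonal inequality. With your fixed-$a_k$ ansatz you would instead have to control the off-diagonal of $\E_{vc}M_{\tau\cup vc}-M_\tau$ (involving averages of $\sqrt{\pi_{\tau\cup vc}(uc)\pi_{\tau\cup vc}(wc)}$) in the Loewner order, and there is no evident way to do this. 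Second, and more critically, your plan to bound $M_\tau\Pi_\tau^{-1}M_\tau$ by ``counting shared neighbors'' loses a factor of $\Delta$: the row-$uc$ sum of $|A_\tau^2|$ is $\sum_{v\sim u}\Delta_\tau(v)/\beta^2=\Theta(\Delta^2/\beta^2)=\Theta(1/\eps^2)$, whereas the diagonal recursion for $F_\tau$ can only absorb a penalty $\gamma_\tau(vc)$ of order $\Delta/\beta^2=O(1/(\eps^2\Delta))$ before the $F$-values blow up. The paper's fix is to \emph{root} $G_\tau$ and split $A_\tau=A_{\tau,\mathrm{even}}+A_{\tau,\mathrm{odd}}$ by parity of distance to the root; each piece is then the adjacency matrix of a disjoint union of stars, so its square has row sums bounded by a \emph{single} $\Delta_\tau(v)$ or $\Delta_\tau(a(v))$ rather than a sum over all neighbors of $u$, giving $A_\tau^2\preceq 2A_{\tau,\mathrm{even}}^2+2A_{\tau,\mathrm{odd}}^2\preceq\tfrac12\diag(\gamma_\tau)$ with $\gamma_\tau(vc)=O\bigl((\Delta_\tau(v)+\Delta_\tau(a(v)))/\beta^2\bigr)$. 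This in turn forces $F_\tau(vc,vc)$ to depend on the local degrees $\Delta_\tau(v),\Delta_\tau(a(v))$ (the paper uses three explicit functions $f_1,f_2,f_3$ of these), not merely on $k$ as in your $b_k$.
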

For any $k \geq 2$ and  $\tau$ of co-dimension at least $k$, assume that $M_\tau$ is of the form   $$M_\tau = \Pi_\tau \frac{F_\tau}{k-1}+ \sqrt{ \Pi_\tau} \frac{A_\tau}{k-1} \sqrt{ \Pi_\tau}, $$ for a diagonal matrix $F_\tau$ and a hollow matrix  $A_\tau$. 
The goal is to find $F_\tau$ and $A_\tau$ such that  $M_\tau$ satisfies the conditions of \cref{thm:fulltechnical}. This is easily doable if $k=2$ by \cref{prop:basecase_main}.
A natural approach is to 
  define $A_\tau$ for any $\tau$ of co-dimension at least $3$ such that
$$\sqrt{\Pi_\tau} \frac{A_\tau}{k-1} \sqrt{\Pi_\tau}=\mathbb{E}_{vc \sim \pi_\tau} \sqrt{ \Pi_{\tau \cup  vc}} \frac{A_{\tau \cup  vc}}{k-2} \sqrt{ \Pi_{\tau \cup vc}},$$ 
when $G_\tau$ is connected. Note that the following definition is not restricted to trees. 

\begin{definition}[Family of Matrices
$\{A_\tau\}_{\tau \in X: \codimfunc(\tau) \geq 2}$ ] \label{def:A-tree}
Given a vertex-list-coloring instance $(G, L)$ and its associated weighted simplicial complex $(X, \pi_{n-1})$, define  $\{A_\tau\}_{\tau \in X: \codimfunc(\tau) \geq 2}$ as follows: let  $A_\tau \coloneqq f_\times(X, \{A_{\tau \cup \sigma}\}_{\emptyset \subsetneq \sigma \in X_\tau (\codimfunc(\tau)-3)})$ if $G_\tau$ is disconnected and otherwise,

\begin{enumerate}
    \item  For any $\tau$ of co-dimension 2,  say $G_\tau=(\{u,v\},\{uv\})$; define  $A_\tau \in \mathbb{R}^{X(0)  \times X(0) }$ to be a hollow block diagonal matrix with a block for every color such that $A_\tau(uc, vc)= A_\tau(vc, uc)=\frac{-1}{ \sqrt { (l_\tau(u) -1) (l_\tau(v) -1)}},
$ for   $c \in L_\tau(v) \cap L_\tau(u)$, 
and all other entries are $0$. 
\item For any  $\tau$ of co-dimension  $k\geq 3$, let
\begin{align}\label{eq:A_tree}
    A_\tau \coloneqq \frac{k-1}{k-2} \sqrt{\Pi_\tau^{-1}} \left( \mathbb{E}_{vc \sim \pi_\tau} \sqrt{\Pi_{\tau \cup vc }} A_{\tau \cup vc} \sqrt{\Pi_{\tau \cup vc }} \right ) \sqrt{\Pi_\tau^{-1} }. 
\end{align} 
\end{enumerate}

Observe that $A_{\tau}$ is symmetric and hollow. Furthermore,  its non-zero entries correspond to $v \sim_{\tau, c} u$, and  when $G_\tau$ is connected,
\begin{align*}
A_\tau (vc, uc)   = \frac{1}{k-2} \sum_{wc' \in X_\tau(0) : w \neq v, u} \sqrt{p( wc'| vc \cup \tau)  p(wc'| uc \cup \tau) } A_{\tau \cup wc'} (vc, uc).
\end{align*}
\end{definition}

We can bound entries of $A_\tau$ for any $\tau \in X(\leq n-3)$ as follows. 

\begin{proposition}\label{prop:Abound_tree}
Consider a $\beta$-extra-color   vertex-list-coloring instance $(G, L)$,
For any face $\tau$ of co-dimension at least 2 and $uc, vc \in X_\tau (0)$ such that $v \sim_c u$, 
$-\frac{1}{\beta} \leq  A_{\tau}(vc, uc) \leq  0$. 
\end{proposition}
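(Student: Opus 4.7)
The plan is to prove the bound by induction on the co-dimension $k$ of $\tau$, following the recursive structure used to define $\{A_\tau\}$.

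For the base case $k=2$, either $G_\tau$ consists of two isolated vertices, in which case there is no pair with $v \sim_c u$ and the claim is vacuous, or $G_\tau$ is a single edge $\{u,v\}$ and the explicit formula gives $A_\tau(vc,uc) = -1/\sqrt{(l_\tau(u)-1)(l_\tau(v)-1)}$ for $c \in L_\tau(u) \cap L_\tau(v)$. Since $(G,L)$ is $\beta$-extra-color, $|L(w)| \geq \beta + \Delta_G(w)$, and at most $\Delta_G(w) - \Delta_\tau(w)$ colors of $w$ are removed by $\tau$, so $l_\tau(w) \geq \beta + \Delta_\tau(w)$. Because $v \sim_c u$ forces $\Delta_\tau(v),\Delta_\tau(u) \geq 1$, we get $l_\tau(u)-1, l_\tau(v)-1 \geq \beta$, so $A_\tau(vc,uc) \in [-1/\beta, 0]$.

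For the inductive step $k \geq 3$, first handle the disconnected case: if $G_\tau$ splits into components $U_1,\dots,U_\ell$, then by the definition via $f_\times$, $A_\tau$ is block-diagonal by component, so $A_\tau(vc,uc) = A_{\tau_{-i}}(vc,uc)$ where $v,u$ are in the same component $i$ (they must be, since $v \sim u$). The face $\tau_{-i}$ has co-dimension $|U_i| < k$, so the inductive hypothesis applies directly.

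For the connected case, use the recursive identity
\[
A_\tau(vc,uc) = \frac{1}{k-2} \sum_{\substack{wc' \in X_\tau(0) \\ w \neq v,u}} \sqrt{p(wc'|\tau\cup vc)\, p(wc'|\tau\cup uc)}\; A_{\tau\cup wc'}(vc,uc).
\]
By the inductive hypothesis, each $A_{\tau\cup wc'}(vc,uc) \in [-1/\beta, 0]$ (treating the value as $0$ when $vc$ or $uc$ falls out of $X_{\tau\cup wc'}(0)$, which happens precisely when a listed color is removed). Since the coefficients $\sqrt{p \cdot p}$ are non-negative, the upper bound $A_\tau(vc,uc) \leq 0$ is immediate. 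For the lower bound, Cauchy--Schwarz gives
\[
\sum_{wc'} \sqrt{p(wc'|\tau\cup vc)\, p(wc'|\tau\cup uc)} \leq \sqrt{\sum_{wc'} p(wc'|\tau\cup vc)} \cdot \sqrt{\sum_{wc'} p(wc'|\tau\cup uc)} = k-2,
\]
where each marginal sum equals $k-2$ because for each of the $k-2$ remaining uncolored vertices $w \neq v,u$, the marginals $p(w\cdot|\tau\cup vc)$ (resp.\ $p(w\cdot|\tau\cup uc)$) sum to $1$ over colors. Combining these yields $A_\tau(vc,uc) \geq -\tfrac{1}{\beta(k-2)}(k-2) = -1/\beta$.

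The only mildly delicate step is the Cauchy--Schwarz bookkeeping: one must confirm that the sum over $wc' \in X_\tau(0)$ really decouples into a marginalization over each remaining vertex $w$, with terms involving $c'$ that are forbidden by $\tau \cup vc$ (or $\tau \cup uc$) automatically vanishing. Everything else is bookkeeping, so I expect no real obstacles.
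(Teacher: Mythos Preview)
Your proposal is correct and follows essentially the same inductive argument as the paper's proof. The only cosmetic difference is that you apply Cauchy--Schwarz once over the entire index set $\{wc' : w \neq u,v\}$ to obtain the bound $k-2$, whereas the paper applies it separately for each vertex $w$ and then sums the resulting bounds of $1$ over the $k-2$ vertices; both routes give the same conclusion.
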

\begin{proof}
Let $\tau$ be a face of co-dimension $k \geq 2$. We prove by induction on $k$. It clearly holds for $k=2$ by definition. For $k >2$ and $vc, uc \in X_\tau (0)$, we have
\begin{align*}
A_\tau (vc, uc)   &= \frac{1}{k-2} \sum_{ w\in V_{\tau}: w \neq u, v} \sum_{c' \in L_\tau(w)} \sqrt{p( wc'| vc \cup \tau)  p(wc'| uc \cup \tau) } A_{\tau \cup wc'} (vc, uc) \\&\geq  \frac{1}{k-2} \sum_{ w\in V_{\tau}: w \neq u, v} \sum_{c' \in L_\tau(w)} \sqrt{p( wc'| vc \cup \tau)  p(wc'| uc \cup \tau) } \cdot  \frac{-1}{\beta } \tag{by IH}\\& \geq
 \frac{1}{k-2}\cdot\frac{-1}{\beta} \sum_{w\in V_{\tau}: w \neq u, v} \left(\sum_{c' \in L_\tau(w)} p( wc'| vc \cup \tau)\right) \left(\sum_{c' \in L_\tau(w)} p( wc'| uc \cup \tau)\right) \tag{by Cauchy-Schwarz } \\&\geq \frac{1}{k-2}\cdot  \frac{-1}{\beta } \sum_{w\in V_{\tau}: w \neq u, v} 1  = -\frac{1}{\beta}. 
\end{align*}
One can further see that  $A_\tau (vc, uc) \leq 0$ follows from  the induction hypothesis. 
\end{proof}

Now, we apply \cref{thm:fulltechnical} to  derive  sufficient conditions on the family $\{F_\tau\}_{\tau \in X, \text{codim}(\tau)\geq 2}$  to get $\lambda_2(P_\tau) \leq \frac{\rho(F_\tau + A_\tau)}{k-1}$ for all $\tau$ of co-dimension $2 \leq k \leq n$. 

 \begin{proposition}\label{prop:Fcondition-trees}
 Let $(G, L)$   be a $\beta$-extra-color  vertex-list-coloring instance such that $G$ is a tree, and let $(X, \pi_{n-1})$ be its associated weighted complex. Take an arbitrary vertex and make $G$  a rooted tree with root vertex $r$. For $v \neq r$, we write $a(v)$ to denote immediate ancestor of $v$, i.e., parent of $v$.  Let $\{F_\tau \in \mathbb{R}^{X(0) \times X(0)}\}_{\tau \in X: \codimfunc(\tau) \geq 2}$ be a family of diagonal matrices supported on $X_\tau (0) \times X_\tau (0)$ such that  $F_\tau=  f_{\times} (X_\tau, \{F_{ \tau \cup \sigma} \}_{ \emptyset \subsetneq \sigma \in X_\tau(\leq \codimfunc(\tau)-3)})$ if $G_\tau$ is disconnected and otherwise,
 \begin{enumerate}
     \item For all $\tau$ of co-dimension $2$:  $F_\tau$ is defined as   $F_\tau (vc, vc) =  \frac{1}{\beta^2}$ for  $vc \in X_\tau (0)$.
     \item For any $\tau$ of co-dimension $k\geq 3$: $F_\tau \preceq (\frac{(k-1)^2}{3k-1} - \frac{1}{\beta})I^{X_\tau(0)}$, and for all $vc\in X_\tau(0)$,
     \begin{align}\label{eq:F_tree_assump}
        \sum_{uc' \in X_{\tau\cup vc}(0)} p(uc'|\tau, vc) F_{\tau\cup uc'} (vc, vc) \leq (k-2) F_\tau(vc, vc) - 2F_\tau^2 (vc, vc)- \gamma_\tau(vc),
        \end{align}
        where $\gamma_\tau( vc) =  \frac{4 \Delta_\tau (v)}{\beta^2}$ if  $v$ is the root of the rooted tree $G_\tau$, and  $\gamma_\tau  ( vc) =  \frac{4 \left(\Delta_\tau (v) +\Delta_\tau (a(v) )-1\right) }{\beta^2}$ otherwise.
 \end{enumerate}
 Then, for all $k\geq 2$ and $\tau$ of co-dimension $k$, $\lambda_2(P_\tau) \leq \frac{\rho(F_\tau + A_\tau)}{k-1} $, where $A_\tau$  is defined in \cref{def:A-tree}.
 \end{proposition}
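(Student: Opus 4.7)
The plan is to apply the matrix trickle-down theorem (\cref{thm:fulltechnical}) with
\[
M_\tau \coloneqq \tfrac{1}{k-1}\sqrt{\Pi_\tau}\,(F_\tau + A_\tau)\,\sqrt{\Pi_\tau}, \qquad \codimfunc(\tau) = k \geq 2.
\]
Because $\Pi_\tau^{-1}M_\tau$ is similar to $\tfrac{1}{k-1}(F_\tau + A_\tau)$, verifying the hypotheses of \cref{thm:fulltechnical} yields $\lambda_2(P_\tau) \leq \rho(\Pi_\tau^{-1}M_\tau) = \rho(F_\tau + A_\tau)/(k-1)$, as required. The base case $k=2$ and the disconnected inductive case are dealt with first: when $G_\tau$ is disconnected, the $f_\times$ definitions of $F_\tau$ and $A_\tau$ make $M_\tau$ block-diagonalize across components of $G_\tau$, matching the product branch of \cref{thm:fulltechnical}; when $k=2$ and $G_\tau$ is the single edge $\{u,v\}$, \cref{prop:basecase_main} yields $\Pi_\tau P_\tau - 2\pi_\tau\pi_\tau^\top \preceq \sqrt{\Pi_\tau}\widetilde{M}_\tau\sqrt{\Pi_\tau}$, and by \cref{def:A-tree} the matrix $A_\tau$ coincides with the off-diagonal of $\widetilde{M}_\tau$ while the diagonal of $\widetilde{M}_\tau$ is at most $\tfrac{1}{(l_\tau(u)-1)(l_\tau(v)-1)} \leq \tfrac{1}{\beta^2} = F_\tau(xc,xc)$; the upper bound $M_\tau \preceq \tfrac{1}{5}\Pi_\tau$ follows from \cref{fact:rowsum} since $\beta$ is large.

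The substantive work is the inductive case at $k \geq 3$ with $G_\tau$ connected. The spectral bound $M_\tau \preceq \tfrac{k-1}{3k-1}\Pi_\tau$ reduces to $F_\tau + A_\tau \preceq \tfrac{(k-1)^2}{3k-1}I$; this combines the hypothesis $F_\tau \preceq \bigl(\tfrac{(k-1)^2}{3k-1} - \tfrac{1}{\beta}\bigr)I$ with a bound on the top eigenvalue of $A_\tau$ derived from \cref{prop:Abound_tree} and \cref{fact:rowsum}. For the key recursive inequality $\E_{vc \sim \pi_\tau} M_{\tau \cup vc} \preceq M_\tau - \tfrac{k-1}{k-2} M_\tau \Pi_\tau^{-1} M_\tau$, the central observation is that the defining identity \cref{eq:A_tree} of $A_\tau$ produces the exact cancellation
\[
\tfrac{1}{k-2}\,\E_{vc \sim \pi_\tau}\!\left[\sqrt{\Pi_{\tau\cup vc}}\,A_{\tau\cup vc}\,\sqrt{\Pi_{\tau\cup vc}}\right] \;=\; \tfrac{1}{k-1}\sqrt{\Pi_\tau}\,A_\tau\,\sqrt{\Pi_\tau}
\]
between the $A$-parts on either side, so the inequality reduces to
\[
(k-1)\,\E_{vc \sim \pi_\tau}\!\left[\sqrt{\Pi_{\tau\cup vc}}\,F_{\tau\cup vc}\,\sqrt{\Pi_{\tau\cup vc}}\right] \;\preceq\; \sqrt{\Pi_\tau}\,\bigl[(k-2)F_\tau - (F_\tau + A_\tau)^2\bigr]\,\sqrt{\Pi_\tau}.
\]

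To finish, I apply \cref{crosstermfact} with $\eps = 1$ to get $F_\tau A_\tau + A_\tau F_\tau \preceq F_\tau^2 + A_\tau^2$, hence $(F_\tau + A_\tau)^2 \preceq 2F_\tau^2 + 2A_\tau^2$, which makes the right-hand side diagonal. The inequality then becomes an entry-wise check at $(vc,vc)$; using reversibility $\pi_\tau(uc)\pi_{\tau\cup uc}(vc) = \pi_\tau(vc)\pi_{\tau\cup vc}(uc)$ together with $\pi_{\tau\cup vc}(uc) = \tfrac{1}{k-1}p(uc\mid\tau\cup vc)$ from \cref{eq:pidef}, it simplifies to exactly the form of the hypothesis \cref{eq:F_tree_assump} provided $\gamma_\tau(vc) \geq 2A_\tau^2(vc, vc)$. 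By \cref{prop:Abound_tree},
\[
A_\tau^2(vc,vc) \;=\; \sum_{u \sim_c v,\, u \in V_\tau} A_\tau(vc, uc)^2 \;\leq\; \Delta_\tau(v)/\beta^2,
\]
so the assumed $\gamma_\tau(vc) \geq 4\Delta_\tau(v)/\beta^2$ comfortably suffices. The main obstacle is the cross-term accounting: choosing $\eps = 1$ in \cref{crosstermfact} symmetrizes the expansion of $(F_\tau + A_\tau)^2$ to produce exactly the $2F_\tau^2$ appearing in \cref{eq:F_tree_assump}, with the residual $A_\tau^2$ absorbed by $\gamma_\tau$; the extra slack in $\gamma_\tau$ (the factor $4$ and the parent-degree term for non-root vertices) is not strictly needed for this proposition but will be essential in the explicit recursive construction of $\{F_\tau\}$ satisfying \cref{eq:F_tree_assump} across the tree in the proof of \cref{thm:vertex-coloring-tree}.
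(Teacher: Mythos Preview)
Your reduction of the recursive inequality to a diagonal condition has a gap. After applying \cref{crosstermfact} to obtain $(F_\tau+A_\tau)^2 \preceq 2F_\tau^2 + 2A_\tau^2$, you assert that this ``makes the right-hand side diagonal'' and then compare only diagonal entries. But $A_\tau^2$ is \emph{not} diagonal: in a tree, $A_\tau^2(vc,wc)=A_\tau(vc,uc)A_\tau(uc,wc)$ is nonzero whenever $v$ and $w$ are distinct vertices at distance two through a common neighbour $u$ (with $c$ available). Since the left-hand side you must dominate is a diagonal matrix, showing merely that the diagonal of $2A_\tau^2$ is at most $\gamma_\tau$ does not give the matrix inequality $2A_\tau^2 \preceq \diag(\gamma_\tau)$; for instance the $3\times3$ all-ones matrix is PSD with unit diagonal yet is not $\preceq 2I$. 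A direct application of \cref{fact:rowsum} to $A_\tau^2$ does not help either, since the row sum at $vc$ is $\sum_{u\sim_{\tau,c} v}\Delta_\tau(u)/\beta^2$, which can far exceed $\gamma_\tau(vc)$.

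This is exactly where the rooted-tree structure and the parent-degree term in $\gamma_\tau$ enter. The paper's proof splits $A_\tau=A_{\tau,\mathrm{even}}+A_{\tau,\mathrm{odd}}$ according to the depth parity of the parent endpoint of each tree edge, then uses $A_\tau^2\preceq 2A_{\tau,\mathrm{even}}^2+2A_{\tau,\mathrm{odd}}^2$. Each square $A_{\tau,\mathrm{even}}^2$ (resp.\ $A_{\tau,\mathrm{odd}}^2$) is genuinely block-diagonal: one $1\times1$ block per even-depth (resp.\ odd-depth) vertex, and one block per sibling-set of children of an even-depth (resp.\ odd-depth) parent. Applying \cref{fact:rowsum} block by block then yields $4(A_{\tau,\mathrm{even}}^2+A_{\tau,\mathrm{odd}}^2)\preceq\diag(\gamma_\tau)$, and it is the sibling blocks that produce the $\Delta_\tau(a(v))$ contribution. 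Thus your claim that the parent term ``is not strictly needed for this proposition'' is incorrect: it is precisely what controls the off-diagonal of $A_\tau^2$, and without it the argument does not close.
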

 
 \begin{proof}
We prove that  the conditions of   \cref{thm:fulltechnical} hold  for $M_\tau \coloneqq  \Pi_\tau\frac{F_\tau}{k-1} + \sqrt{\Pi_\tau}\frac{A_\tau}{k-1}\sqrt{\Pi_\tau}$ for $\tau \in X( \leq n-3)$.
Note that  the desired condition  holds for  any $\tau$ of co-dimension $2$ by definition.
Now, take $k \geq  2$.  Assume $G_\tau$ is disconnected. Using the definition of  $A_\tau$ and our assumption about $F_\tau$, the proof of this case  is   similar  to what we argued in  \cref{prop:Fcondition-2delta}.  
Now, assume that $G_\tau$ is connected.
  Note   that by \cref{prop:Abound_tree},  the absolute value of every off-diagonal entry of $A_\tau$ is at most $\frac{1}{\beta}$ and that there are at most $(k-1)$ non-zero entries per row. Therefore, $\sqrt{\Pi_\tau} A_\tau \sqrt{\Pi_\tau} \preceq \frac{1}{\beta}\Pi_\tau$. Since each entry of $F_\tau$ is at most $\frac{(k-1)^2}{3k-1} - \frac{1}{\beta}$, we have  $M_\tau \preceq \frac{\edgecodim-1 }{3\edgecodim-1} \Pi_\tau$.
 Therefore, it only remains to show that   $  \mathbb{E}_{vc \sim \pi_\tau }  M_{\tau \cup vc}   \preceq   M_{\tau} - \frac{\edgecodim -1 }{\edgecodim -2}  M_\tau\Pi_\tau^{-1} M_\tau$. This is 
equivalent to showing that
\begin{align}\label{eq:desired_tree}
    &\sqrt{\Pi_{\tau}^{-1}} \mathbb{E}_{vc \sim \pi_\tau } \left[ \Pi_{\tau \cup vc}  \frac{\dbound_{\tau \cup vc}}{\edgecodim - 2 } + \sqrt{\Pi_{\tau \cup vc}} \frac{A_{\tau \cup vc}}{\edgecodim-2} \sqrt{\Pi_{\tau \cup vc}}\right]  \sqrt{\Pi_{\tau}^{-1}}
    \preceq  & \frac{\dbound_{\tau}}{\edgecodim - 1} + \frac{A_{\tau}}{\edgecodim-1} 
    -  \frac{(\dbound_{\tau} + A_{\tau})^2 } {(\edgecodim - 2)(\edgecodim - 1)}. 
\end{align}
We starting by proving an upper bound on $A_\tau^2$. 
Define $A_{\tau, even}(vc, uc)= A_{\tau, even}(uc, vc) \coloneqq A(vc, uc)$ for all $vc, uc \in X_\tau (0)$ such that $v$ is at an even distance from the root of $G_\tau$ and $a(u) = v$, and let other entries be $0$. Define $A_{\tau, odd} \coloneqq A_{\tau, even} - A_{\tau, odd}$. By   \cref{crosstermfact},
\begin{align*}
 A_\tau^2 \preceq 2 A^2_{\tau, even} + 2 A_{\tau, odd}^2.
\end{align*}
Furthermore, $A^2_{\tau, even} (vc, uc)  = A^2_{\tau, even} (uc, vc) \neq 0$ only if $v = u$ and $v$ is at an even distance from the root or when $a(v) = a(u)$ and $u, v$ are at an odd distance from the root. A similar fact holds for $A^2_{\tau, odd}$. 
Therefore, if we let $\gamma_\tau(wc) =0$ for $wc \notin X_\tau (0)$, we get
\begin{align*}
   4 ( A^2_{\tau, even}  + A^2_{\tau, odd}) \preceq \diag (\gamma_\tau),
\end{align*}
where we applied \cref{fact:rowsum}, and used  \cref{prop:Abound_tree} to bound the absolute value of the entries of $A^2_{\tau, even}$  and  $A^2_{\tau, odd}$. Therefore, by \cref{crosstermfact}, the RHS of \cref{eq:desired_tree} is bounded by
 \begin{align}\label{eq:rhsbound_tree}
     \frac{\dbound_{\tau}}{\edgecodim - 1} + \frac{A_{\tau}}{\edgecodim-1} 
    -  \frac{2\dbound_{\tau}^2 + 2A_{\tau}^2 } {(\edgecodim - 2)(\edgecodim - 1)} \succeq    \frac{\dbound_{\tau}}{\edgecodim - 1} + \frac{A_{\tau}}{\edgecodim-1} 
    - \frac{2 F^2 _\tau +\diag (\gamma_\tau) }{(k-1)(k-2)}.
 \end{align}
On the other hand, by the definiton of $A_\tau$ (see \cref{eq:A_tree}), the LHS of \cref{eq:desired_tree} is equal to 
\begin{align*}
\mathbb{E}_{uc' \sim \pi_\tau } \left[ \Pi_{\tau}^{-1}\Pi_{\tau \cup uc'}  \frac{\dbound_{\tau \cup uc'}}{\edgecodim - 2 } \right]+ \frac{A_\tau}{k-1}.
\end{align*}
Furthermore, for $vc \in X_\tau (0)$, 
\begin{align*}
\mathbb{E}_{uc' \sim \pi_\tau } \left[ \Pi_{\tau}^{-1}\Pi_{\tau \cup uc'}  \frac{\dbound_{\tau \cup uc'}}{\edgecodim - 2 }\right](vc, vc) = \frac{ \sum_{uc' \in X_{\tau\cup vc}(0)} p(uc'|\tau \cup vc) F_{\tau\cup uc'} (vc, vc)}{(k-1)(k-2)}
\end{align*}
Combining this with \cref{eq:rhsbound_tree}, the desired inequality in \cref{eq:desired_tree} follows from the assumption (see \cref{eq:F_tree_assump}). 
\end{proof}
Finally, with this in hand, we prove \cref{thm:vertex-coloring-tree} similar to what we did for \cref{thm:vertex-coloring-2delta}. The proof can be found in \cref{sec:appendixvercolor}.

\section{Edge Coloring}

Consider a graph  $G = (V, E)$ and a function $L: E \rightarrow 2^{[q]}$. The pair $(G, L)$ is called an edge-list-coloring instance. 
For a vertex $v$ and an edge $e$, we write $e \sim_c v$ when $e \sim v$ and  $c \in L(e)$.
Furthermore, for any $e,  f \in E$, we write  $e \sim_c f$ when $e \sim f$  and $c \in L(e) \cap L(f)$. 
Furthermore, we define a $\beta$-extra-color edge-list-coloring instance as follows. 
\begin{definition}
We say an edge-list-coloring instance $(G, L)$ is a $\beta$-extra-color instance if  for each $e  \in E$, $|L(e)| \geq \beta + \Delta_G(e)$. 
\end{definition}

An assignment $\sigma: E \rightarrow [q]$ is a $L$-edge-list-coloring of $G$ if $\sigma (e) \in L(e)$ for all $e \in E$. 
We say  $\sigma$ is proper if $\sigma (e) \neq \sigma(f)$ whenever $e \sim f$. When it is clear from context we say $\sigma$ is a proper coloring to mean it is a proper $L$-edge-list-coloring.  We say $\tau$ is   proper partial coloring on $H \subset E$ when it is a proper $L|_H$-edge-list-coloring for $(V, H)$.  We may view a proper coloring as a set of edge-color pairs $(e, c)$ which  we denote by $ec$ for simplicity of notation. We denote  the uniform distribution over proper $L$-edge-list-colorings of $G$ by $\pi_{m-1}$ when $(G, L)$ are clear from context. For a proper partial coloring on $H \subset E$ and $e \in E \setminus H$, define    
\begin{align*}
    p(ec | \tau) \coloneqq \mathbb{P}_{\sigma \sim \pi_{m-1}} (\sigma(e)=c |\forall f \in H: \sigma (f) = \tau(f) ). 
\end{align*}

\par
To analyze the Glauber dynamics on an edge-list-coloring instance we  associate a simplicial complex to it. 

\begin{definition}[Simplicial Complex of an Edge-List-Coloring Instance]
 Given an edge-list-coloring instance $(G, L)$, let $X(G, L)$ be a pure $(m -1)$-dimensional simplicial complex specified by the following facets:  $\{(e, \sigma(e))\}_{e \in E} $ is a facet  if and only if $\sigma$  is a proper $L$-edge-list-coloring for $G$. 
\end{definition}

When it is clear from context, we abbreviate $X (G, L)$ to $X$. Note that for all $0 \leq k \leq m$, any  face  $\tau$ of co-dimension $k$  is a partial coloring on a subset of edges $H$ of size $m-k$ ($k$ edges remain uncolored).   Furthermore,  $X_\tau (0)$ can be seen as the set of all $ec$ such that $c \in L(e)$, $e \notin H$ and for any $f \sim e$, $fc \notin \tau$. 
Analogous to vertex-list-colorings, the   Glauber dynamics  on  $(G, L)$ is the down-up walk on the facets of $(X, \pi_{m-1})$. So, as we did before for vertex-list-colorings, our aim is to  apply  \cref{thm:fulltechnical} to the simplicial complex to bound the second eigenvalue of the transition probability matrix of the local walks and then apply \cref{thm:localtoglobal} to get a bound for the transition probability matrix of the down-up walk on the facets. The following  is the main theorem of this section. 

\begin{theorem}\label{thm:edge-coloring}
Let $(G, L)$  be a   $(\frac{4}{3} + 4 \eps) \Delta$-extra-color edge-list-coloring instance 
for some $0<\epsilon \leq \frac{1}{10}$
such that  $\frac{\ln^2(\Delta)}{ \Delta} \leq \frac{\eps^3}{15}$, and let $(X, \pi_{m-1})$ be its associated   weighted simplicial complex. For any $2 \leq k\leq m$ and   $\tau \in X$ of co-dimension $k$ we have
$ \lambda_2(P_\tau) \leq \frac{\eps+\frac1{\eps}}{k-1}$.
\end{theorem}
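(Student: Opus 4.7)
The plan is to mirror the vertex-list-coloring analysis on trees (\cref{thm:vertex-coloring-tree}) but adapted to exploit the clique structure of the line graph. Concretely, for every face $\tau$ of co-dimension at least $2$, we will construct a diagonal matrix $F_\tau$ and a hollow symmetric matrix $A_\tau$ supported on pairs $ec, fc$ with $e \sim_c f$, and then apply \cref{thm:fulltechnical} to the family
\begin{align*}
    M_\tau \;=\; \Pi_\tau \frac{F_\tau}{k-1} \;+\; \sqrt{\Pi_\tau}\, \frac{A_\tau}{k-1}\, \sqrt{\Pi_\tau},
\end{align*}
where $k = \codimfunc(\tau)$. The base case of co-dimension $2$ comes for free from \cref{prop:basecase_main} applied to the line graph, which already outputs an $A_\tau$ of exactly the right form and an $F_\tau$ whose diagonal is $1/(l_\tau(e)-1)(l_\tau(f)-1) \leq 1/\beta^2$.

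Next, I would define $A_\tau$ recursively by an edge-coloring analogue of \cref{def:A-tree}: if $G_\tau$ is connected (viewing $G_\tau$ as the line-graph restriction on uncolored edges), set
\begin{align*}
    A_\tau \;\coloneqq\; \frac{k-1}{k-2}\, \sqrt{\Pi_\tau^{-1}}\, \Bigl( \E_{ec \sim \pi_\tau} \sqrt{\Pi_{\tau \cup ec}}\, A_{\tau \cup ec}\, \sqrt{\Pi_{\tau \cup ec}} \Bigr)\, \sqrt{\Pi_\tau^{-1}},
\end{align*}
and use the product-decomposition clause of \cref{thm:fulltechnical} otherwise. An induction identical in structure to \cref{prop:Abound_tree} (using Cauchy--Schwarz on the inner sum over a third edge-color pair $wc'$) shows that $-1/\beta \leq A_\tau(ec, fc) \leq 0$ for every $e \sim_c f$ and every $\tau$. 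With this recursive definition, the "cross" part of the recursive condition cancels by design, leaving only the correction $\frac{k-1}{k-2}M_\tau\Pi_\tau^{-1}M_\tau$ to handle via $F_\tau$.

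The central technical step, and the main obstacle, is bounding $A_\tau^2$ tightly enough. Naively $\|A_\tau\|_\infty \leq \Delta(e)/\beta \leq (2\Delta-2)/((4/3+4\epsilon)\Delta) \approx 3/2$, which is too large to close the induction. The fix is to exploit that in the line graph, the neighbors of any edge-color pair $ec$ decompose into two cliques corresponding to the two endpoints $u, v$ of $e$; mimicking the even/odd decomposition of \cref{prop:Fcondition-trees}, I would write $A_\tau = A_\tau^u + A_\tau^v$ where $A_\tau^u$ is supported on pairs $ec, fc$ with $e, f$ sharing endpoint $u$, and similarly $A_\tau^v$. Each such block is a signed adjacency matrix of a clique of size at most $\Delta-1$, so \cref{crosstermfact} gives $A_\tau^2 \preceq 2(A_\tau^u)^2 + 2(A_\tau^v)^2$ whose support is concentrated on the diagonal and a bounded-degree off-diagonal piece; this off-diagonal piece can be absorbed into an extra additive term $\gamma_\tau$ on the diagonal via \cref{fact:rowsum}, exactly as in \cref{prop:Fcondition-trees}, and yields a term of size $O(\Delta/\beta^2) = O(1/((4/3+4\epsilon)^2\Delta))$ per row.

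This reduces the matter to solving a scalar recursion for the diagonal of $F_\tau$ of the form
\begin{align*}
    \sum_{fc' \in X_{\tau\cup ec}(0)} p(fc'|\tau\cup ec)\, F_{\tau\cup fc'}(ec,ec) \;\leq\; (k-2)F_\tau(ec,ec) - 2F_\tau^2(ec,ec) - \gamma_\tau(ec),
\end{align*}
to be solved for all $k$ up to $m$. The standard trick is to try an ansatz $F_\tau(ec,ec) \approx a/(k-1)$ for a constant $a$; plugging in, the recursion becomes algebraically tractable, and the $4/3 + 4\epsilon$ threshold is exactly what makes the discriminant positive and the recursion consistent as $k$ grows, giving $a = O(1/\epsilon)$. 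Approximate marginal estimates (standard in the literature, and used in \cite{DHP20,Vig99}) control the discrepancy between $p(fc'|\tau\cup ec)$ and its uniform counterpart up to multiplicative error $1 + O(\ln^2\Delta/\Delta)$, which under the hypothesis $\ln^2\Delta/\Delta \leq \epsilon^3/15$ is absorbed into the slack. Combining $\rho(\Pi_\tau^{-1}M_\tau) \leq (\rho(F_\tau) + \|A_\tau\|_\infty)/(k-1) \leq (\epsilon + 1/\epsilon)/(k-1)$ via \cref{fact:rowsum} on $A_\tau$ then finishes the proof.
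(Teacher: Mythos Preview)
Your plan follows the tree argument too literally and breaks at the step where you absorb $A_\tau^2$ into a diagonal correction $\gamma_\tau$. In the tree case the even/odd decomposition makes each entry of $A_{\tau,\mathrm{even}}^2$ a \emph{single} product through the common parent, so entries are $O(1/\beta^2)$ and row sums are $O(\Delta/\beta^2)$. In the line-graph setting, after the clique decomposition $A_\tau^2 \preceq 2\sum_w (A_\tau^w)^2$, each $(A_\tau^w)^2$ is supported on the full clique of edges incident to $w$: the $(ec,fc)$ entry is $\sum_{g\sim w,\,g\neq e,f} A_\tau(ec,gc)A_\tau(gc,fc)$, a sum of $\Delta_\tau(w)-2$ terms of size $1/\beta^2$, and there are $\Delta_\tau(w)-1$ such entries in each row. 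Thus $\|(A_\tau^w)^2\|_\infty = \Theta(\Delta^2/\beta^2) = \Theta(1)$, not $O(\Delta/\beta^2)$ as you assert. With $\gamma_\tau=\Theta(1)$ your scalar recursion is infeasible already at $k=3$: it would force $F_\tau - 2F_\tau^2 \geq \Theta(1)$, whereas $x-2x^2\leq 1/8$ for all $x$. No choice of ansatz for $F_\tau$ (and no marginal estimate on $p(fc'\mid\tau\cup ec)$) rescues this.

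The paper does \emph{not} attempt to bound the off-diagonal of $A_\tau^2$ by a diagonal. Instead it feeds that off-diagonal back into $A_\tau$ itself: one sets $A_\tau=\linear_\tau+\offdiag(S_\tau)/(k-2)$, where $\linear_\tau$ is your tree-style average and $S_\tau^v\approx 2(1+\eps)(\linear_\tau^v)^2$. By \cref{decompose_A2} one has $(1+\eps)\linear_\tau^2\preceq S_\tau$, so in the recursive condition of \cref{thm:fulltechnical} the dangerous off-diagonal square is cancelled exactly by the extra $\offdiag(S_\tau)/(k-2)$ already sitting inside $A_\tau$; only $\diag(S_\tau)$, which genuinely is $O(\Delta_\tau(e)/(\eps^2\Delta^2))$, enters $\gamma_\tau$. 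The price is that the entries of $A_\tau$ now grow along the induction, and the heart of the proof is a separate induction on $\Delta_\tau(v)+k$ (\cref{prop:A_bound}) showing $|A_\tau(ec,fc)|\leq 1/\bigl(1.5\beta-2(1+2\eps)\Delta_\tau(v)\bigr)$. The threshold $\beta\geq(\tfrac{4}{3}+4\eps)\Delta$ is precisely what keeps this denominator positive up to $\Delta_\tau(v)=\Delta$, and is the actual origin of the $4/3$ constant; it does not come from a discriminant in the $F_\tau$ recursion. With this bound in hand one gets $\|A_\tau\|_\infty\leq 1/\eps$, and the $F_\tau$ recursion (\cref{prop:Fconditions-edge}) is solved by a function of $\Delta_\tau(e)$ alone, with $\rho(F_\tau)=O(\eps)$, giving the stated $(\eps+1/\eps)/(k-1)$.
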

We remark that our analysis here is not tight and we expect that the factor $4/3$ can be improved with a more careful analysis. 

We proceed by introducing  some notation and definitions. 
Given a  face $\tau \in X$, let $E_\tau$ be the set of uncolored edges, i.e. $E_\tau \coloneqq \{ e : \exists c, ec\in X_\tau(0)\}$. Let $G_\tau = (V, E_\tau)$  and $\Delta_\tau(.)$ be the degree function of $G_\tau$. Similarly, if $e =\{u, v\}$,  define $\Delta_\tau (e)$ to be number of edges in $G_\tau$ that share an endpoint with $e$, i.e.  $\Delta_\tau (e)= \Delta_\tau (u) +\Delta_\tau (v) -2$. We define $L_\tau (e) \coloneqq \{ c \in L(e) : ec \in X_\tau (0) \}$. Let   $ l_\tau (e) \coloneqq |L_\tau(e)|$ and  $l_\tau(e,f) := |L_\tau(e) \cap L_\tau(f)|$. Furthermore, we write $e \sim_{\tau, c} v$ when $e\sim v$ and $c \in L_\tau(e)$. Similarly, define $e \sim_{\tau, c} f$ for edges $e$ and $f$.   
Finally, for  any matrix $B \in \mathbb{R}^{ X(0) \times X(0)}$, define  the restriction of $B$ to $v  \in V$ as $B^v(ec, fc) \coloneqq B(ec, fc)$ for any $e ,f \sim v$, and  $0$ on all other entries. 
  Let $B^c \in  \mathbb{R}^{ X(0) \times X(0)}$ be  defined as $B^c(ec, fc) \coloneqq B(ec, fc) $  for all $e \sim_{ c} f$, and  $0$ on all other entries.

\par
Now, similar to our approach to vertex-coloring for trees, 
for any $k \geq 2$ and face  $\tau$ of co-dimension $k$, assume that $M_\tau$ is of the form  
\begin{align}\label{eq:bound_edge}
    M_\tau = \Pi_\tau \frac{F_\tau}{k-1}+ \sqrt{ \Pi_\tau} \frac{A_\tau}{k-1} \sqrt{ \Pi_\tau},
\end{align} 
for a diagonal matrix $F_\tau$ and a hollow matrix  $A_\tau$. 
The goal is again to find $F_\tau$ and $A_\tau$ such that  $M_\tau$ satisfies the conditions of \cref{thm:fulltechnical}. For $k=2$,  \cref{prop:basecase_main} gives us such matrices. For  $k \geq 3$, as opposed to what we did for vertex-coloring of trees, we let $\sqrt{\Pi_\tau} \frac{A_\tau}{k-1} \sqrt{\Pi_\tau}$  deviate from $\mathbb{E}_{vc \sim \pi_\tau} \sqrt{ \Pi_{\tau \cup  vc}} \frac{A_{\tau \cup  vc}}{k-2} \sqrt{ \Pi_{\tau \cup vc}}$ in order to control the growth of $F_\tau$.

\begin{definition}[Family of Matrices $\{A_{\tau,\eps}\}_{\tau \in X, \text{codim}(\tau)\geq 2}$ ] \label{defA}
Let $(G, L)$ be a $\beta$-extra-color edge-list-coloring instance, and  let$(X, \pi_{m-1})$ be its associated  weighted   complex. 
For  $\eps>0$, define  $\{A_{\tau,\eps}\}_{\tau \in X,\codimfunc(\tau)\geq2}$ as follows:  let  $A_{\tau, \eps} \coloneqq f_\times(X, \{A_{\tau \cup \sigma, \eps}\}_{\emptyset \subsetneq \sigma \in X_\tau (\codimfunc(\tau)-3)})$ if the line graph of  $G_\tau$ is disconnected and  otherwise, 
\begin{enumerate}
    \item  For any face $\tau$ of co-dimension $2$,  
let  $A_{\tau, \eps} \in \mathbb{R}^{X(0)  \times X(0) }$ be a hollow block diagonal matrix with a block for every color such that 
$$
A_{\tau,\eps}(ec, fc)  = A_{\tau,\eps}(fc, ec) \coloneqq -\frac{1}{ \sqrt { (l_\tau(e) -1) (l_\tau(f) -1)} },
$$
for $e, f \in E_\tau$ and  any $c \in L_\tau(e) \cap L_\tau(f)$, 
and all other entries are $0$. 
\item For any $k \geq 3$ and a face $\tau$ of co-dimension  $k$, define
    $A_{\tau,\eps}  \coloneqq \linear_{\tau, \eps} + \frac{\offdiag(\sqr_{\tau,\eps})}{\edgecodim-2},$
where $\linear_{\tau, \eps}$ and $\sqr_{\tau,\eps}$ are defined as follows:
\begin{align}\label{eq:Ltaudefedgecolor}
    \linear_{\tau, \eps} &\coloneqq \frac{\edgecodim-1}{\edgecodim-2} \sqrt{\Pi_\tau^{-1}} \left( \mathbb{E}_{gc \sim \pi_\tau} \sqrt{\Pi_{\tau \cup gc }} A_{\tau \cup gc, \epsilon} \sqrt{\Pi_{\tau \cup gc }} \right ) \sqrt{\Pi_\tau^{-1} },
\end{align}

\begin{align}
\label{eq:defedgecoloringStau}
 \sqr^v_{\tau,\eps} &\coloneqq
\begin{cases}
        4(1+\epsilon)  \left( (\linear_{\tau, \eps}^{+, v})^2 +(\linear_{\tau, \eps}^{-, v})^2  \right) & \text{if } \degree_\tau(v) \leq \frac{\beta}{4(1+\epsilon)}  , \\
        2 (1+\epsilon)   (\linear_{\tau, \eps}^{v})^2  & otherwise.
    \end{cases}
\end{align}
and $S_{\tau,\eps}=\sum_{v} S^v_{\tau,\eps}$.
\end{enumerate}
 Observe that all three matrices $\linear_{\tau,\eps},S_{\tau,\eps},A_{\tau,\eps}$ are symmetric and hollow. Furthermore, the non-zero entries of these matrices correspond to $e,f\sim_{\tau,c} v$, for some $v\in G$ and when the line graph of $G_\tau$  is connected,
\begin{align}\label{eq:defedgecoloringLtau}
\linear_{\tau, \eps}(ec, fc)   = \frac{1}{\edgecodim-2} \sum_{gc' \in X_\tau(0) : g \neq e, f} \sqrt{p( gc'| ec \cup \tau)  p(gc'| fc \cup \tau) } A_{\tau \cup gc', \eps} (ec, fc).
\end{align}
When it is clear from context, we drop $\epsilon$ from the subscripts of matrices defined above . 
\end{definition}

\begin{lemma}\label{rec_closefrom}
For any $\tau \in X$, and $ec, fc \in X_\tau(0)$, if the line graph of $G_\tau$ is connected, then 

\begin{align*}
\underset{g\in E_\tau,g\neq e,f}{\textup{avg}}\min_{c'\in L_\tau(g)} A_{\tau\cup gc', \eps}(ec,fc)\leq \linear_{\tau, \eps}(ec,fc)\leq \underset{g\in E_\tau,g\neq e,f}{\textup{avg}}\max_{c'\in L_\tau(g)} A_{\tau\cup gc', \eps}(ec,fc)
\end{align*}
\end{lemma}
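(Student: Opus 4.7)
The plan is to read \cref{eq:defedgecoloringLtau} as expressing $\linear_{\tau,\eps}(ec,fc)$ as a sub-probabilistic weighted mixture of the entries $A_{\tau\cup gc',\eps}(ec,fc)$ over $(g,c')$, and deduce both directions of the sandwich by analyzing each slice of the mixture corresponding to a fixed $g$. Concretely, I would first rewrite, with $k = \codimfunc(\tau)$,
\[
\linear_{\tau,\eps}(ec,fc) \;=\; \frac{1}{k-2}\sum_{g \in E_\tau \setminus \{e,f\}} T_g, \qquad T_g \;:=\; \sum_{c'} w(g,c')\,A_{\tau \cup gc',\eps}(ec,fc),
\]
where $w(g,c') := \sqrt{p(gc'\mid ec\cup\tau)\, p(gc'\mid fc\cup\tau)}$, and $w(g,c')=0$ whenever either conditional is undefined (equivalently, whenever $gc' \notin X_{\tau \cup ec}(0) \cap X_{\tau \cup fc}(0)$).

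The next step is the Cauchy--Schwarz inequality, applied exactly as in the proof of \cref{prop:Abound_tree}: for every $g$,
\[
\sum_{c' \in L_\tau(g)} w(g,c') \;\leq\; \sqrt{\sum_{c'} p(gc'\mid ec\cup\tau)} \cdot \sqrt{\sum_{c'} p(gc'\mid fc\cup\tau)} \;\leq\; 1,
\]
so $\{w(g,c')\}_{c'\in L_\tau(g)}$ is a sub-probability measure for each $g$. If I can further produce, for every such $g$, a ``slack'' color $c_0' \in L_\tau(g)$ with $A_{\tau \cup gc_0',\eps}(ec,fc)=0$, then redistributing the leftover mass $1-\sum_{c'} w(g,c')$ onto $c_0'$ leaves $T_g$ unchanged and realizes it as an honest convex combination of $\{A_{\tau \cup gc',\eps}(ec,fc)\}_{c' \in L_\tau(g)}$; this immediately yields
\[
\min_{c'\in L_\tau(g)} A_{\tau \cup gc',\eps}(ec,fc) \;\leq\; T_g \;\leq\; \max_{c'\in L_\tau(g)} A_{\tau \cup gc',\eps}(ec,fc).
\]
Averaging the $T_g$ over $g$ with the prefactor $1/(k-2)$ then delivers both claimed inequalities.

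For producing the slack color there is an easy case: whenever $g$ shares an endpoint with $e$ or $f$ in $G_\tau$, the condition $ec,fc \in X_\tau(0)$ guarantees $c \in L_\tau(g)$, and choosing $c_0' = c$ drops $ec$ or $fc$ out of $X_{\tau \cup gc_0'}(0)$, which kills the entry by the support convention of $\{A_{\tau,\eps}\}$. The main obstacle, and where I expect the bulk of the bookkeeping, is the remaining case where $g$ shares no endpoint with either $e$ or $f$: recoloring $g$ cannot directly break the $\sim_{\tau,c}$ adjacency of $e$ and $f$, so the existence of a zero-valued $c_0'$ must be argued inductively by descending into the recursion defining $A_{\tau \cup gc',\eps}$ and tracking how a zero entry from a lower-codimension base case propagates back up. The connectedness assumption on the line graph of $G_\tau$ should enter precisely here, to guarantee that the inductive descent reaches a link where the relevant entry is forced to vanish for some choice of $c_0'$.
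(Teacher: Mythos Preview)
Your plan contains a genuine gap: the slack color $c_0'$ you rely on need not exist, so the convex-combination reduction cannot be carried out.

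Your ``easy case'' claim is already wrong. If $g$ shares endpoint $u$ with $e$, then $ec \in X_\tau(0)$ tells you only that no colored edge incident to $u$ (or to the other endpoint of $e$) carries color $c$; it says nothing about the \emph{other} endpoint $w$ of $g$. A colored edge at $w$ in $\tau$ may well have color $c$, forcing $c \notin L_\tau(g)$. Worse, in exactly this configuration there may be \emph{no} $c_0' \in L_\tau(g)$ with $A_{\tau\cup gc_0'}(ec,fc)=0$: take $k=3$, $E_\tau=\{e,f,g\}$ with $e=\{u,v\}$, $f=\{v,x\}$, $g=\{u,w\}$ (so the line graph of $G_\tau$ is the path $g\text{--}e\text{--}f$, connected), and $c \in L_\tau(e)\cap L_\tau(f)$ but $c \notin L_\tau(g)$. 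Then for every $c'\in L_\tau(g)$ one has $c'\neq c$, hence $c \in L_{\tau\cup gc'}(e)\cap L_{\tau\cup gc'}(f)$ (the latter because $g\not\sim f$), and the base-case formula gives
\[
A_{\tau\cup gc'}(ec,fc)\;=\;-\frac{1}{\sqrt{(l_{\tau\cup gc'}(e)-1)(l_{\tau\cup gc'}(f)-1)}}\;<\;0
\]
for all $c'$. No zero entry is available, so your redistribution step is blocked already at codimension~$3$; the inductive ``hard case'' program you sketch cannot rescue it.

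The paper's proof is a direct two-line argument that never looks for a slack color: it bounds $T_g \le \bigl(\sum_{c'} w(g,c')\bigr)\,M_g$ and then uses Cauchy--Schwarz together with $\sum_{c'} p(gc'\mid ec\cup\tau)=\sum_{c'} p(gc'\mid fc\cup\tau)=1$ to replace $\sum_{c'} w(g,c')$ by $1$. Your instinct that the final step $W_g M_g \le M_g$ tacitly uses $M_g \ge 0$ (and dually $m_g \le 0$) is correct; what the chain of inequalities actually establishes is $T_g \le \max(M_g,0)$ and $T_g \ge \min(m_g,0)$. This is harmless downstream, since every invocation in \cref{prop:A_bound} and \cref{cor:A_bound} immediately bounds $\max_{c'} A_{\tau\cup gc'}(ec,fc)$ above by a nonnegative quantity and $\min_{c'} A_{\tau\cup gc'}(ec,fc)$ below by a nonpositive one---but your proposed fix via a guaranteed zero entry does not work.
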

\begin{proof}
Let $\tau$ be a face of co-dimension $k \geq 2$.  It clearly holds for $k=2$ by definition. For $k >2$ and $ec, fc \in X_\tau (0)$, we have
\begin{align*}
\linear_\tau (ec, fc)   &= \frac{1}{k-2} \sum_{ g\in E_{\tau}: g \neq e, f} \sum_{c' \in L_\tau(g)} \sqrt{p( gc'| ec \cup \tau)  p(gc'| fc \cup \tau) } A_{\tau \cup gc'} (ec, fc) \\&\leq  \frac{1}{k-2} \sum_{ g\in E_{\tau}: g \neq e, f} \left(\sum_{c' \in L_\tau(g)} \sqrt{p( gc'| ec \cup \tau)  p(gc'| fc \cup \tau) } \right) \cdot \left(\max_{c'\in L_\tau(g)} A_{\tau\cup gc'}(ec,fc)\right)\\& \leq
 \frac{1}{k-2} \sum_{g\in E_{\tau}: g \neq e,f} \left( \left(\sum_{c' \in L_\tau(e)} p( gc'| ec \cup \tau) \right) \left(\sum_{c' \in L_\tau(g)} p( gc'| fc \cup \tau)\right)\right) \left(\max_{c'\in L_\tau(g)} A_{\tau\cup gc'}(ec,fc)\right) \tag{by Cauchy-Schwarz } \\&\leq \frac{1}{k-2}\sum_{g\in E_{\tau}: g \neq e,f} \max_{c'\in L_\tau(g)} A_{\tau\cup gc'}(ec,fc)  = \underset{g\in E_\tau,g\neq e,f}{\textup{avg}}\max_{c'\in L_\tau(g)} A_{\tau\cup gc'}(ec,fc). 
\end{align*}
The other side of the inequality follows from a similar argument. 
\end{proof}

In order to find diagonal matrices $\{F_\tau\}_{\tau \in X: \text{codim}(\tau)\geq 2}$ such that $\{M_\tau\}_{\tau \in X: \text{codim}(\tau)\geq 2}$ as defined by \cref{eq:bound_edge}   satisfies the conditions of \cref{thm:fulltechnical}, we would  need to prove some bounds on  the entries of $\{A_{\tau,\eps}\}_{\tau \in X, \text{codim}(\tau)\geq 2}$ and $\{S_{\tau,\eps}\}_{\tau \in X, \text{codim}(\tau)\geq 2}$.

\begin{proposition}\label{prop:A_bound}
Suppose $(G, L)$ is a  $\beta$-extra-color edge-list-coloring instance  where $\beta = (\frac{4 }{3} + 4\epsilon) \Delta$ for  an $0 <\eps \leq \frac{1}{10}$ such that $ 2\eps^{-2} \leq \Delta $. For any $\tau\in X$ of $\codimfunc(\tau)\geq 2$, 
the matrix $ A_{\tau}$  defined in \cref{defA} satisfies the following: for any  vertex $v\in G$,  color $c$ and $e, f \sim_{\tau, c} v$, 
\begin{enumerate}[label=(\roman*)]
    \item  if $\degree_\tau (v) \leq \frac{\extraColors}{4 (1+\epsilon)}$, then
    $ -\frac{1}{\extraColors} \leq A_{\tau, \eps} (ec, fc) \leq 4 (1+\epsilon) \frac{\degree_\tau(v)-2}{\extraColors^2}$, \label{firstcase}
    \item  otherwise, if $\degree_\tau (v) \geq \frac{\extraColors}{4 (1+\epsilon)}$, then
    $|A_{\tau, \eps} (ec, fc)| \leq \frac{1}{1.5 \extraColors - 2(1+2\epsilon) \degree_\tau(v)}$.\label{secondcase}
\end{enumerate}

\end{proposition}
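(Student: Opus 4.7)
The proof proceeds by induction on $k = \codimfunc(\tau) \geq 2$, verifying the case-(i) and case-(ii) bounds in parallel.

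\textbf{Base case ($k = 2$).} Here $E_\tau = \{e, f\}$ with shared endpoint $v$, so $\Delta_\tau(e) = \Delta_\tau(f) = 1$. Since $(G,L)$ is $\beta$-extra-color and the partial coloring is proper, $l_\tau(e), l_\tau(f) \geq \beta + 1$. The explicit formula gives
\begin{align*}
A_{\tau,\eps}(ec, fc) = -\frac{1}{\sqrt{(l_\tau(e)-1)(l_\tau(f)-1)}} \in [-1/\beta,\ 0].
\end{align*}
Here $\Delta_\tau(v) = 2 \leq \beta/(4(1+\eps))$ (using $\beta \geq (4/3+4\eps)\Delta$ and $\Delta \geq 2\eps^{-2}$), so we are in case (i) with upper bound $4(1+\eps)\cdot 0/\beta^2 = 0$; both bounds are verified.

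\textbf{Inductive step.} If the line graph of $G_\tau$ is disconnected, $A_{\tau,\eps}$ is defined via $f_\times$, and the claim reduces to the components. So assume the line graph of $G_\tau$ is connected. Then for $e, f \sim_{\tau, c} v$ with $e \neq f$, only the unique shared endpoint $v$ contributes to $\offdiag(S_{\tau,\eps})(ec, fc)$, giving
\begin{align*}
A_{\tau,\eps}(ec,fc) \;=\; \linear_{\tau,\eps}(ec,fc) \;+\; \frac{S^v_{\tau,\eps}(ec,fc)}{k - 2}.
\end{align*}

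First I would bound $\linear_{\tau,\eps}(ec,fc)$ using \cref{rec_closefrom} together with the inductive hypothesis applied to $A_{\tau \cup gc',\eps}(ec,fc)$. The key geometric observation is $\Delta_{\tau \cup gc'}(v) = \Delta_\tau(v)$ if $g \not\sim v$ and $\Delta_{\tau \cup gc'}(v) = \Delta_\tau(v) - 1$ if $g \sim v$; there are $\Delta_\tau(v) - 2$ edges $g \in E_\tau \setminus \{e,f\}$ with $g \sim v$ and $k - \Delta_\tau(v)$ with $g \not\sim v$. Substituting the inductive case-(i) or case-(ii) bound for each term and averaging yields estimates on $\linear_{\tau,\eps}(ec,fc)$; in particular, in case (i) one obtains
\begin{align*}
\linear_{\tau,\eps}(ec,fc) \leq \frac{4(1+\eps)(\Delta_\tau(v)-2)(k-3)}{(k-2)\beta^2} \quad\text{and}\quad \linear_{\tau,\eps}(ec,fc) \geq -1/\beta.
\end{align*}

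Next I would bound $S^v_{\tau,\eps}(ec,fc)$ directly from its definition. The entries of $(\linear^v_{\tau,\eps})^2$, $(\linear^{+,v}_{\tau,\eps})^2$, and $(\linear^{-,v}_{\tau,\eps})^2$ at $(ec,fc)$ are sums over edges $g \sim v$ (at most $\Delta_\tau(v)$ such $g$) of products $\linear(ec,gc)\linear(gc,fc)$, each factor controlled in absolute value by the inductive case-(i) or case-(ii) estimate. In case (i), the choice $S^v_{\tau,\eps} = 4(1+\eps)((\linear^{+,v})^2 + (\linear^{-,v})^2)$ ensures that $S^v$ is entrywise non-negative, and one verifies that $S^v_{\tau,\eps}(ec,fc)/(k-2)$ contributes at most $4(1+\eps)(\Delta_\tau(v)-2)/((k-2)\beta^2)$; summing with the Step~1 bound on $\linear_{\tau,\eps}$ gives exactly the target upper bound $4(1+\eps)(\Delta_\tau(v)-2)/\beta^2$, and the lower bound $A_{\tau,\eps}(ec,fc) \geq -1/\beta$ follows from $S^v \geq 0$ plus the lower bound on $\linear_{\tau,\eps}$. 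In case (ii), the alternative formula $S^v_{\tau,\eps} = 2(1+\eps)(\linear^v_{\tau,\eps})^2$ benefits from cancellation between $\linear^{+,v}$ and $\linear^{-,v}$ contributions and yields the tighter bound $|A_{\tau,\eps}(ec,fc)| \leq 1/(1.5\beta - 2(1+2\eps)\Delta_\tau(v))$.

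\textbf{Main obstacle.} The delicate step is handling the transition between cases (i) and (ii): when $\Delta_\tau(v)$ is close to the threshold $\beta/(4(1+\eps))$, some $\tau \cup gc'$ with $g \sim v$ may fall into case (i) while $\tau$ lies in case (ii) (or vice versa), since $\Delta_{\tau \cup gc'}(v) = \Delta_\tau(v) - 1$. The two distinct definitions of $S^v_{\tau,\eps}$ are engineered precisely so that both regimes' inductive estimates propagate consistently across this boundary. Verifying the algebraic inequalities that combine Step~1 and Step~2—namely that the $(k-3)/(k-2)$ shrinkage from averaging $\linear_{\tau,\eps}$ leaves exactly enough slack to absorb the $S^v_{\tau,\eps}/(k-2)$ contribution—requires the quantitative hypotheses $\beta = (4/3 + 4\eps)\Delta$ and $\Delta \geq 2\eps^{-2}$, and constitutes the technical heart of the calculation.
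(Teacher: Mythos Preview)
Your proposal is correct and follows essentially the same route as the paper: induct (the paper phrases it as induction on $\Delta_\tau(v)+k$, but since $k$ strictly decreases this is equivalent to your induction on $k$), bound $\linear_\tau(ec,fc)$ via \cref{rec_closefrom} by splitting the average into the $\Delta_\tau(v)-2$ edges $g\sim v$ and the $k-\Delta_\tau(v)$ edges $g\not\sim v$, bound $S^v_\tau(ec,fc)$ pointwise using at most $\Delta_\tau(v)-2$ terms, and add. One small correction to your diagnosis of the difficulty: the regime transition is actually routine (in case~(i) links remain in case~(i) since $\Delta_{\tau\cup gc'}(v)\le\Delta_\tau(v)$; in case~(ii) a link dropping into case~(i) has its bound $1/\beta$ immediately dominated by the case-(ii) expression at the threshold), and the case-(ii) formula $S^v_\tau=2(1+\eps)(\linear^v_\tau)^2$ is useful only for its smaller constant $2$---the paper bounds $|S^v_\tau(ec,fc)|$ by the triangle inequality, not by any sign cancellation. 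The genuine technical step in case~(ii) is the final algebraic inequality combining the three pieces, which uses $1.5\beta-2(1+2\eps)\Delta_\tau(v)\ge 2\eps\Delta>0$ together with $\Delta\ge 2\eps^{-2}$.
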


\def\dctv{\Delta_\tau(v)}
\begin{proof}
Fix a vertex $v$. We prove the claim inductively for any pair of edges incident to $v$. 

{\bf Case \ref{firstcase}.}  
 Let $\tau$ be any face of co-dimension $k\geq 2$. We prove by induction on $\dctv + \edgecodim$. We start with the base case, that is when $\dctv+\edgecodim = 4$, i.e., $\dctv =\edgecodim =2$. It is easy to see that, for any color $c$ and $e,f\sim_{\tau,c} v$, we have 
 $-\frac{1}{\extraColors} \leq A_\tau (ec, fc) \leq 0$, by definition.
Now, we prove the claim for  $ \edgecodim \geq 2 $ and $2 \leq \dctv \leq  \frac{\extraColors}{4(1+\epsilon)}$ such that $\edgecodim + \dctv \geq 5$. 
If the line graph of $G_\tau$ is not connected, then  the statement trivially holds.  Otherwise, by \cref{rec_closefrom}, for any color $c$
and  $e, f \sim_{\tau, c} v$ we can write 
\begin{align}
      \linear_\tau (ec, fc)  &\leq \frac{\dctv-2}{\edgecodim-2} 
      \max_{gc'\in X_\tau(0): g\sim v,g\neq e,f}  A_{\tau\cup gc'} ( ec, fc) + \frac{\edgecodim-\dctv}{\edgecodim-2} 
      \max_{gc'\in X_\tau(0): g\not\sim v}  A_{\tau\cup gc'} ( ec, fc)\nonumber
     \\
     &\leq 
     \frac{\dctv-2}{\edgecodim-2} 4(1+\epsilon)\frac{\dctv-3}{\extraColors^2} +
     \frac{\edgecodim-\dctv}{\edgecodim-2} 4(1+\epsilon)\frac{\dctv-2}{\extraColors^2}     \nonumber \\
     &= \frac{4(1+\epsilon)(\dctv-2)(k-3)}{\beta^2(k-2)} \leq \frac{1}{\beta}\label{eq:Ltauefcupper}
\end{align}
where the second to last inequality follows by IH and the last inequality follows by $\dctv\leq \frac{\beta}{4(1+\eps)}$. Similarly,
\begin{align}\label{eq:edgecolLtaulower}
  \linear_\tau (ec, fc) \geq  -\frac{\dctv-2}{\edgecodim-2} 
  \min_{\substack{gc'\in X_\tau(0)\\g\sim v, g\neq e,f }} A_{\tau\cup gc'}(ec,fc)
  -\frac{\edgecodim-\dctv}{\edgecodim-2} \min_{\substack{gc'\in X_\tau(0)\\ g\not\sim v, g\neq e,f}} A_{\tau\cup gc'}(ec,fc) \geq - \frac{1}{\extraColors}.
\end{align}
Therefore, by \eqref{eq:defedgecoloringStau}
\begin{align}\label{eq:efctaucase1upper}
    \sqr_\tau (ec, fc) &=  4(1+\epsilon) \sum_{\substack{g\sim_{\tau, c} v\\ g \neq e, f }} \left[\linear^+_\tau (ec, gc) \linear^+_\tau (gc, fc)  + \linear^-_\tau (ec, gc) \linear^-_\tau (gc, fc) \right] {\leq} 4(1+\epsilon) (\dctv-2) \frac{1}{\extraColors^2}.
\end{align}
where the last inequality follows by \cref{eq:Ltauefcupper,eq:edgecolLtaulower} and that $v$ has at most $\dctv-2$ edges that can be colored by $c$, other than $e,f$. 
So combining with \eqref{eq:edgecolLtaulower}, we get $A_\tau(ec,fc) = \linear_\tau(ec,fc) + \frac{\sqr_\tau(ec,fc)}{\edgecodim-2}\geq 
-\frac{1}{\extraColors}$. Similarly, \eqref{eq:Ltauefcupper} and \eqref{eq:efctaucase1upper} gives
 \begin{align*}
 A _\tau (ec, fc) \leq \frac{4(1+\epsilon)(\dctv-2)(k-3)}{\beta^2(k-2)}
 +\frac{4(1+\epsilon) (\dctv-2)}{k-2} \frac{1}{\extraColors^2} =4(1+\epsilon)\frac{\dctv-2}{\extraColors^2}.
\end{align*}

{\bf Case \ref{secondcase}.}  
 For $\tau$ of co-dimension $k\geq 2$ we prove the claim by induction on $\dctv + \edgecodim$. The base case is when $\dctv = \edgecodim  =\frac{\extraColors}{4 (1+\epsilon)}$,  which we already proved in case \ref{firstcase} (note that we always have $k\geq \Delta_\tau(v)$). 
Now, we prove the claim for $\dctv> \frac{\extraColors}{4 (1+\epsilon) }$ (and $k\geq \dctv$). 
If the line graph of $G_\tau$ is disconnected then  the statement trivially holds.  
Otherwise, for all colors $c$ and $e, f \sim_{\tau, c} v$, we can write 
\begin{align}
     |\linear_\tau (ec, fc)|  &= \frac{\dctv-2}{\edgecodim-2} 
      \max_{gc'\in X_\tau(0): g\sim v, g\neq e,f}  |A_{\tau\cup gc'} ( ec, fc)| + \frac{\edgecodim-\dctv}{\edgecodim-2} 
      \max_{gc'\in X_\tau(0): g\not\sim v}  |A_{\tau\cup gc'} ( ec, fc)|\nonumber \\
     &\leq 
    \frac{\dctv-2}{\edgecodim-2}  \frac{1}{1.5 \extraColors - 2(1+2\epsilon)(\dctv-1) } +   \frac{\edgecodim-\dctv}{\edgecodim-2}  \frac{1}{1.5 \extraColors - 2(1+2\epsilon)\dctv }\label{eq:Lecfccase2edgecolor=}\\
    &\leq \frac{1}{1.5\beta-2(1+2\epsilon)\dctv}.\label{eq:Lecfccase2edgecolorupper}
\end{align}
where the second to last inequality follows by the IH. Furthermore, by \eqref{eq:defedgecoloringStau},
\begin{align*}
    |\sqr_\tau (ec, fc)| &=  2(1+\epsilon) \left|\sum_{g\sim_{\tau, c} v, g \neq e, f} \linear_\tau (ec, gc) \linear_\tau (gc, fc)\right|  \leq 
  2 (1+\epsilon) \frac{\dctv-2}{(1.5 \extraColors - 2(1+2\epsilon)\dctv )^2}. 
\end{align*}
where the inequality follows by \eqref{eq:Lecfccase2edgecolorupper} and that $v$ has at most $\dctv-2$ edges other than $e,f$ that can be colored by $c$.
Recall  $A_\tau = \linear_\tau + \frac{S_\tau}{\edgecodim-2}$. So, the above inequality with \eqref{eq:Lecfccase2edgecolor=}   gives 
\begin{align*}
|A_\tau (ec, fc)| &\leq \frac{\dctv-2}{\edgecodim-2}  \frac{1}{1.5 \extraColors - 2(1+2\epsilon)(\dctv-1) } +   \frac{\edgecodim-\dctv}{\edgecodim-2}  \frac{1}{1.5 \extraColors - 2(1+2\epsilon)\dctv } \\
&\quad+ \frac{2 (1+\epsilon)}{k-2}\cdot  \frac{\dctv-2}{(1.5 \extraColors - 2(1+2\epsilon)\dctv )^2}\\
&\leq \frac{1}{1.5 \extraColors - 2(1+2\epsilon) \dctv}. 
\end{align*}
where in the second inequality we used that  $\eps \leq \frac{1}{10} $ and   $\Delta \geq 2\eps^{-2}$, and that

$$1.5\beta - 2(1+2\eps) \dctv\geq 1.5\left(\frac43+4\eps\right)\Delta - 2(1+2\eps)\Delta = 6\eps \Delta - 4\eps\Delta \geq 0. $$

\end{proof}

\begin{corollary} \label{cor:A_bound}
Given a $\beta$-extra-color edge-list-coloring instance $(G,L)$
where $\beta = (\frac{4 }{3} + 4\epsilon) \Delta$ for  an $0 <\eps\leq \frac{1}{10}$ such that $ 2\eps^{-2} \leq \Delta $,
\begin{enumerate}[label=(\roman*)] 
\item For any $\tau \in X$ with $\codimfunc(\tau)\geq 2$, $v\in G$, and $e,f\sim_{\tau,c} v$, 
$|A_{\tau, \eps} (ec, fc)| \leq \frac{1}{2\epsilon\maxDeg} \leq \frac{\eps}{4}$. 
\item  For any $\tau \in X$ with $\codimfunc(\tau)\geq 3$, $v\in G$,  and $e,f\sim_{\tau,c} v$, $\sqr_{\tau, \eps} (ec, fc)\leq  \frac{(1+\epsilon)(\Delta_\tau(v)-2)}{2\epsilon^2 \maxDeg^2}$. 
\item For any $\tau\in X$ with $\codimfunc(\tau)\geq 3$, any  color $c$,  and $(e=\{u,v\},c)\in X_\tau(0)$  $c$, $S_{\tau,\eps}(ec,ec) \leq \frac{(1+\eps)(\Delta_\tau(v)+\Delta_\tau(u)-2)}{2\eps^2\Delta^2}$.
\end{enumerate}
\end{corollary}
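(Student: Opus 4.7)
The corollary follows from \cref{prop:A_bound} by substituting $\extraColors = (\tfrac{4}{3} + 4\epsilon)\maxDeg$ and then combining with the definition of $\sqr^v_{\tau,\epsilon}$ in \cref{eq:defedgecoloringStau}. Parts (ii) and (iii) use part (i), together with the sharper intermediate bounds on $\linear$ that already appeared inside the proof of \cref{prop:A_bound}.

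For part (i), I would split along the two cases of \cref{prop:A_bound}. In the low-degree case $\Delta_\tau(v) \leq \extraColors/(4(1+\epsilon))$, the upper bound $4(1+\epsilon)(\Delta_\tau(v)-2)/\extraColors^2$ is at most $1/\extraColors$ (using the hypothesis on $\Delta_\tau(v)$), so $|A_{\tau,\epsilon}(ec,fc)| \leq 1/\extraColors$. In the high-degree case, I would plug $\Delta_\tau(v) \leq \maxDeg$ into the denominator $1.5\extraColors - 2(1+2\epsilon)\Delta_\tau(v)$, and a direct arithmetic check using $\extraColors = (\tfrac{4}{3} + 4\epsilon)\maxDeg$ gives $1.5\extraColors - 2(1+2\epsilon)\maxDeg = 2\epsilon\maxDeg$, so $|A_{\tau,\epsilon}(ec,fc)| \leq 1/(2\epsilon\maxDeg)$. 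Since $\extraColors \geq 2\epsilon\maxDeg$ trivially, both cases collapse to the bound $1/(2\epsilon\maxDeg)$. Finally, the hypothesis $\ln^2\maxDeg/\maxDeg \leq \epsilon^3/15$ implies $\maxDeg \geq 2/\epsilon^2$, which is equivalent to $1/(2\epsilon\maxDeg) \leq \epsilon/4$.

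For part (ii), the crucial observation is that for distinct $e, f \sim v$ the only vertex contributing to $\sqr_{\tau,\epsilon}(ec,fc)$ is $v$, so $\sqr_{\tau,\epsilon}(ec,fc) = \sqr^v_{\tau,\epsilon}(ec,fc)$. I would then expand both branches of \cref{eq:defedgecoloringStau}, using the intermediate bound $|\linear_{\tau,\epsilon}(ec,gc)| \leq 1/(2\epsilon\maxDeg)$ (for $g \sim v$) that is established en route to proving \cref{prop:A_bound}. In the high-degree branch, $\sqr^v_{\tau,\epsilon}(ec,fc) = 2(1+\epsilon)\sum_g \linear(ec,gc)\linear(gc,fc)$ over at most $\Delta_\tau(v) - 2$ intermediate edges $g \sim v$ with $g \neq e,f$, and each summand is bounded in absolute value by $1/(4\epsilon^2\maxDeg^2)$, yielding precisely $(1+\epsilon)(\Delta_\tau(v)-2)/(2\epsilon^2\maxDeg^2)$. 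In the low-degree branch, for each $g$ the contribution to $(\linear^{+,v})^2 + (\linear^{-,v})^2$ vanishes when $\linear(ec,gc)$ and $\linear(gc,fc)$ disagree in sign, and is otherwise at most $|\linear(ec,gc)\linear(gc,fc)| \leq 1/\extraColors^2$ by the sharper low-degree bound on $\linear$ from \cref{prop:A_bound}; this gives $\sqr^v_{\tau,\epsilon}(ec,fc) \leq 4(1+\epsilon)(\Delta_\tau(v)-2)/\extraColors^2$, which lies within the claimed bound as long as $\extraColors^2 \geq 8\epsilon^2\maxDeg^2$, easily verified for $\epsilon \leq 1/10$ using $\extraColors \geq \tfrac{4}{3}\maxDeg$.

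For part (iii), both endpoints of $e = \{u,v\}$ now contribute to the diagonal, so $\sqr_{\tau,\epsilon}(ec,ec) = \sqr^u_{\tau,\epsilon}(ec,ec) + \sqr^v_{\tau,\epsilon}(ec,ec)$. For each $v' \in \{u,v\}$ the summation runs over $g \sim v'$ with $g \neq e$, giving $\Delta_\tau(v') - 1$ nonzero summands (rather than $\Delta_\tau(v') - 2$ as in (ii)). Applying the same bounds endpoint-wise and summing yields $(1+\epsilon)(\Delta_\tau(u) + \Delta_\tau(v) - 2)/(2\epsilon^2\maxDeg^2)$. The main obstacle is the high-degree branch of part (i): the exact identity $1.5\extraColors - 2(1+2\epsilon)\maxDeg = 2\epsilon\maxDeg$ is precisely where the leading constant $\tfrac{4}{3}$ in the main theorem is pinned down, and any smaller choice of $\extraColors$ would make the denominator degenerate. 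A secondary delicate point is the low-degree branch of (ii), where one must exploit the vanishing of cross-sign terms in $(\linear^{+,v})^2 + (\linear^{-,v})^2$ to absorb the factor of $4$ (rather than $2$) that appears in the definition of $\sqr^v$.
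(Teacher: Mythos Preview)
Your proposal is correct and follows essentially the same route as the paper's proof. The only cosmetic difference is that where you cite ``the sharper intermediate bounds on $\linear$ that already appeared inside the proof of \cref{prop:A_bound}'', the paper instead invokes \cref{rec_closefrom} to pass from the bound $|A_{\tau\cup hc'}(ec,gc)|\leq 1/\beta$ (respectively $\leq 1/(2\epsilon\Delta)$) furnished by \cref{prop:A_bound} to the same bound on $|\linear_\tau(ec,gc)|$; these are equivalent, and your observation about the vanishing of cross-sign terms in $(\linear^{+,v})^2+(\linear^{-,v})^2$ is exactly what the paper uses implicitly. One tiny slip: the inequality $1/(2\epsilon\Delta)\leq \epsilon/4$ follows directly from the corollary's own hypothesis $\Delta\geq 2\epsilon^{-2}$, not from the $\ln^2\Delta/\Delta$ condition you cite.
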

\begin{proof}
 First, we verify (i). Using  \cref{prop:A_bound}, when $\degree_\tau (v) \leq \frac{\extraColors}{4 (1+\epsilon)}$ we have 
 $|A_{\tau}(ec,fc)|  \leq \frac{1}{\beta}$, and when $\degree_\tau (v) > \frac{\extraColors}{4 (1+\epsilon)}$ we have

$$ |A_{\tau}(ec,fc)| \leq \frac{1}{1.5(4/3+4\eps)\Delta-2(1+2\eps)\Delta} \leq  \frac{1}{2\eps \Delta},$$
where we used  $\Delta_\tau(v)\leq \Delta$.
So, by   \cref{rec_closefrom},  we get $|\linear_{\tau, \epsilon}(ec, fc)| \leq \frac{1}{2\epsilon\maxDeg}$ for any $\tau$ of co-dimension at least 3. Now, we verify (ii). If $\Delta_\tau(v)\leq \frac{\beta}{4(1+\eps)}$ then by \cref{eq:defedgecoloringStau},
\begin{align*} S_\tau(ec,fc) &= 4(1+\eps) \sum_{g\sim_{\tau,c} v, g\neq e,f} \linear^{+,v}_{\tau}(ec,gc)\linear^{+,v}_{\tau}(gc,fc) + \linear^{-,v}_{\tau}(ec,gc)\linear^{-,v}_{\tau}(gc,fc)\\
&\leq 4(1+\eps) \sum_{g\sim_{\tau,c} v, g\neq e,f}\max_{hc'\in X_\tau(0), h\neq e,g} |A_{\tau\cup hc'}(ec,gc)|\max_{hc'\in X_\tau(0), h\neq f,g} |A_{\tau\cup hc'}(gc,fc)| \tag{by \cref{rec_closefrom}}\\
&\leq 4(1+\eps) (\Delta^c_\tau(v)-2) \frac{1}{\beta^2}. 
\end{align*}
Otherwise, if $\Delta_\tau(v)\geq \frac{\beta}{4(1+\eps)}$, with a similar use of \cref{rec_closefrom},
\begin{align*}
    S_\tau(ec,fc) = 2(1+\eps) \sum_{g\sim_{\tau,c} v, g\neq e,f} \linear^{v}_{\tau}(ec,gc)\linear^{v}_{\tau}(gc,fc) 
    \leq 2(1+\eps) (\Delta^c_\tau(v)-2) \left(\frac{1}{2\eps\Delta}\right)^2, 
\end{align*}
where the first inequality uses part (i).
Finally, (ii) follows from $\frac{4(1+\eps)}{\beta^2}\leq \frac{1+\eps}{2\eps^2\Delta^2}$.
It remains to prove (iii). For a vertex $u$ let $\alpha(u)=4(1+\eps)$ if $\Delta_\tau(u)\leq \frac{\beta}{4(1+\eps)}$ and $\alpha_u=2(1+\eps)$ otherwise. By an argument similar to (ii)
\begin{align*}
    S_\tau(ec,ec) &\leq \alpha(u)\sum_{f\sim_{\tau,c}u, f\neq e} \max_{gc'\in X_\tau(0),g\neq e,f} |A_{\tau\cup gc'}(ec,fc)|^2 +\alpha(v) \sum_{f\sim_{\tau,c}v, f\neq e} \max_{gc'\in X_\tau(0),g\neq e,f} |A_{\tau\cup gc'}(ec,fc)|^2\\ 
    &\leq (\Delta_\tau(u)+\Delta_\tau(v)-2)\max\left\{\frac{4(1+\eps)}{\beta^2},\frac{2(1+\eps)}{4\eps^2\Delta^2}\right\} 
    \\
    &\leq \frac{(\Delta_\tau(u)+\Delta_\tau(v)-2)(1+\eps)}{2\eps^2\Delta^2}
\end{align*}
This completes the proof.

\end{proof}

The following lemma is a crucial part of our proof as it will help us  bound the term $M_\tau \Pi_\tau^{-1}M_\tau $ in \cref{eq:desired_ineq} effectively.

\begin{lemma}\label{decompose_A2}
 Consider a graph $G=(V, E)$, and some weight function $w: E \rightarrow \R_{\geq0}$. Let $A$ be the weighted adjacency matrix of its line graph. Then
 \begin{align*}
     A^2 \preceq 2  \sum_{v \in V} (A^v)^2, 
 \end{align*}
 where $A^v (e, f) = A(e, f)$ if $e, f \sim v$ and $0$ otherwise. 
\end{lemma}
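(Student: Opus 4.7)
The plan is to decompose $A$ as a sum over vertices of $G$ and carefully control the cross terms using \cref{crosstermfact}. First, I would verify the identity $A = \sum_{v \in V} A^v$: for distinct edges $e, f$ that are adjacent in the line graph, simplicity of $G$ implies they share exactly one vertex of $G$, so the restriction $A^v(e,f)$ is nonzero for precisely that one $v$; the diagonal of both sides vanishes since adjacency matrices are hollow. Squaring yields
\begin{align*}
A^2 = \sum_v (A^v)^2 + \sum_{v \neq u} A^v A^u,
\end{align*}
so the task reduces to bounding the cross terms by $\sum_v (A^v)^2$.

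The key structural observation, and the crux of the argument, is that for $v \neq u$ the product $A^v A^u$ has rank at most one. Indeed, $(A^v A^u)(e,g) = \sum_f A^v(e,f) A^u(f,g)$ requires the intermediate edge $f$ to be incident to both $v$ and $u$, which forces $f = \{u,v\}$ (and the product vanishes entirely when $\{u,v\} \notin E$). Letting $b_w^f \in \mathbb{R}^E$ denote the vector whose $e$-th entry equals $A(e,f)$ when $e \sim w$ and $e \neq f$, and is zero otherwise, this computation gives the rank-one identity $A^v A^u = b_v^{\{u,v\}} \bigl(b_u^{\{u,v\}}\bigr)^{\top}$ whenever $\{u,v\} \in E$.

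Applying \cref{crosstermfact} with $\epsilon = 1$ to each adjacent pair yields
\begin{align*}
A^v A^u + A^u A^v \preceq b_v^{\{u,v\}} \bigl(b_v^{\{u,v\}}\bigr)^{\top} + b_u^{\{u,v\}} \bigl(b_u^{\{u,v\}}\bigr)^{\top},
\end{align*}
and summing over unordered edges $\{u,v\} \in E$ and reindexing as a sum over incidence pairs $(v, f)$ gives $\sum_{v \neq u} A^v A^u \preceq \sum_{v \in V} \sum_{f \sim v} b_v^f (b_v^f)^{\top}$. A final short calculation—expanding $(A^v)^2 = A^v (A^v)^{\top}$ as the sum of outer products of its columns, each of which is either $b_v^f$ (for $f \sim v$) or zero—identifies this right-hand side with $\sum_v (A^v)^2$, delivering $A^2 \preceq 2 \sum_v (A^v)^2$. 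The argument is essentially bookkeeping once the rank-one structure of $A^v A^u$ is recognized; the main thing to be careful about is the decomposition $A = \sum_v A^v$, which relies on $G$ being simple and $A$ having zero diagonal.
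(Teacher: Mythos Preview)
Your proof is correct, but takes a different route from the paper's. The paper works directly with the quadratic form: it writes $x^\top A^2 x = \sum_e \langle A_e, x\rangle^2$, observes that for $e=\{u,v\}$ the row $A_e$ splits as $(A^u)_e + (A^v)_e$, applies the scalar inequality $(a+b)^2 \leq 2a^2+2b^2$ row by row, and regroups into $2\sum_v \|A^v x\|^2$. Your argument instead expands $A^2 = (\sum_v A^v)^2$ at the matrix level and controls each cross term $A^vA^u$ by recognizing its rank-one structure (the only intermediate edge being $\{u,v\}$), then invokes \cref{crosstermfact}. Both rely on the same underlying decomposition $A=\sum_v A^v$, but the paper's version is a bit more economical—it bypasses the rank-one observation and the column identification $(A^v)_{\cdot,f}=b_v^f$ entirely—while your version makes the algebraic structure of the cross terms more explicit and stays purely at the operator level. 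Either way the bookkeeping is short; your caveat about needing $G$ simple and $A$ hollow for $A=\sum_v A^v$ is exactly right.
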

\begin{proof}
It is enough to show that for all $x \in \mathbb{R}^{E}$, $x^\top A^2 x \leq  2   \sum_{v \in V}  x^\top (A^v)^2x.$
We have
\begin{align*}
       x^\top A^2 x = \|Ax\|_2^2 = \sum_{e  \in E} (Ax (e))^2 =  \sum_{e \in E}  \langle A_e, x \rangle^2
\end{align*}
where $A_e$ is the row indexed by $e$. Now, let $e = \{u, v\} \in E$. We can write $ \langle A_e, x \rangle = \langle (A^u)_e, x \rangle + \langle (A^v)_e, x \rangle$. Therefore, by an application of \cref{crosstermfact}  
\begin{align*}
\sum_{e \in E}  \langle A_e,x \rangle^2  \preceq  2 \sum_{e =\{u, v\} \in E}  \langle (A^u)_e, x  \rangle^2  + \langle (A^v)_e, x  \rangle^2 = 2 \sum_v \sum_{e \sim v } (A^v x (e))^2  =  2    \sum_{v \in V} x^\top(A^v)^2x.
\end{align*}
\end{proof}

Now, we apply \cref{thm:fulltechnical} to  derive  sufficient conditions on the family $\{F_\tau\}_{\tau \in X, \text{codim}(\tau)\geq 2}$  to get $\lambda_2(P_\tau) \leq \frac{\rho(F_\tau + A_\tau)}{k-1}$ for all $\tau$ of co-dimension $2 \leq k \leq m$. 

\begin{proposition}\label{prop:Fconditions-edge}

 Let $(G, L)$ be a $(\frac{4 }{3} + 4\eps) \Delta$-extra-color  edge-list-coloring instance    such that  $0 \leq \eps \leq \frac{1}{10}$ and  $\Delta \geq 2 \eps^{-2}$, and let  $(X, \pi_{m-1})$ be its associated  weighted  simplicial complex.    Suppose  that  $\{F_\tau \in \mathbb{R}^{X(0) \times X(0)}\}_{\tau \in X: \codimfunc(X) \geq 2}$ is a family of diagonal matrices supported on $X_\tau (0) \times X_\tau (0)$ such that  $F_\tau=  f_{\times} (X_\tau, \{F_{ \tau \cup \sigma} \}_{ \emptyset \subsetneq \sigma \in X_\tau(\leq \codimfunc(\tau) -3)})$ if the line graph of $G_\tau$ is connected  and otherwise, 
 

 \begin{enumerate}
     \item  For all $\tau$ of co-dimension $2$:  $F_\tau$ is defined as $F_\tau (ec, ec) = \frac{1}{(\frac{4 }{3} + 4\eps)^2\Delta^2} = \frac{1}{\beta^{2}}$ for  $ec \in X_\tau (0)$ and  $0$ on all other entries.
     \item   For all  $\tau$ of co-dimension $k \geq 3$:  $F_\tau \preceq  (\frac{(k-1)^2}{3k-1} - \frac{1}{2\epsilon\maxDeg})I^{X_\tau(0)}$, and for  any $ec\in X_\tau (0)$
    \begin{align}\label{eq:F_assump_edge}
    \sum_{gc' \in X_{\tau \cup ec}(0) } p(gc' | \tau\cup ec ) \dbound_{\tau \cup gc'} (ec, ec) \leq (\edgecodim -2) \dbound_{\tau} (ec, ec) -  \left( \frac{2+\eps}{\epsilon}\right)  \dbound^2_{\tau}(ec, ec)  - \gamma_\tau(ec), 
    \end{align}
    where  $\gamma_\tau (ec) =  \frac{(1+\eps)\Delta_\tau(e)}{2\eps^2\Delta^2}+\frac{ (1+\eps)^2(2+3\eps+\eps^2)}{\eps^5\Delta^2}$. 
  \end{enumerate}

 Then for all $\edgecodim \geq 2$ and  $\tau$ of co-dimension $k$, $\lambda_2(P_\tau ) \leq \frac{ \rho(\dbound_\tau + A_\tau )}{\edgecodim-1}$, 
where $A_\tau$ is  defined in \cref{defA}.
\end{proposition}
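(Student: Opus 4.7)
The plan is to verify the hypotheses of \cref{thm:fulltechnical} for the family
$$M_\tau := \Pi_\tau\frac{F_\tau}{k-1} + \sqrt{\Pi_\tau}\frac{A_\tau}{k-1}\sqrt{\Pi_\tau},$$
following the scaffold of \cref{prop:Fcondition-trees}. The base case $k=2$ is handled by \cref{prop:basecase_main} together with the level-$2$ definitions of $F_\tau$ and $A_\tau$, and the disconnected case reduces to the product structure exactly as in \cref{prop:Fcondition-2delta} and \cref{prop:Fcondition-trees}.

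For the connected case with $k\geq 3$, the spectral bound $M_\tau \preceq \frac{k-1}{3k-1}\Pi_\tau$ follows from \cref{cor:A_bound}(i), which gives $|A_\tau(ec,fc)|\leq \frac{1}{2\eps\Delta}$, together with \cref{fact:rowsum} (each row of $A_\tau$ has at most $k-1$ nonzero entries) and the assumption $F_\tau \preceq \bigl(\frac{(k-1)^2}{3k-1}-\frac{1}{2\eps\Delta}\bigr)I^{X_\tau(0)}$.

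The main work is the recursive inequality $\E_{vc\sim\pi_\tau} M_{\tau\cup vc}\preceq M_\tau - \frac{k-1}{k-2}M_\tau\Pi_\tau^{-1}M_\tau$. By \cref{eq:Ltaudefedgecolor}, the ``$A$-part'' of the expectation equals $\sqrt{\Pi_\tau}\frac{\linear_\tau}{k-1}\sqrt{\Pi_\tau}$, which matches the $A$-part of $M_\tau$ up to the piece $\sqrt{\Pi_\tau}\frac{\offdiag(S_\tau)}{(k-1)(k-2)}\sqrt{\Pi_\tau}$ coming from $A_\tau - \linear_\tau = \offdiag(S_\tau)/(k-2)$. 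This leftover is designed to cancel the off-diagonal contribution of $A_\tau^2$ appearing in $M_\tau\Pi_\tau^{-1}M_\tau$: first I decompose $A_\tau^2\preceq 2\sum_v(A_\tau^v)^2$ via \cref{decompose_A2}, then use \cref{crosstermfact} to split each $(A_\tau^v)^2$ into its $\linear_\tau^v$ and $\offdiag(S_\tau^v)$ contributions. The piecewise definition \cref{eq:defedgecoloringStau} is tuned so that $2(\linear_\tau^v)^2$ is dominated by $S_\tau^v/(1+\eps)$; in the small-degree case this requires the extra splitting $\linear_\tau^v = \linear_\tau^{+,v}+\linear_\tau^{-,v}$ together with a second application of \cref{crosstermfact}.

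What remains is a diagonal comparison at each $ec\in X_\tau(0)$, which is exactly the scalar inequality \cref{eq:F_assump_edge}: the cross-term $F_\tau A_\tau + A_\tau F_\tau$ is handled by \cref{crosstermfact} with $\alpha=\Theta(1/\eps)$ to produce the $\tfrac{2+\eps}{\eps}F_\tau^2$ coefficient; $\diag(S_\tau)(ec,ec)$ is bounded by $\tfrac{(1+\eps)\Delta_\tau(e)}{2\eps^2\Delta^2}$ via \cref{cor:A_bound}(iii); and the residual $(\offdiag(S_\tau^v))^2/(k-2)^2$ contribution is controlled by \cref{fact:rowsum} and \cref{cor:A_bound}(ii), giving the second term of $\gamma_\tau(ec)$. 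The main obstacle is arranging these pieces so that the delicate cancellation between $\offdiag(S_\tau)/(k-2)$ and the off-diagonal part of $A_\tau^2/((k-1)(k-2))$ is \emph{exact} while still leaving enough diagonal slack for the $F_\tau$ budget; this is precisely what motivates the piecewise definition of $S_\tau^v$ in \cref{eq:defedgecoloringStau}, and it is the step where the coefficient $\frac{4}{3}+4\eps$ in the extra-color assumption gets used in its tightest form.
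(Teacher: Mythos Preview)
Your proposal is correct and follows essentially the same route as the paper: define $M_\tau = \Pi_\tau\frac{F_\tau}{k-1}+\sqrt{\Pi_\tau}\frac{A_\tau}{k-1}\sqrt{\Pi_\tau}$, handle the base and disconnected cases as in \cref{prop:basecase_main} and \cref{prop:Fcondition-2delta}, obtain $M_\tau\preceq\frac{k-1}{3k-1}\Pi_\tau$ from \cref{cor:A_bound}(i), and reduce the recursive inequality to the scalar condition \cref{eq:F_assump_edge} with $\gamma_\tau$ absorbing $\diag(S_\tau)$ (via \cref{cor:A_bound}(iii)) and the $\offdiag(S_\tau)^2/(k-2)^2$ residual (via \cref{cor:A_bound}(ii) and \cref{fact:rowsum}).

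The only deviation from the paper is an ordering detail: you apply \cref{decompose_A2} to $A_\tau$ and then split each $(A_\tau^v)^2$ into its $\linear_\tau^v$ and $\offdiag(S_\tau^v)$ parts via \cref{crosstermfact}, whereas the paper first splits $A_\tau=\linear_\tau+\tfrac{\offdiag(S_\tau)}{k-2}$ globally via \cref{crosstermfact} and then applies \cref{decompose_A2} to $\linear_\tau$. Both orderings are valid (for a simple graph the off-diagonal blocks $\offdiag(S_\tau)^v$ coincide with $\offdiag(S_\tau^v)$) and yield the same constants, so this is a cosmetic difference rather than a different argument.
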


\begin{proof}
We prove that  the conditions of   \cref{thm:fulltechnical} hold  for  $\{\boundMatrix_{\tau}\}_{\tau \in X( \leq m-3)}$ defined as follows: 
\begin{align*}
    M_\tau \coloneqq  \Pi_\tau\frac{\dbound_\tau}{\edgecodim-1}  + \sqrt{\Pi_\tau} \frac{A_\tau}{{\edgecodim-1}} \sqrt{\Pi_\tau} \quad \forall \tau\in X, k=\codimfunc(\tau)\geq 2
\end{align*}
Note that  the condition of the theorem  holds for  any $\tau$ of co-dimension $2$ by definition. 
So, we prove the statement for $\tau$ of co-dimension at least 3. 
Assume the line graph of $G_\tau$ is disconnected. Using the definition of  $A_\tau$ and our assumption about $F_\tau$, the proof of this case  is   similar  to what we argued in  \cref{prop:Fcondition-2delta}.  
Now, assume that the line graph of $G_\tau$ is connected. 
Note   that by \cref{cor:A_bound},  the absolute value of every off-diagonal entry of $A_\tau$ is at most $\frac{1}{2\eps\Delta}$ and that there are at most $(k-1)$ non-zero entries per row. Therefore, $\sqrt{\Pi_\tau}\frac{A_\tau}{\edgecodim - 1} \sqrt{\Pi_\tau}\preceq \frac{1}{2 \epsilon \maxDeg} \Pi_\tau$. Combined with  the bound on entries of diagonal matrix $F_\tau$, 
this implies that $M_\tau \preceq \frac{\edgecodim-1 }{3\edgecodim-1} \Pi_\tau$.
 Therefore, it only remains to show that   $  \mathbb{E}_{gc \sim \pi_\tau }  M_{\tau \cup gc}   \preceq   M_{\tau} - \frac{\edgecodim -1 }{\edgecodim -2}  M_\tau\Pi_\tau^{-1} M_\tau$. This is 
equivalent to showing that
\begin{align}
    \sqrt{\Pi_{\tau}^{-1}} \mathbb{E}_{gc \sim \pi_\tau } \left[ \Pi_{\tau \cup gc}  \frac{\dbound_{\tau \cup gc}}{\edgecodim - 2 } + \sqrt{\Pi_{\tau \cup gc}} \frac{A_{\tau \cup gc}}{\edgecodim-2} \sqrt{\Pi_{\tau \cup gc}}\right]  \sqrt{\Pi_{\tau}^{-1}}  
    \preceq   \frac{\dbound_{\tau}}{\edgecodim - 1} + \frac{A_{\tau}}{\edgecodim-1} 
    -  \frac{(\dbound_{\tau} + A_{\tau})^2 } {(\edgecodim - 2)(\edgecodim - 1)}. \label{eq:targetineqindedgecoloring}
\end{align}
We proceed by first proving a lowerbound on the RHS. By two applications of \cref{crosstermfact}, we can write
\begin{align}
 (\dbound_{\tau} + A_{\tau})^2 &\preceq \left(1 + \frac{2}{\epsilon}\right) \dbound^2_{\tau} + \left( 1 + \frac{\epsilon}{2}\right) A_{\tau}^2 \nonumber\\
 &=\left(1+\frac2{\eps}\right)F_\tau^2 + \left(1+\frac\eps2\right)\left(\linear_{\tau}+\frac{\offdiag(\sqr_\tau)}{k-2}\right)^2\nonumber \\
 &\preceq 
 \left(1 + \frac{2}{\epsilon}\right) \dbound^2_{\tau} + ( 1 + \epsilon) \linear_{\tau}^2 +  \frac{(3+\eps+2/\eps)\offdiag(\sqr_\tau)^2}{(\edgecodim-2)^2}. \label{eq:FtauAtauupper}
\end{align}
We proceed by finding a diagonal matrix to upperbound  $\linear_{\tau}^2$ .  For any $c \in [q]$,    $\linear_{\tau}^c$  is the weighted adjacency matrix of a line graph. Therefore, by \cref{decompose_A2},  $(\linear_{\tau}^c)^2 \preceq 2 \sum_{v \in V} (\linear_{\tau}^{c, v} )^2$. 
Since  $\linear_{\tau}^2 = \sum_{c \in [q]}  (\linear_{\tau}^c)^2$,
we get that  
$$\linear_{\tau}^2 \preceq  2 \sum_{v \in V} (\linear_{\tau}^v)^2 \preceq 4 \sum_{v \in V}  ((\linear_{\tau}^{+, v} )^2+ (\linear_{\tau}^{-, v})^2 ).$$ 
where in the second inequality we used  \cref{crosstermfact}.
Therefore,  by definition of $S_\tau$ (see \cref{eq:defedgecoloringStau}),
\begin{align*}
( 1 + \epsilon) \linear_{\tau}^2 \preceq  \sqr_\tau = (k-2)(A_\tau - \linear_\tau) + \diag(S_\tau). 
\end{align*}
So, by \eqref{eq:FtauAtauupper}, we can lowerbound the RHS of \eqref{eq:targetineqindedgecoloring} as follows
\begin{align*}
 \frac{\dbound_{\tau}}{\edgecodim - 1} + \frac{A_{\tau}}{\edgecodim-1} 
    -  \frac{(\dbound_{\tau} + A_{\tau})^2 } {(\edgecodim - 2)(\edgecodim - 1)} 
    &\succeq  \frac{\dbound_{\tau}}{\edgecodim - 1} + \frac{\linear_{\tau}}{\edgecodim-1} -   \frac{(1 + \frac{2}{\epsilon})\dbound^2_{\tau} }{(\edgecodim -1)(\edgecodim -2) }
   \\& - \frac{ \diag (S_\tau) }{(\edgecodim -1)(\edgecodim -2) }  - 
    \frac{(3+\eps+\frac2\eps)\offdiag(S_\tau)^2}{(k-1)(k-2)^3}.
\end{align*}
On the other hand, by definition of $\linear_\tau$ (see \eqref{eq:Ltaudefedgecolor}),  the LHS of \eqref{eq:targetineqindedgecoloring} is equal to
\begin{align*}
   \mathbb{E}_{gc \sim \pi_\tau } \left[ \Pi_{\tau}^{-1}\Pi_{\tau \cup gc}  \frac{\dbound_{\tau \cup gc}}{\edgecodim - 2 }\right] + \frac{\linear_{\tau}}{\edgecodim-1},
\end{align*}
and   
\begin{align*}
    \mathbb{E}_{gc \sim \pi_\tau } \left[ \Pi_{\tau}^{-1}\Pi_{\tau \cup gc}  \frac{\dbound_{\tau \cup gc}}{\edgecodim - 2 }\right] (ec, ec) =    \sum_{gc' \in X_{\tau \cup ec} (0) } p(gc' | \tau\cup ec ) \dbound_{\tau \cup gc'} (ec, ec).
\end{align*}
Comparing this with the assumption (see \cref{eq:F_assump_edge}), and letting $\gamma_\tau(ec) = 0$ for all $ec \notin X_\tau (0)$,  it is enough to show that 
\begin{align*}
\diag(\gamma_\tau)\succeq  \diag(S_\tau)+\frac{(3+\eps+\frac{2}{\eps})\offdiag(S_\tau)^2}{(k-2)^2}.
\end{align*}
First, notice,
$$ \frac{\offdiag(S_\tau)^2}{(k-2)^2} \preceq 
\frac{\norm{ \offdiag(S_\tau)}_\infty^2 I^{X_\tau(0)}}{(k-2)^2} 
\preceq \frac{(1+\eps)^2(\Delta-2)^24(\Delta-1)^2}{4\eps^4\Delta^4(k-2)^2} I^{X_{\tau}(0)} \preceq \frac{(1+\eps)^2}{\eps^4\Delta^2}I^{X_{\tau}(0)},$$
where the second inequality is by \cref{fact:rowsum}, noting that  by part (ii) of \cref{cor:A_bound}, every off-diagonal entry of $S_\tau$ is at most $\frac{(1+\eps)(\Delta-2)}{2\eps^2\Delta^2}$ and that there are at most $2(\Delta-1)$ non-zero entries per row. 
Finally, the statement follows from part (iii) of \cref{cor:A_bound} which shows  $S_\tau(ec,ec)\leq \frac{(1+\eps)\Delta_\tau(e)}{2\eps^2\Delta^2}$ for any $ec\in X_\tau(0)$.
\end{proof}

With this in hand, we prove \cref{thm:edge-coloring}.



\begin{proof}[Proof of \cref{thm:edge-coloring}]
For any $ec = \{u, v\} c \in  X_\tau(0)$ define,
\begin{align*}
    F_\tau (ec, ec) \coloneqq
    \begin{cases*}
       0 & if $\Delta_\tau(e) = 0$ \\
        f_1 (\degree_\tau(g))  & if $\Delta_\tau(e) = 1, g \sim e$,\\
       f_2 (\degree_\tau(e))  & if $\Delta_\tau(e) \geq  2$. 
    \end{cases*}
\end{align*}
where  $f_1(i) \coloneqq \frac{1}{(\frac{4}{3} + 4\eps)^2 \Delta^2} + \frac{  ( 4 \epsilon^{-5} + 0.6 \epsilon^{-2}) \sum_{k=1}^{i-1} \frac{1}{k}}{\maxDeg^2 }$ for any $i \geq 2$,  and $f_2(i) \coloneqq \frac{5 \epsilon^{-5} \ln (\maxDeg)  + (4\epsilon^{-5}+ \epsilon^{-2} i )\sum_{k=1}^{i-1} \frac{1}{k} }{\maxDeg^2 }$ for  $i \geq 2$. We prove that this satisfies the  conditions of \cref{prop:Fconditions-edge}.
Then, the statement follows from the fact that
$$ \lambda_2(P_\tau)\leq \frac{\rho(F_\tau+A_\tau)}{k-1} \leq \frac{\eps + \frac{1}{\eps}}{k-1},$$
where the last inequality follows by \eqref{eq:f2edgecoloringbound} below and  the fact that every entry of $A_\tau$ is at most $\frac{1}{2\eps\Delta}$ (by \cref{cor:A_bound}) and that every row of $A_\tau$ has at most $2(\Delta-1)$ non-zero entries.

The condition for links of co-dimension $2$ holds by definition as $f_1(1) = \frac{1}{(4/3+4\eps)^2\Delta^2}$. Assume that  $k \geq 3$ and  $\tau$ is of co-dimension $k$. 
Similar to the proof of \cref{findF-2delta}, when the line graph of $G_\tau$ is disconnected, the condition holds. Now, assume that the line graph of $G_\tau$ is connected. It follows that, for all $1\leq i\leq 2\Delta$, 
\begin{align}
f_2(i)  \leq \frac{9\eps^{-5}(\ln(\Delta)+1)}{\Delta^2}+\frac{2\eps^{-2}(\ln(\Delta)+2)}{\Delta}
  \underset{\frac{\ln^2(\Delta)}{\Delta}\leq \frac{\eps^3}{15}}{\leq} \frac{3\eps}{100}+\frac{2\eps}{15} \underset{\eps\leq 0.1, k\geq 2}\leq \frac{(k-1)^{2}}{3 k-1}-\frac{1}{2\epsilon \Delta}.\label{eq:f2edgecoloringbound}
\end{align}
A similar inequality holds for $f_1$ and $1\leq i\leq 2\Delta$.
It remains to check the condition \cref{eq:F_assump_edge} in \cref{prop:Fconditions-edge}. We need to show that  for any $ec   \in X_\tau(0)$, 
\begin{align*}
    \sum_{fc'\in X_{\tau \cup ec}(0)} p(fc'|\tau \cup ec) F_{\tau\cup fc'} (ec, ec) \preceq (k-2) F_\tau (ec, ec) - \left(1+\frac{2}{\epsilon}\right)F_\tau^2(ec, ec) - \gamma_\tau (ec),
\end{align*}
for 
\begin{align*}
\gamma_\tau (ec)=\frac{(1+\eps)\Delta_\tau(e)}{2\eps^2\Delta^2}+\frac{ (1+\eps)^2(2+3\eps+\eps^2)}{\eps^5\Delta^2}
    \underset{\eps\leq 0.1}{\leq} \frac{0.6\Delta_\tau(e)\eps^{-2}}{\Delta^2}+ \frac{3\eps^{-5}}{\Delta^2}.
\end{align*}

{\bf Case 1: $\Delta_\tau(e) = 1, g \sim_\tau e$.} 
 Since the line graph of  $G_\tau$ is connected   and $\tau$ is of co-dimension at least $3$, $\Delta_\tau(g) \geq 2$. So, it is enough to show that,
\begin{align}
    \sum_{fc'\in X_\tau(0)} p(fc'|\tau\cup ec) F_{\tau\cup fc'} (ec, ec) &= (\Delta_\tau (g)-1) f_1(\Delta_\tau (g) -1) + (k- \Delta_\tau (g) - 1) f_1(\Delta_\tau(g))\nonumber \\& \leq (k-2)f_1(\Delta_\tau(g)) - \left(1+\frac{2}{\epsilon}\right) f_1^2(\Delta_\tau(g)) -  \frac{0.6\epsilon^{-2} + 3\epsilon^{-5}}{ \maxDeg^2}.\label{eq:edge_desired1}
\end{align}
Now, note that 
\begin{align*}
&
(k-2)f_1(\Delta_\tau(g)) -(\Delta_\tau (g)-1) f_1(\Delta_\tau (g) -1) - (k- \Delta_\tau (g) - 1) f_1(\Delta_\tau(g))  \\& = 
(\Delta_\tau (g) -1)(f_1(\Delta_\tau(g))  - f_1(\Delta_\tau(u) -1 )) = \frac{0.6\epsilon^{-2} + 4 \epsilon^{-5}}{ \maxDeg^2}. 
\end{align*}
Furthermore,
\begin{align*}
 \left(1+ \frac{2}{\epsilon}\right) f_1^2 (\Delta_\tau(g))  \underset{\eps\leq 0.1}{\leq} \frac{2.1}{\eps}\left(\frac{5 \epsilon^{-5} \ln \Delta}{\Delta^2}\right)^2   \underset{\frac{\ln^2(\Delta)}{\Delta}\leq \frac{\eps^3}{15}}{\leq}  \frac{ \epsilon^{-5}}{  \maxDeg^2}.
 \end{align*}
 Putting these together, we get \cref{eq:edge_desired1}.\\
{\bf Case 2:  $\Delta_\tau(e) \geq 2$.}
For convenience in writing the recursion, let $f_2(1) = \frac{5 \epsilon^{-5} \ln (\maxDeg) }{\maxDeg^2 }$.
Following similar calculations,  it is  enough to show that 
\begin{align}\label{eq:edge_desired2}
(\Delta_\tau (e)-1)f_2(\Delta_\tau (e))  - \Delta_\tau (e)f_2(\Delta_\tau (e)-1) \geq  \left(1+\frac{2}{\epsilon}\right) f_2^2(\Delta_\tau (e)) + \frac{0.6\epsilon^{-2} \Delta_\tau (e) +  3 \epsilon^{-5}}{ \maxDeg^2}.
\end{align}
Note that in the LHS of the above equation we should write $f_1(\Delta_\tau(g))$ if $\Delta_\tau(e)-1=1$ and $g$ is the only remaining neighbour (of $e$), but since $f_1(i)\leq \frac{5 \epsilon^{-5} \ln (\maxDeg) }{\maxDeg^2 }=f_2(1)$ for all $1\leq i\leq 2\Delta$ the above inequality is valid. 
Note that, by definition  
\begin{align*}
(\Delta_\tau (e)-1)f_2(\Delta_\tau (e))  - \Delta_\tau (e)f_2(\Delta_\tau (e)-1 ) =    \frac{ \epsilon^{-2} \Delta_\tau (e) +   4\epsilon^{-5}}{ \maxDeg^2}.
\end{align*}
Furthermore,  $\frac{\ln^2(\Delta)}{\Delta } \geq \frac{\eps^3}{15}$ and $\eps\leq 0.1$ imply that $\ln \Delta \geq 10$, and we can write 
\begin{align*}
    \left(1+ \frac{2}{\epsilon}\right) f_2^2 (\Delta_\tau (e))  &\underset{\eps\leq 0.1}{\leq} 
    \frac{2.1(\ln\Delta+2)^2}{\epsilon \Delta^4}\cdot \left( \eps^{-2}\Delta_\tau(e)+9\eps^{-5}\right)^2\\
    &\underset{\cref{crosstermfact}}{\leq}
\frac{2.1(\ln\Delta+2)^2}{\epsilon \Delta^4}\cdot \left(1.2 (\eps^{-2}\Delta_\tau(e))^2 + 6 (9\eps^{-5})^2\right)  
\\&\underset{\ln\Delta\geq 10, \frac{\ln^2(\Delta)}{\Delta}\leq \frac{\eps^3}{15}}{\leq}    \frac{0.4 \epsilon^{-2} \Delta_\tau (e) +   \epsilon^{-5}}{ \maxDeg^2}.
\end{align*}
 This finishes the proof of  \cref{eq:edge_desired2}.
\end{proof}

\printbibliography
 
\appendix
\section{Proofs from \cref{sec:vertexcoloring} }
\label{sec:appendixvercolor}

\begin{proof}[Proof of \cref{thm:vertex-coloring-2delta}]\label{findF-2delta}
 For each $\tau$ of co-dimension at least $2$, let $F_\tau \in \mathbb{R}^{X(0) \times X(0)}$ be a diagonal matrix supported on $X_\tau (0) \times X_\tau(0)$ defined as  
follows: for any
$vc \in X_\tau(0)$, 
\begin{align*}
    F_\tau(vc, vc) \coloneqq
    \begin{cases*}
        0 & if $\degree_\tau(v) =0$, \\
       f_1(\Delta_\tau(u)) & if $\degree_\tau(v) =1 $ and $u \sim_\tau v$, \\
       f_{2}(\Delta_\tau (v))   & if $\degree_\tau(v) \geq 2$, 
    \end{cases*}
\end{align*}
where $ f_1 (i) = \frac{1}{(1+\eps)\Delta} + \frac{1+ 2 \sum_{j=1}^{i-1} \frac{1}{j}}{(1+ \eps)^2\Delta^2} $ for $i\geq 1$ and $f_{ 2} (i)= \frac{i}{(1+\frac{\eps}{2})\Delta - (i -1) - \frac{4}{\eps} \sum_{j=1}^{i-1} \frac{1}{j} }$ for $i\geq 2$. 
We show that  the  conditions of \cref{prop:Fcondition-2delta} hold for     $\{F_\tau \}_{\tau \in X: \codimfunc(\tau) \geq 2 }$.  Then, the statement follows  from the fact that $\rho(F_\tau) \leq \frac{5}{2\eps}$. This is true because  $\frac{\ln (\Delta) + 2}{\Delta } \leq \frac{\eps^2}{40}$ implies that for any $1 \leq i \leq \Delta$ the denominator of $f_2(i) $ is at least $\frac{2\eps}{5} \Delta$ and thus  $f_1 (i) \leq  f_2(i) \leq \frac{5}{2\eps}$.
\par
The condition for links of co-dimension $2$ holds by definition. Assume $\tau$ is of co-dimension  at least $3$. 
When $G_\tau$ is disconnected one can  check the condition holds because of the fact that the degrees of  vertices of   a connected component do not change by   removing vertices from other connected components of the graph. Now, assume that $G_\tau$ is connected. 
Note that for every $v \in V_\tau$,  we have

$$\frac{(k-1)^2}{3k-1} \geq \frac{\Delta_\tau (v)}{5} \underset{ \frac{\ln (\Delta) + 2}{\Delta } \leq \frac{\epsilon^2}{40}}{\geq}  \frac{8}{\epsilon^2} \underset{f_2(\Delta_\tau (v))\leq \frac{5 }{2\eps},\eps \leq 1}{\geq} f_2(\Delta_\tau (v)) \geq f_1(\Delta_\tau (v)).$$

Therefore, it is enough to show that   $\sum_{uc'\in X_{\tau \cup vc} (0)} p(uc'|\tau\cup vc) F_{\tau\cup uc'} (vc, vc) \preceq (k-2) F_\tau (vc) - F_\tau^2(vc)$ for any $vc \in X_\tau (0)$.
\par
{\bf Case 1:  $\Delta_\tau (v) = 1$, and $ u \sim_\tau v$.}  Since $G_\tau$ is connected and $\tau$ is of co-dimension at least $3$, $\Delta_\tau(u) \geq 2$. We have
\begin{align*}
    \sum_{wc' \in X_\tau(0)} p(wc'|\tau\cup vc) F_{\tau\cup wc'} (vc, vc) =  (\Delta_\tau (u)-1) f_1(\Delta_\tau (u) -1 ) + (k-\Delta_\tau (u)-1) f_1(\Delta_\tau (u) ).  
\end{align*}
On the other hand, $\frac{\ln (\Delta) + 2}{\Delta } \leq \frac{\eps^2}{40}$ and $\eps\leq 1$ imply that $\frac{1 + 2 \sum_{j=1}^{i-1} \frac{1}{j}}{(1+\eps) \Delta } \leq  \frac{1}{20(1+\eps)}$ for any $1 \leq i \leq \Delta$. Therefore, 
\begin{align*}
 F_\tau^2(vc, vc)& =   f_1^2(\Delta_\tau(u)) \leq  \left(\frac{1}{(1+\eps)\Delta} + \frac{1}{20(1+\eps)\Delta}\right)^2  \leq \frac{2}{(1+\eps)^2\Delta^2}. 
\end{align*}
Therefore $(k-2) F_\tau (vc) - F_\tau^2(vc) \geq (k-2)f_1 (\Delta_\tau (u))  - \frac{2}{(1+\eps)^2\Delta^2}$ and thus it is enough to show that
\begin{align*}
(\Delta_\tau (u)-1) \left(f_1 \left(\Delta_\tau (u)  \right)- f_1\left(\Delta_\tau (u) -1 \right)\right) \geq  \frac{2}{(1+\eps)^2\Delta^2}. 
\end{align*}
But this inequality holds with equality.

{\bf Case 2: $\Delta_\tau (v) \geq 2$. }
One can check that  $\frac{\ln (\Delta)+1}{\Delta} \leq \frac{\eps^2}{40}$ implies 
$\frac{1}{(1+\eps)\Delta} + \frac{2 \sum_{j=1}^{\Delta-1} \frac{1}{j}+1 }{(1+\eps)^2\Delta^2}   \leq  \frac{1}{(1+\frac{\eps}{2})\Delta}$. For convenience define $f_2(1) = \frac{1}{1+\frac{\eps}{2}}$ and notice that $f_1(i)\leq f_2(1)$ for any $1 \leq i \leq \Delta$. We want to show that 
\begin{align*}
    \sum_{uc' \in X_\tau(0)} p(uc'|\tau\cup vc) F_{\tau\cup uc'} (vc, vc) &=  \Delta_\tau (v) f_{ 2} (\Delta_\tau (v) -1 ) + (k-\Delta_\tau (v)-1) f_{ 2} (\Delta_\tau (v) ) \\&\leq (k-2) f_{ 2} (\Delta_\tau (v)) - f^2_{ 2} (\Delta_\tau (v)).  
\end{align*}
Let $i \coloneqq \Delta_\tau (v)$. We have
\begin{align*}
     (i-1) f_{ 2} (i)-i f_{ 2} (i -1 )   = \frac{i(i-1)+ \frac{4}{\eps}i }{\left((1+\frac{\eps}{2})\Delta - (i -1) - \frac{4}{\eps}  \sum_{j=1}^{i-1}\frac{1}{j}  \right)\left((1+\frac{\eps}{2})\Delta - (i -2) - \frac{4}{\eps}  \sum_{j=1}^{i-2}\frac{1}{j} \right)}. 
\end{align*}
Therefore
\begin{align*}
    &(i-1) f_{ 2} (i) - i f_{ 2} (i -1 )  - f_2^2(i) = \frac{(\frac{4}{\eps}-1)i \left( (1+\frac{\eps}{2})\Delta  - (i-1) - \frac{4}{\eps}  \sum_{j=1}^{i-1}\frac{1}{j}\right) -i^2(1 + \frac{4}{\eps} \frac{1}{i-1})  }{\left((1+\frac{\eps}{2})\Delta - (i -1) - \frac{4}{\eps} \sum_{j=1}^{i-1}\frac{1}{j}  \right)^2\left((1+\frac{\eps}{2})\Delta - (i -2) - \frac{4}{\eps}  \sum_{j=1}^{i-2}\frac{1}{j} \right)}. 
\end{align*}
The denominator is  positive for $1 \leq i \leq \Delta$ and for the numerator we have
\begin{align*}
(\frac{4}{\eps}-1)i \left( (1+\frac{\eps}{2})\Delta  - (i-1) - \frac{4}{\eps}  \sum_{j=1}^{i-1}\frac{1}{j}\right) -i^2(1 + \frac{4}{\eps} \frac{1}{i-1})  &\underset{i\leq \Delta}\geq  
(1-\frac{\eps}{2})i\Delta- \frac{16 i}{\eps^2} \left(\ln (\Delta)+1 \right)  - \frac{4i^2}{\eps(i-1)} 
\end{align*}
Canceling out an $i$, and using $\frac{\ln (\Delta) + 2}{\Delta } \leq \frac{\eps^2}{40}$ and $\eps\leq 1$ the RHS is non-negative.   
\end{proof}

\begin{proof}[Proof of \cref{thm:vertex-coloring-tree}]
 As before we make the tree rooted at an arbitrary vertex. For any $k \geq 2$, any $\tau$ of co-dimension $k$, let  $F_\tau \in \mathbb{R}^{X(0) \times X(0)}$ be a diagonal matrix supported on $X_\tau(0)\times X_\tau(0)$ defined  as follows:  for any
$vc \in X_\tau(0)$,  
\begin{align*}
    F_\tau(vc, vc) \coloneqq
    \begin{cases*}
        0 & if $\degree_\tau(v) =0$, \\
        f_1( \Delta_\tau (u))   & if $\degree_\tau(v) = 1$, $u \sim v$, and $v$ is the root of a component of $G_\tau$,\\
        f_2(\Delta_\tau (v))   & if $\degree_\tau(v) \geq 2$ and $v$ is the root of a component of $G_\tau$,\\
        f_3(\Delta_\tau (v), \Delta_\tau(a(v)))  & if $\degree_\tau(v) \geq 1$ and $a(v) \in V_\tau$, 
    \end{cases*}
\end{align*}
where $a(v)$ is the immediate ancestor of $v$ in $G$, and $f_1(i) \coloneqq \frac{5  (\sum_{j=1}^{i-1} \frac{1}{j})+1 }{\epsilon^2 \Delta^2}$ for $i \geq 1$,    $f_2(i)   \coloneqq \frac{5( \ln (\Delta) +1  +   i\sum_{j=1}^{i-1} \frac{1}{j}) }{\epsilon^2 \Delta^2}$ for $i \geq 2$, and 
$f_3(i, j) \coloneqq \frac{5 (\ln (\Delta) +1 + j  +  i \sum_{k=1}^{i-1} \frac{1}{k} }{\epsilon^2 \Delta^2}$ for $i, j\geq 1$. We prove that  the  conditions of \cref{prop:Fcondition-trees} hold for     $\{F_\tau \}_{\tau \in X: \codimfunc(\tau) \geq 2}$.
Then, the statement follows from the fact that
$$ \lambda_2(P_\tau)\leq \frac{\rho(F_\tau+A_\tau)}{k-1} \leq \frac{\rho(F_\tau) + \frac{1}{\eps}}{k-1} + \frac{\frac{1}{20} + \frac{1}{\eps}}{k-1},$$
where the second to last inequality follows by \eqref{eq:ftreebound} below and  the fact that every entry of $A_\tau$ is at most $\frac{1}{\eps\Delta}$  and that every row of $A_\tau$ has at most $\Delta$ non-zero entries.
The last inequality uses that since  $\frac{\ln^2 (\Delta)}{\Delta } \leq \frac{\epsilon^2}{100}$ and $\eps \leq 1$, for all $1\leq i \leq \Delta$ 
\begin{equation}\label{eq:ftreebound}
    f_1(i), f_2(i), f_3(i) \leq \frac{5 (\ln (\Delta)+2)}{\epsilon^2 \Delta} \leq \frac{1}{20}.
\end{equation}

The condition for links of co-dimension $2$ holds by definition. 
Assume that $\tau$ is of co-dimension   $k\geq 3$. 
Similar to the proof of \cref{thm:vertex-coloring-2delta}, when $G_\tau$ is disconnected, the condition holds. Now, assume $G_\tau$ is connected.  By \cref{eq:ftreebound}, for all $vc \in X_\tau(0)$  $F_\tau (vc, vc) \leq \frac{1}{20} \leq \frac{(k-1)^{2}}{3 k-1}-\frac{1}{\beta}$. Therefore, it is enough to show that  for any $vc \in X_\tau(0)$, 
\begin{align*}
    \sum_{wc'\in X_{\tau \cup vc}(0)} p(wc'|\tau \cup vc ) F_{\tau\cup wc'} (vc, vc) \preceq (k-2) F_\tau (vc, vc) - 2F_\tau^2(vc, vc) - \gamma_\tau ( vc),
\end{align*}
for $\gamma_\tau(vc)$ defined in \cref{prop:Fcondition-trees}. 

{\bf Case 1: $\degree_\tau(v) = 1$ and $v$ is the root of a component of $G_\tau$. } Let $u$ be the only neighbor of $v$. Since $G_\tau$ is connected   and $\tau$ is of co-dimension at least $3$, $\Delta_\tau(u) \geq 2$. We need to show that 
\begin{align}
    \sum_{wc'\in X_{\tau \cup  vc} (0)} p(wc'|\tau \cup vc) F_{\tau\cup wc'} (vc, vc) &= (\Delta_\tau (u)-1) f_1(\Delta_\tau (u) -1) + (k- \Delta_\tau (u) - 1) f_1(\Delta_\tau(u))\\& \leq (k-2)f_1(\Delta_\tau(u)) - 2 f_1^2(\Delta_\tau(u)) - \frac{4}{\epsilon^2 \Delta^2}\label{eq:findF_tree_desired1}. 
\end{align}
But,
\begin{align*}
&(k-2)f_1(\Delta_\tau(u))  - (\Delta_\tau (u)-1) f_1(\Delta_\tau (u) -1) -  (k- \Delta_\tau (u) - 1)f_1(\Delta_\tau(u)) \\&= 
(\Delta_\tau (u) -1)(f_1(\Delta_\tau(u))  - f_1(\Delta_\tau(u) -1 ))=  \frac{5}{\epsilon^2\Delta^2}. 
\end{align*}
Furthermore, one can see that $\frac{\epsilon^2}{100}$ implies that  $ 2 f^2_1(\Delta_\tau(u)) \leq \frac{50(\ln(\Delta)+2)^2}{\eps^4\Delta^4}
 \leq  \frac{1}{\epsilon^2\Delta^2}$.
This completes the proof of \cref{eq:findF_tree_desired1}.

{\bf Case 2: $\degree_\tau(v) \geq  2$, and $v$ is the root of a component of $G_\tau$.}
Note that $f_1$ is bounded above by $\frac{5 (\ln (\Delta)+1)}{\epsilon^2 \Delta^2}$. 
For convenience in writing the recursion, let $f_2(1) = \frac{5( \ln (\Delta)+1)}{\epsilon^2 \Delta^2}$.
Following similar calculations, it is  enough to show that 
\begin{align}\label{eq:findF_desired2}
(\Delta_\tau (v) -1)f_2(\Delta_\tau(v))  - (\Delta_\tau (v))f_2(\Delta_\tau(v) -1 ) \geq 2 f_2^2(\Delta_\tau(v) + \frac{4 \Delta_\tau (v) }{ \epsilon^2 \Delta^2}.
\end{align}
But one can see that, by definition 
\begin{align*}
(\Delta_\tau (v) -1)f_2(\Delta_\tau(v))  - (\Delta_\tau (v))f_2(\Delta_\tau(v) -1 )  = \frac{5 \Delta_\tau(v)}{\epsilon^2\Delta^2}, 
\end{align*}
And, 

\begin{align}\label{eq:Fsqr_tree}
    2 f^2_2(\Delta_\tau(v)) \leq \frac{50 (\Delta_\tau(v)+2)^2\ln^2(\Delta)}{\epsilon^4\Delta^4}  \underset{\Delta \geq 100}{\leq} \frac{55 \Delta^2_\tau(v)\ln^2(\Delta)}{\epsilon^4\Delta^4} 
\underset{\frac{\ln^2( \Delta)}{\Delta} \leq \frac{\epsilon^2}{100}}{\leq}  \frac{0.55 \Delta^2_\tau(v)}{\epsilon^2\Delta^3} \leq  \frac{ \Delta_\tau(v)}{ \epsilon^2\Delta^2}.
\end{align} 

This finishes the proof of \cref{eq:findF_desired2}.

{\bf Case 3: $\Delta_\tau (v) \geq 1$ and $a(v) \in V_\tau$.}
For convenience in writing the recursion, for $1 \leq i \leq \Delta$, let $f_3(i, 0) = \frac{5( \ln (\Delta)+1 +  i \sum_{k=1}^{i-1} \frac{1}{k})}{\epsilon^2 \Delta^2}$ and note that $\max_{1 \leq j \leq \Delta} f_1 (j) \leq f_2 (i) \leq f_3(i, 0)$.  
Similar to the previous cases, after simplifying the recursion, one can see that it is enough that  for $i = \Delta_\tau (v) $ and $j = \Delta_\tau (a(v))$ 
\begin{align*}
   (i - 1)f_3(i , j) -  i f_3(i -1, j) +
    (j-1) (f_3(i, j) -f_3(i, j-1)) \geq    2 f_3^2 (i,j )  + \frac{4 i }{ \epsilon^2 \Delta^2} +  \frac{4 (j -1)}{ \epsilon^2 \Delta^2} . 
\end{align*}
Now, note that 
\begin{align*}
  (i - 1)f_3(i , j) -  i f_3(i -1, j) +
    (j-1) (f_3(i, j) -f_3(i, j-1)) =   \frac{5 i }{ \epsilon^2 \Delta^2} + \frac{5 (j -1)}{ \epsilon^2 \Delta^2}. 
\end{align*}
Furthermore,  
$$ 2 f^2_3(i)  = 2 (f^2_2(i) +   \frac{5(j-1)}{\epsilon^2 \Delta^2})^2  \leq 2.5 f_2^2(i) + \frac{250(j-1)^2}{\epsilon^4 \Delta^4}  \underset{\Delta \geq 100 \eps^{-2}}{\leq }   \frac{i}{\epsilon^2\Delta^2} +    \frac{j-1}{\epsilon^2 \Delta^2}, $$
where the last inequality  uses the calculations in \cref{eq:Fsqr_tree}. 
This finishes the proof.
\end{proof}

\end{document}